\newcommand{\abs}[1]{\left\lvert#1\right\rvert} 
\newcommand{\norm}[1]{\left\lVert#1\right\rVert} 
\theoremstyle{plain}
\newtheorem{theorem}{Theorem}
\newtheorem{corollary}[theorem]{Corollary}
\newtheorem{lemma}[theorem]{Lemma}
\newtheorem{definition}[theorem]{Definition}  
\newtheorem{remark}[theorem]{Remark}  
\newtheorem*{remark*}{Remark}   
\renewcommand\qedsymbol{$\blacksquare$}
\newenvironment{proof-of}[1][{\hspace{-\blank}}]{{\medskip\noindent\textit{Proof~{#1}.\ }}}{\hfill\qedsymbol}
\renewcommand{\Tr}{{\operatorname{Tr}\,}}
\renewcommand{\Pr}{{\operatorname{Pr}}}
\renewcommand{\dim}{{\operatorname{dim}}}
\newcommand{\id}{{\operatorname{id}}}
\DeclarePairedDelimiter\floor{\lfloor}{\rfloor}
\newcommand{\1}{\openone}
\newcommand{\proj}[1]{|#1\rangle\!\langle #1|}
\newcommand{\cT}{{\mathcal{T}}}
\newcommand{\und}[1]{\underline{#1}}
\newcommand{\nc}{\newcommand}
\nc{\rnc}{\renewcommand}
\nc{\avg}[1]{\langle#1\rangle}
\nc{\Rank}{\operatorname{Rank}}
\nc{\smfrac}[2]{\mbox{$\frac{#1}{#2}$}}
\renewcommand{\tr}{\operatorname{Tr}}
\nc{\ox}{\otimes}
\nc{\dg}{\dagger}
\nc{\dn}{\downarrow}
\nc{\cA}{{\cal A}}
\nc{\cB}{{\cal B}}
\nc{\cC}{{\cal C}}
\nc{\cF}{{\cal F}}
\nc{\cG}{{\cal G}}
\nc{\cH}{{\cal H}}
\nc{\cI}{{\cal I}}
\nc{\cJ}{{\cal J}}
\nc{\cK}{{\cal K}}
\nc{\cL}{{\cal L}}
\nc{\cM}{{\cal M}}
\nc{\cN}{{\cal N}}
\nc{\cO}{{\cal O}}
\nc{\cP}{{\cal P}}
\nc{\cQ}{{\cal Q}}
\nc{\cR}{{\cal R}}
\nc{\cS}{{\cal S}}
\nc{\cX}{{\cal X}}
\nc{\cY}{{\cal Y}}
\nc{\cZ}{{\cal Z}}
\nc{\csupp}{{\operatorname{csupp}}}
\nc{\qsupp}{{\operatorname{qsupp}}}
\nc{\rar}{\rightarrow}
\nc{\lrar}{\longrightarrow}
\nc{\polylog}{{\operatorname{polylog}}}
\nc{\wt}{{\operatorname{wt}}}
\nc{\RR}{{{\mathbb R}}}
\nc{\CC}{{{\mathbb C}}}
\nc{\FF}{{{\mathbb F}}}
\nc{\NN}{{{\mathbb N}}}
\nc{\ZZ}{{{\mathbb Z}}}
\nc{\PP}{{{\mathbb P}}}
\nc{\QQ}{{{\mathbb Q}}}
\nc{\UU}{{{\mathbb U}}}
\nc{\EE}{{{\mathbb E}}}
\nc{\Hom}[2]{\mbox{Hom}(\CC^{#1},\CC^{#2})}
\nc{\rU}{\mbox{U}}
\nc{\ob}[1]{#1}
\nc{\SEP}{{\text{SEP}}}
\nc{\NS}{{\text{NS}}}
\nc{\LOCC}{{\text{LOCC}}}
\nc{\PPT}{{\text{PPT}}}
\nc{\EXT}{{\text{EXT}}}
\nc{\Sym}{{\operatorname{Sym}}}
\nc{\ERLO}{{E_{\text{r,LO}}}}
\nc{\ERLOCC}{{E_{\text{r,LOCC}}}}
\nc{\ERPPT}{{E_{\text{r,PPT}}}}
\nc{\ERLOCCinfty}{{E^{\infty}_{\text{r,LOCC}}}}
\nc{\Aram}{{\operatorname{\sf A}}}
\newcommand{\aw}[1]{{#1}}
\newcommand{\zk}[1]{{#1}}
\begin{document}

\title{Resource theory of heat and work with non-commuting charges}

\author{Zahra Baghali Khanian\,\orcidlink{0000-0002-0892-7519}}
\affiliation{Grup d'Informaci\'{o} Qu\`{a}ntica, Departament de F\'{\i}sica, Universitat Aut\`{o}noma de Barcelona, 08193 Bellaterra (Barcelona), Spain}
\affiliation{ICFO--Institut de Ci\`encies Fot\`oniques, Barcelona Institute of Science and Technology, 08860 Castelldefels (Barcelona), Spain}

\author{Manabendra Nath Bera\,\orcidlink{0000-0002-8329-2656}}
\affiliation{Department of Physical Sciences, Indian Institute of Science Education and Research (IISER), Mohali, Punjab 140306, India}
\affiliation{ICFO--Institut de Ci\`encies Fot\`oniques, Barcelona Institute of Science and Technology, 08860 Castelldefels (Barcelona), Spain}

\author{\\Arnau Riera\,\orcidlink{0000-0002-3271-7802}}
\affiliation{Institut el Sui, Carrer Sant Ramon de Penyafort, s/n, 08440 Cardedeu (Barcelona), Spain}
\affiliation{ICFO--Institut de Ci\`encies Fot\`oniques, Barcelona Institute of Science and Technology, 08860 Castelldefels (Barcelona), Spain}

\author{Maciej Lewenstein\,\orcidlink{0000-0002-0210-7800}}
\affiliation{ICFO--Institut de Ci\`encies Fot\`oniques, Barcelona Institute of Science and Technology, 08860 Castelldefels (Barcelona), Spain}
\affiliation{ICREA--Instituci\'o Catalana de Recerca i Estudis Avan\c{c}ats, 08010 Barcelona, Spain}

\author{Andreas Winter\,\orcidlink{0000-0001-6344-4870}}
\affiliation{Grup d'Informaci\'{o} Qu\`{a}ntica, Departament de F\'{\i}sica, Universitat Aut\`{o}noma de Barcelona, 08193 Bellaterra (Barcelona), Spain}
\affiliation{ICREA--Instituci\'o Catalana de Recerca i Estudis Avan\c{c}ats, 08010 Barcelona, Spain}

\begin{abstract}
We consider a theory of quantum thermodynamics with multiple conserved quantities (or charges). To this end, we generalize the seminal results of Sparaciari \emph{et al.} [\emph{Phys. Rev. A} 96:052112, 2017] to the case of multiple, in general non-commuting charges, for which we formulate a resource theory of thermodynamics of asymptotically many non-interacting systems.
To every state we associate the vector of its expected charge values and its entropy, forming the \emph{phase diagram} of the system. Our fundamental result is the Asymptotic Equivalence Theorem (AET), which allows us to identify the equivalence classes of states under asymptotic approximately charge-conserving unitaries with the points of the phase diagram.

Using the phase diagram of a system and its bath, we analyze the first and the second laws of thermodynamics. In particular, we show that to attain the second law, an asymptotically large bath is necessary. In the case that the bath is composed of several identical copies of the same elementary bath, we quantify exactly how large the bath has to be to permit a specified work transformation of a given system, in terms of the number of copies of the ``elementary bath'' systems per work system (bath rate). If the bath is relatively small, we show that the analysis requires an extended phase diagram exhibiting negative entropies. This corresponds to the purely quantum effect that at the end of the process, system and bath are entangled, thus permitting classically impossible transformations (unless the bath is enlarged). For a large bath, or many copies of the same elementary bath, system and bath may be left uncorrelated and we show that the optimal bath rate, as a function of how tightly the second law is attained, can be expressed in terms of the heat capacity of the bath.

Our approach, solves a problem from earlier investigations about how to store the different charges under optimal work extraction protocols in physically separate batteries.
\end{abstract}


\date{18 November 2022}

\maketitle

\section{Introduction}
\label{sec:intro}
Thermodynamics is one of the most successful physical theories and a pillar of modern science and technology. 
It was initially developed empirically to describe heat engines, such as the steam engine and 
internal combustion engines that powered the industrial revolution of the 18th and 19th century. 
Later on, it has been founded on statistical mechanics with the assumption that the systems 
are composed of a large number of classical particles. 
The thermal baths, which the system interacts with, are even larger in size so that the temperature of 
the bath effectively does not alter after the interaction. The laws of thermodynamics find their 
applications in almost all branches of the exact sciences. The emergence of quantum mechanics 
in the last century, and the subsequent achievements in controlling and tuning of an individual or a 
finite number of quantum systems, led to the exploration of thermodynamics in the quantum regime. 
There, the system is made up of a single or moderate number of quantum particles 
interacting with a thermal bath. This regime is often termed the \emph{finite-size} regime. 
The system may possess nontrivial quantum correlations, such as entanglement among the particles, 
and the bath can be finite or comparable in size with the system. In the quantum domain, another 
layer of difficulties arises when one considers more than one conserved quantities (charges) 
that do not commute with each other, as the simultaneous conservation of all the charges 
cannot be guaranteed. 

Recent studies of quantum thermodynamics \cite{Gemmer2009,book2} focus on systems of finite 
size and the cases where measurements are allowed only once. 
In addition to thermodynamic averages, there one is interested in 
values and bounds on fluctuations of thermodynamic quantities. One way to handle these 
problems is by the use of various {\it fluctuations theorems} \cite{Jar97,ro99,cht11}. 
Another way to deal with these regimes is exactly via the resource theory of thermodynamics 
that allows for rigorous treatment of second laws, optimal work extraction problem, 
etc (cf. \cite{bho13,h&p13,Brandao2015}, see also \cite{ssp14,abe13,brl17,Bera2017}). 
The resource theory of quantum thermodynamics was recently extended to deal with quantum 
and nano-scale engines made up of a finite or a small number of quantum particles, and 
two baths at different temperatures \cite{blb19}. 

Resource theory is a rigorous mathematical framework initially developed to characterize the 
role of entanglement in quantum information processing tasks. Later the framework was extended 
to characterize coherence, non-locality, asymmetry and many more, including quantum Shannon theory itself, see \cite{bcp14,WinterRT,c&g16,m&s16,
V&S17,msz16,sap16,srb17,g&w19,cpv18,
sha19,Vic14,d&a18,thermo_multiple_charge1,
thermo_multiple_charge2,thermo_multiple_charge3,
DevetakHarrowWinter:Shannon,l&w19}. 
The resource theory approach applies also to classical theories. In general, the resource 
theories have the following common features: (1) a well-defined set of resource-free states, 
and any states that do not belong to this set has a non-vanishing amount of resource; 
(2) a well-defined set of resource-free operations (allowed operations), 
that cannot create or increases resource in a state. These allow one to quantify the 
resources present in the states or operations and characterize their roles in the transformations 
between the states or the operations. In particular, it enables the definition and rigorous 
calculation or bounding of resource measures; to determine which states can be 
transformed to others using allowed operation; how the resource content of states may be 
changed, and how these changes are bounded under the allowed operations, etc.

In the present paper, we formulate a resource theory of quantum thermodynamics with 
multiple conserved quantities, where the system and bath a priori are arbitrary in size. 
We adhere to the asymptotic regime where a system of many non-interacting particles with 
multiple conserved quantities or charges interacts with a bath. 
\zk{It is discussed in \cite{Lostaglio2017} that in the resource theory of thermodynamics 
with multiple non-commuting conserved quantities, complete passivity and maximum entropy 
principle lead to incompatible sets of resource free states.  
Here we choose the maximum entropy principle, that is the resource-free states 
are the generalized Gibbs states (GGS),
and allowed operations are the (average) entropy and (average) charge preserving operations.} 
The thermodynamic resource is quantified by the Helmholtz free entropy. Clearly, the 
entropy and charge preserving operations cannot create thermodynamic resource in the resource-free GGSs. 
For any quantum state, we associate a vector with entries of the average charge 
values and entropy of that state. We call the set of all these vectors the phase diagram 
of a system. The concept of phase diagram in the present sense was originally pioneered 
in \cite{Sparaciari2016} for a system with energy as the only conserved quantity of the 
system where it has been shown that the phase diagram is a convex set. 
\aw{This terminology is motivated by traditional thermodynamics, where the phase 
diagram is a multi-dimensional map of the equilibrium states of a system 
according to the temperature and other relevant intensive or extensive 
quantities (such as pressure, volume, particle concentrations, etc). The difference 
here is that we also allow non-equilibrium states, effectively decoupling the 
entropy from the dynamical parameters. 
The seminal results of \cite{Sparaciari2016} were generalized to multiple 
\emph{pairwise commuting} conserved quantities by the present authors \cite{Bera2017}, 
and the further generalization to the case of multiple, in general non-commuting 
charges is the subject of the present paper.} For an individual system with multiple 
charges the phase diagram is not necessarily convex. Interestingly, however, for a composition 
of two or more systems, the phase diagram becomes convex. Moreover, for a composition of large 
enough systems, for any point in the phase diagram, there is a state with tensor 
product structure that realises it. 
This implies that from the macroscopic point of view it is enough 
to consider states of a composite system with tensor product structure. This is an 
important feature when we study a traditional thermodynamics set-up considering only 
tensor product states, and it does not affect the generality of the laws of thermodynamics 
which only depend on the macroscopic properties of a state rather than the state itself.
We find that given the entropy and charge preserving operations as the allowed operations, 
the (generalized) 
phase diagram fully characterizes the thermodynamic transformations of the states and 
the role of thermodynamic resources in such processes. We further extend our study to 
situations where the system and bath become correlated after initially being independent. 
In such a case we use the conditional entropy instead of the entropy, to express the 
phase diagram and derive the laws of quantum thermodynamics when the final state exhibits 
\aw{possible} system-bath correlations.

\medskip
The rest of the paper is organized as follows. 
In Section \ref{sec:resource-theory}, we specify our resource theory, 
considering a quantum system $Q$ with a finite-dimensional Hilbert space, together with a
Hamiltonian $H=A_1$ and other quantities (``charges'') $A_2, \ldots, A_c$.
We introduce here the concept of \emph{phase diagram} and prove the fundamental 
\emph{Asymptotic Equivalence Theorem} \ref{Asymptotic equivalence theorem} (AET), 
which shows that the points in the phase diagram label asymptotic equivalence classes
of sequences of states. 
This allows us to study asymptotic thermodynamics of systems with multiple conserved 
quantities in Section \ref{sec:asymptotic}. 
We start by describing the system model, comprising a work system, baths and batteries, 
which permits us to formulate and prove the first law in 
Subsection \ref{subsec:model}; 
the second law is discussed in Subsection \ref{subsec:secondlaw};
in Subsection \ref{subsec:finitebath} we characterize precisely which 
work transformations are possible on a system with a given bath, 
in terms of the extended phase diagram, which features negative 
entropies corresponding to the purely quantum effect of entanglement 
between system and bath; 
in Subsection \ref{subsec:bath-rate} we introduce the thermal bath rate, 
and discuss the tradeoff between the bath rate and work extraction.
We conclude in Section \ref{sec:discussion}
with a discussion of our theory and an outlook. 
The paper also includes \aw{three} appendices: 
Appendix \ref{section: Miscellaneous definitions and facts} introduces 
technical notation and some auxiliary results; 
Appendix \ref{section: Approximate microcanonical (a.m.c.) subspace} gives an 
explicit construction of so-called approximate 
microcanonical subspaces (a.m.c.) for non-commuting observables \cite{Halpern2016};
Appendix \ref{proof-AET} provides the full proof of the 
AET Theorem \ref{Asymptotic equivalence theorem}.

\section{Resource theory of charges and entropy}
\label{sec:resource-theory}
A system in our resource theory is a quantum system $Q$ with a finite-dimensional Hilbert space
(denoted $Q$, too, without danger of confusion), together with a
Hamiltonian $H=A_1$ and other quantities (``charges'') $A_2, \ldots, A_c$, all of which are 
Hermitian operators that do not necessarily commute with each other. We consider composition of 
$n$ non-interacting systems, where the Hilbert space of the \emph{composite} system $Q^n$ is 
the tensor product $Q^{\otimes n} = Q_1 \otimes \cdots \otimes Q_n$ of the Hilbert spaces of 
the \emph{individual} systems, and the $j$-th charge of the composite system is the sum of 
charges of individual systems as follows,
\begin{equation}
  A^{(n)}_j = \sum_{i=1}^{n} \1^{\otimes (i-1)} \otimes A_j \otimes \1^{\otimes (n-i)}, 
                \quad j=1,2,\ldots,c.
\end{equation}
For ease of notation, we will write throughout
$A_j^{(Q_i)} = \1^{\otimes (i-1)} \otimes A_j \otimes \1^{\otimes (n-i)}$. We note that throughout the paper we label various subsystems or individual  systems with subscripts whereas here subscript $j$ denotes different charges of each subsystem. To avoid this confusion, note that various charges are always labeled by $j$. 

We wish to build a resource theory where the objects are states on a quantum system, 
which are transformed under thermodynamically meaningful operations.
To any quantum state $\rho$ is assigned the point 
$(\und{a},s) = (a_1,\ldots,a_c,s) 
= \bigl( \Tr \rho A_1, \ldots, \Tr \rho A_c, S(\rho) \bigr) \in \mathbb{R}^{c+1}$,
which is an element in the \emph{phase diagram} that has been originally introduced,
for $c=1$, as energy-entropy diagram in \cite{Sparaciari2016}; there it is shown,
for a system where energy is the only conserved quantity, that the diagram is a convex set.
In the case of commuting multiple conserved quantities, the charge-entropy diagram has been 
generalised and further investigated in \cite{Bera2017}. 
Note that the set of all these vectors, denoted $\mathcal{P}^{(1)}$, is not in 
general convex (unless the quantities commute pairwise). 
An example is a qubit system with charges $\sigma_x$, $\sigma_y$ and $\sigma_z$ where 
charge values uniquely determine the state as a linear function of the $\tr \rho\sigma_i$, 
hence the entropy, while the von Neumann entropy itself is well-known to be strictly concave.

Moreover, the set of these points for a composite system with charges $A_1^{(n)}, \ldots, A_c^{(n)}$, 
which we denote $\mathcal{P}^{(n)}$ contains, but is not necessarily equal to $n\mathcal{P}^{(1)}$
(which however is true for commuting charges). Namely, consider the point 
$g=\left(\frac{1}{2}\Tr (\rho_1+\rho_2) A_1, \ldots, \frac{1}{2}\Tr (\rho_1+\rho_2) A_c,
\frac{1}{2}S(\rho_1)+\frac{1}{2}S(\rho_2)\right)$, which does not necessarily 
belong to $\mathcal{P}^{(1)}$ but belongs to its convex hull; 
however, $2g \in \mathcal{P}^{(2)}$ due to the state $\rho_1 \otimes \rho_2$.
Therefore, we consider the convex hull of the set $\mathcal{P}^{(1)}$ and call it the 
\emph{phase diagram} of the system, denoted
\begin{equation}
  \overline{\mathcal{P}} 
     \equiv \overline{\mathcal{P}}^{(1)} 
     := \left\{ \left(\sum_i p_i \Tr \rho_i A_1, \ldots, \sum_i p_i\Tr \rho_i A_c, \sum_i p_i S(\rho_i)\right) : 
                0 \leq p_i \leq 1,\,\sum_i p_i = 1 \right\}.
\end{equation} 
The interpretation is that the objects of our resource theory are ensembles of 
states $\{p_i,\rho_i\}$, rather than single states. 

We define the \emph{zero-entropy diagram} and \emph{max-entropy diagram}, 
respectively, as the sets
\begin{align*}
  \overline{\mathcal{P}}_0^{(1)} 
     &= \{(\und{a},0): \Tr \rho A_j = a_j \text{ for a state } \rho \}, \\
  \overline{\mathcal{P}}_{\max}^{(1)} 
     &= \left\{\bigl(\und{a},S(\tau(\und{a}))\bigr): \Tr \rho A_j = a_j \text{ for a state } \rho \right\}, 
\end{align*}
where $\tau(\und{a})$ is the unique state maximising the entropy among all states 
with charge values $\Tr \rho A_j = a_j$ for all $j$, which is called generalized 
thermal state, or generalized Gibbs state, or also generalized grand canonical state \cite{Liu2007}. 
Note that, as a linear image of the compact convex set of states, the zero-entropy diagram is 
compact and convex.
We similarly define the set $\mathcal{P}^{(n)}$, the phase diagram $\overline{\mathcal{P}}^{(n)}$, 
zero-entropy diagram $\overline{\mathcal{P}}_0^{(n)}$ and max-entropy diagram 
$\overline{\mathcal{P}}_{\max}^{(n)}$ 
for the composition of $n$ systems with charges $A_1^{(n)}, \ldots, A_c^{(n)}$. 
Fig.~\ref{fig:phase-diagrams} illustrates these concepts. 

\begin{figure}[ht]
  \includegraphics[width=15cm,height=6cm]{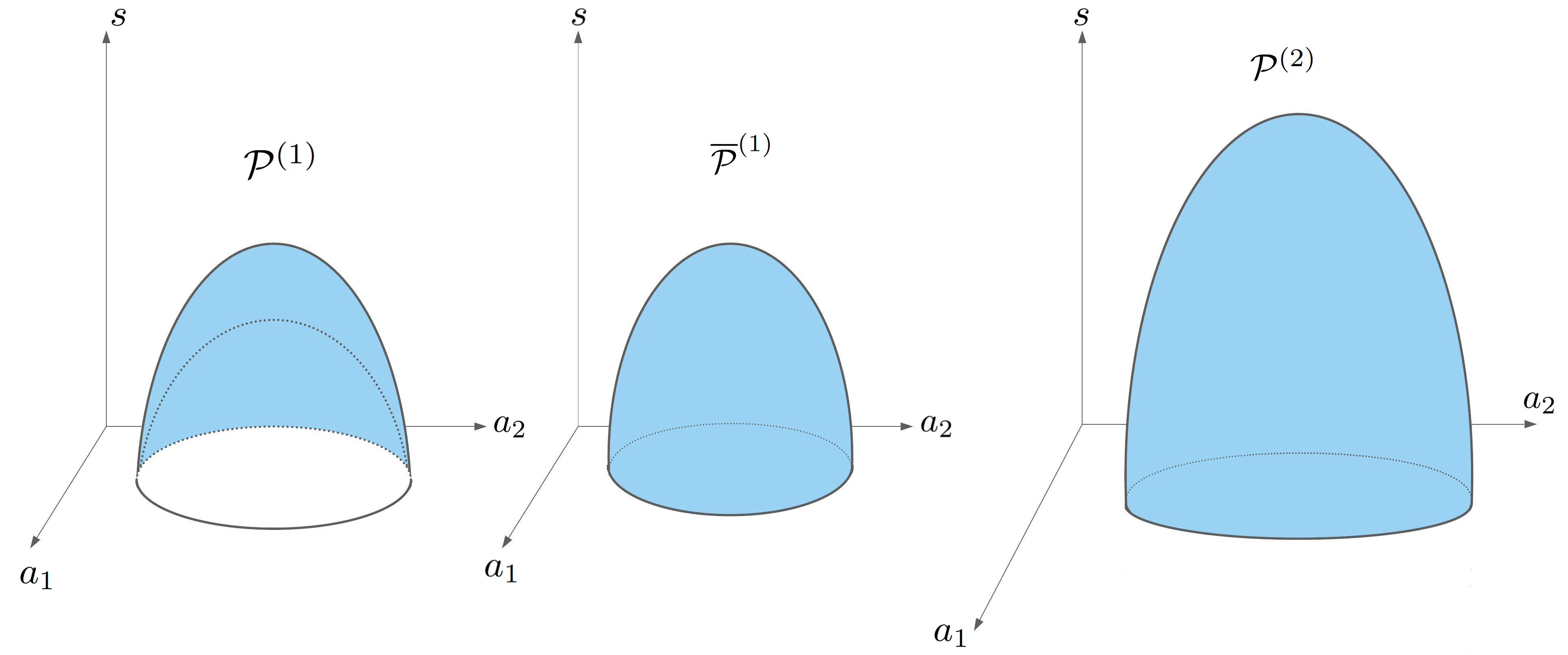}
  \caption{Schematic of the phase diagrams $\mathcal{P}^{(1)}$, $\mathcal{P}^{(2)}$
           and $\overline{\mathcal{P}}$. As seen, $\mathcal{P}^{(1)}$ is not convex, 
           having a hollow on the underside.}
  \label{fig:phase-diagrams}
\end{figure}

\begin{lemma}
\label{lemma:phase diagram properties}
For an individual system $Q$ and composite system $Q^{\otimes n}$  with charges $A_j$ and $A^{(n)}_j$, respectively, 
the following holds: 
\begin{enumerate}
  \item $\overline{\mathcal{P}}^{(n)}$, for $n \geq 1$, is a compact and convex 
        subset of $\mathbb{R}^{c+1}$.

  \item $\overline{\mathcal{P}}^{(n)}$, for $n \geq 1$, is the convex hull of the union 
        $\overline{\mathcal{P}}_{0}^{(n)} \cup \overline{\mathcal{P}}_{\max}^{(n)}$, 
        of the zero-entropy diagram and the max-entropy diagram.
     
    
  \item $\overline{\mathcal{P}}^{(n)} = n \overline{\mathcal{P}}^{(1)}$ for all $n \geq 1$. 
    
  \item $\mathcal{P}^{(n)}$ is convex for all $n \geq 2$, and indeed 
        $\mathcal{P}^{(n)} = \overline{\mathcal{P}}^{(n)} = n \overline{\mathcal{P}}^{(1)}$. 
    
    
    
  \item Every point of $\mathcal{P}^{(n)}$ is realised by a suitable tensor product state 
        $\rho_1 \otimes \cdots \otimes \rho_n$, for all $n \geq |Q|$ where $|Q|$ is the dimension of system $Q$.

  \item All points $\bigl(\und{a},S(\tau(\und{a}))\bigr) \in \overline{\mathcal{P}}_{\max}$
        are extreme points of $\overline{\mathcal{P}}$.
\end{enumerate}
\end{lemma}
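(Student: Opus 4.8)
Everything hinges on one lemma about generalised Gibbs states: for any system $R$ the function $\und a\mapsto S(\tau_R(\und a))$ is concave on the compact convex set $\mathcal C_R$ of charge vectors attainable on $R$, and strictly so in the sense that $\lambda S(\tau(\und b_1))+(1-\lambda)S(\tau(\und b_2))=S(\tau(\lambda\und b_1+(1-\lambda)\und b_2))$ forces $\tau(\und b_1)=\tau(\und b_2)$, hence $\und b_1=\und b_2$; both follow by mixing the two entropy maximisers and invoking (strict) concavity of the von Neumann entropy. Granting this, $(1)$--$(3)$ and $(6)$ are bookkeeping. For $(1)$: $\mathcal P^{(n)}$ is the continuous image of the compact set of states on $Q^{\otimes n}$ under $\rho\mapsto(\Tr\rho A_1^{(n)},\dots,S(\rho))$, hence compact, and $\overline{\mathcal P}^{(n)}$ is its convex hull, compact by Carath\'eodory. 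For $(2)$: $\overline{\mathcal P}^{(n)}_0\subseteq\overline{\mathcal P}^{(n)}$ (a state attaining a given charge vector is a convex mixture of its pure eigenstates, which lie in $\mathcal P^{(n)}$) and $\overline{\mathcal P}^{(n)}_{\max}\subseteq\mathcal P^{(n)}$ (Gibbs states), while any point of $\overline{\mathcal P}^{(n)}$ is a convex combination of points $(\Tr\rho_iA^{(n)},S(\rho_i))$, each lying on the segment from $(\Tr\rho_iA^{(n)},0)$ to $(\Tr\rho_iA^{(n)},S(\tau(\Tr\rho_iA^{(n)})))$; with the concavity lemma this gives the reformulation $\overline{\mathcal P}^{(n)}=\{(\und a,s):\und a\in\mathcal C_{Q^{\otimes n}},\ 0\le s\le S(\tau_{Q^{\otimes n}}(\und a))\}$, i.e. the region below the maximum-entropy graph, which I use repeatedly. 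For $(3)$: the ensemble $\{\prod_kp_{i_k},\bigotimes_k\rho_{i_k}\}$ sends a point $(\und a,s)$ of $\overline{\mathcal P}^{(1)}$ to $(n\und a,ns)$; conversely for a state $\sigma$ on $Q^{\otimes n}$ with marginals $\sigma_k$ the ensemble $\{\tfrac1n,\sigma_k\}$ gives $(\tfrac1n\Tr\sigma A^{(n)},\tfrac1n\sum_kS(\sigma_k))\in\overline{\mathcal P}^{(1)}$, and since $(\tfrac1n\Tr\sigma A^{(n)},0)\in\overline{\mathcal P}^{(1)}$ and $0\le\tfrac1nS(\sigma)\le\tfrac1n\sum_kS(\sigma_k)$ by subadditivity, convexity yields $(\tfrac1n\Tr\sigma A^{(n)},\tfrac1nS(\sigma))\in\overline{\mathcal P}^{(1)}$; passing to ensembles gives $\overline{\mathcal P}^{(n)}=n\overline{\mathcal P}^{(1)}$. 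For $(6)$: if $(\und a,S(\tau(\und a)))=\lambda(\und b_1,t_1)+(1-\lambda)(\und b_2,t_2)$ with both points in $\overline{\mathcal P}$ and $\lambda\in(0,1)$, the reformulation gives $t_i\le S(\tau(\und b_i))$, so $S(\tau(\und a))=\lambda t_1+(1-\lambda)t_2\le\lambda S(\tau(\und b_1))+(1-\lambda)S(\tau(\und b_2))\le S(\tau(\und a))$; strictness forces $\tau(\und b_1)=\tau(\und b_2)$, hence $\und b_1=\und b_2=\und a$ and $t_1=t_2=S(\tau(\und a))$, so the decomposition is trivial and the point is extreme.

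The heart of the statement is $(4)$. By $(3)$ it suffices to prove $\overline{\mathcal P}^{(n)}\subseteq\mathcal P^{(n)}$ for $n\ge2$, i.e. by the reformulation that every $(\und a,s)$ with $\und a\in\mathcal C_{Q^{\otimes n}}$ and $0\le s\le S(\tau_{Q^{\otimes n}}(\und a))$ is attained by a \emph{single} state on $Q^{\otimes n}$. This is where $n\ge2$ is indispensable, and it isolates why one copy need not give a convex diagram while two or more do: every attainable charge vector on $Q^{\otimes n}$, $n\ge2$, is attained by a \emph{pure} state. Indeed $\und a=n(\Tr\omega A_1,\dots,\Tr\omega A_c)$ for some state $\omega=\sum_k\mu_k\proj k$ on $Q$, and $|\psi\rangle:=\sum_k\sqrt{\mu_k}\,|k\rangle^{\otimes n}$ has every single-system marginal equal to $\omega$ (the off-diagonal terms vanish on tracing out any other factor), so $\Tr\proj{\psi}A_j^{(n)}=n\Tr\omega A_j$. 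Interpolating $\rho_t=(1-t)\proj{\psi}+t\,\tau_{Q^{\otimes n}}(\und a)$ keeps the charge vector at $\und a$ while $S(\rho_t)$ runs continuously from $0$ to $S(\tau_{Q^{\otimes n}}(\und a))$, so by the intermediate value theorem some $\rho_t$ has entropy exactly $s$. Hence $\overline{\mathcal P}^{(n)}=\mathcal P^{(n)}=n\overline{\mathcal P}^{(1)}$ for $n\ge2$, which in particular is convex.

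For $(5)$, combine $(3)$--$(4)$: a point of $\mathcal P^{(n)}$ is $(\und a,s)$ with $\und a=n\und c$, $\und c\in\mathcal C_Q$, and $0\le s\le nS(\tau_Q(\und c))=S(\tau_{Q^{\otimes n}}(\und a))$ (the equality again by subadditivity and concavity of $S(\tau_Q(\cdot))$). Since $n\ge d=\dim Q\ge\operatorname{rank}\tau_Q(\und c)$, the positive operator $n\,\tau_Q(\und c)$, of trace $n$, can be written as a sum $\sum_{k=1}^n\proj{\phi_k}$ of $n$ rank-one projectors with unit $|\phi_k\rangle$ — the Schur--Horn/frame-existence fact that a positive semidefinite operator of integer trace $n$ and rank $\le n$ has this form, because its eigenvalue list padded with zeros majorises $(1,\dots,1)$. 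Putting $\rho_k(t)=(1-t)\proj{\phi_k}+t\,\tau_Q(\und c)$ one has $\sum_k\rho_k(t)\equiv n\,\tau_Q(\und c)$, so the product state $\rho_1(t)\otimes\cdots\otimes\rho_n(t)$ has charge vector $n\und c=\und a$ for every $t$, while its entropy $\sum_kS(\rho_k(t))$ sweeps $[0,nS(\tau_Q(\und c))]$ continuously; choosing $t$ with entropy $s$ realises $(\und a,s)$ by a tensor product state, proving $(5)$.

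I expect $(4)$ to be the main obstacle: one has to hit upon the right reason that a second system convexifies the phase diagram (rather than trying to convexify a Minkowski sum of non-convex sets directly), and the pure-state-realisability argument — in particular checking that the ``all marginals equal $\omega$'' state indeed has charge vector $n\Tr\omega A_j$ — is the step that needs the most care. A secondary point is pinning down the exact threshold $n\ge\dim Q$ in $(5)$, which is precisely the rank bound that makes the Schur--Horn decomposition of the Gibbs state into $n$ pure states available.
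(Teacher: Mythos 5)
Your proof is correct and follows essentially the same route as the paper's: the reduction of $\overline{\mathcal{P}}^{(n)}$ to the vertical segments between the zero- and max-entropy boundaries, the scaling via product ensembles and marginals, the GHZ-type purification $\sum_k\sqrt{\mu_k}\ket{k}^{\otimes n}$ plus interpolation with the Gibbs state for $n\ge 2$, the uniform decomposition of a state into $n$ pure states for $n\ge d$, and strict concavity of $\und{a}\mapsto S(\tau(\und{a}))$ for extremality. The only (inessential) differences are that you prove point 3 directly from subadditivity and marginals rather than via the two boundary diagrams, and you spell out the strict-concavity argument for point 6 in more detail than the paper does.
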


\begin{proof}
1. The phase diagram is convex by definition. Further, $\Tr \rho A_j^{(n)}$ and $S(\rho)$ 
are continuous functions defined on the set of quantum states which is a compact set; 
hence, the set $\mathcal{P}^{(n)}$ is also a compact set. The cxonvex hull of a 
finite-dimensional compact set is compact, so the phase diagram is a compact set.  

2. Any point in the phase diagram according to the definition is a convex combination of the form 
\[(a_1,\ldots,a_c,s)
= \left(\sum_i p_i \Tr(\rho_i A_1), \ldots, \sum_i p_i\Tr(\rho_i A_c),\sum_i p_i S(\rho_i)\right).\]
The point $(a_1,\ldots,a_c,0)$ belongs to $\overline{\mathcal{P}}_{0}^{(1)}$ 
because the state $\rho=\sum_i p_i \rho_i$ has charge values $a_1, \ldots, a_c$. 
Moreover, the state with charge values $a_1, \ldots, a_c$ of maximum entropy is 
the generalized thermal state $\tau(\und{a})$, so we have 
\begin{align*}
    S(\tau(\und{a})) \geq S(\rho) \geq \sum_i p_i S(\rho_i),
\end{align*}
where the second inequality is due to concavity of the entropy. Therefore, any point
$(\und{a},s)$ can be written as the convex combination of the points $(\und{a},0)$ and 
$\bigl(\und{a},S(\tau(\und{a}))\bigr)$.

3. Due to item 2, it is enough to show that 
$\overline{\mathcal{P}}_{0}^{(n)} = n\overline{\mathcal{P}}_{0}^{(1)}$, and 
$\overline{\mathcal{P}}_{\max}^{(n)}= n\overline{\mathcal{P}}_{\max}^{(1)}$. 
The former follows from the definition. The latter is due to the fact that the 
thermal state for a composite system is the tensor power of the thermal state of 
the individual system. 

4. Let $\tau(\und{a}) = \sum_i p_i \ketbra{i}{i}$ be the diagonalization of the 
generalized thermal state.
For $n \geq 2$, define $\ket{v} = \sum_i \sqrt{p_i} \ket{i}^{\ox n}$. Obviously, the charge 
values of the states $\tau(\und{a})^{\ox n}$ and $\ketbra{v}{v}$ are the same, since they
have the same reduced states on the individual systems;
thus, there is a pure state for any point in the zero-entropy diagram of the composite system.
Now, consider the state $\lambda \ketbra{v}{v}+(1-\lambda)\tau(\und{a})^{\ox n}$, which has the 
same charge values as $\tau(\und{a})^{\ox n}$ and $\ketbra{v}{v}$. 
The entropy $S\bigl(\lambda \ketbra{v}{v}+(1-\lambda)\tau(\und{a})^{\ox n}\bigr)$ is a continuous function 
of $\lambda$; hence, for any value $s$ between $0$ and $S(\tau(\und{a})^{\ox n})$, there is a state 
with the given values and entropy $s$. 

5. For $n\geq |Q|$, it is easy to see that any state $\rho$ can be decomposed 
into a uniform convex combination of $n$ pure states, i.e. 
$\rho=\frac{1}{n} \sum_{i=1}^n \ketbra{\psi_i}{\psi_i}$. 
\aw{For instance, consider the diagonalization of $\rho = \sum_{t=1}^{|Q|} q_t \proj{t}$,
and define $\ket{\psi_\ell} := \sum_t \sqrt{q_t} e^{2\pi i t\ell/n}\ket{t}$. Due to 
the cyclotomic properties of the primitive $n$-root of unit, this satisfies the claim.}
Now observe that the state $\psi^{n} = \ketbra{\psi_1} \ox \cdots \ox \ketbra{\psi_n}$ has the same 
charge values as the state $\rho^{\otimes n}$, but as it is pure it has entropy $0$.
Further, consider the thermal state $\tau$ with the same charge values as $\rho$, but 
the maximum entropy consistent with them. Now let
$\rho_i := \lambda \ketbra{\psi_i}{\psi_i}+(1-\lambda)\tau$, and 
observe that $\rho_\lambda^{n} = \rho_1 \ox \cdots \rho_n$ has the same charge values as 
$\psi^{n}$, $\rho^{n}$ and $\tau^{\ox n}$.
Since the entropy $S(\rho_\lambda^{n})$ is a continuous function of $\lambda$, 
thus interpolating smoothly between $0$ and $n S(\tau)$, there is a 
tensor product state with the same given charge values and prescribed 
entropy $s$ in the said interval. 
%

6. This follows from the strict concavity of the von Neumann entropy $S(\rho)$ as a
function of the state, which imparts the strict concavity on $\und{a} \mapsto S(\tau(\und{a}))$.
\end{proof}

\medskip
The penultimate point of Lemma \ref{lemma:phase diagram properties} motivates us to define 
a resource theory where the objects are sequences of states on composite systems
of $n\rightarrow\infty$ parts. 
Inspired by \cite{Sparaciari2016}, the allowed operations in this resource theory 
are those that respect basic principles of physics, namely entropy and charge 
conservation. We point out right here, that ``physics'' in the present context 
does not necessarily refer to the fundamental physical laws of nature, but to 
any rule that the system under consideration obeys. 
It is well-known that quantum operations that preserve entropy for all states are 
unitaries. The class of unitaries that conserve charges of a system are precisely 
those that commute with all charges of that system. 
However, it turns out that these constraints are too strong if imposed literally, 
when many charges are to be conserved, as it could easily happen that only
trivial unitaries are allowed.
Our way out is to consider the thermodynamic limit and at the same time relax the 
allowed operations to approximately entropy and charge conserving ones.    
As for the former, we couple the composite system to an ancillary system with corresponding 
Hilbert space $\mathcal{K}$ of dimension $2^{o(n)}$, where restricting the dimension of 
the ancilla ensures that the entropy rate per system, that is the entropy of the composite 
system divided by $n$, does not change in the limit of large $n$. 
Here and in the following, we use standard little-oh notation, by which $o(n)$
denotes a function such that $\lim_{n\to\infty} \frac{o(n)}{n} = 0$.
Moreover, as for charge conservation, we consider unitaries that are \emph{almost} commuting 
with the total charges of the composite system and the ancilla. The precise definition 
is as follows. 

\begin{definition}
\label{almost-commuting unitaries}
A unitary operation $U$ acting on a composite system coupled to an ancillary system with 
Hilbert spaces $\mathcal{H}^{\otimes n}$ and $\mathcal{K}$ of dimension $2^{o(n)}$, respectively,
is called an \emph{almost-commuting unitary} with the total charges of a composite system and an 
ancillary system if the operator norm of the normalised commutator for all total charges 
vanishes asymptotically for large $n$:
\begin{align*}
  \lim_{n \to \infty} \frac{1}{n} \norm{ [U,A_j^{(n)}+A_j']}_{\infty}
     =\lim_{n \to \infty} \frac{1}{n}\norm{ U (A_j^{(n)}+A_j')-(A_j^{(n)}+A_j') U }_{\infty} 
     = 0 
  \qquad j=1,\ldots,c.
\end{align*}
where $A_j^{(n)}$ and $A_j'$ are respectively the charges of the composite system 
and the ancilla, such that $\norm{A_j'}_{\infty} \leq o(n)$.  
\end{definition}

We stress that the definition of almost-commuting unitaries automatically implies that 
the ancillary system has a relatively small dimension and charges with small operator 
norm compared to a composite system.    
The first step in the development of our resource theory is a precise characterisation 
of which transformations between sequences of product state are possible using almost 
commuting unitaries.
To do so, we define \emph{asymptotically equivalent} states as follows:

\begin{definition}
\label{Asymptotic equivalence definition}
Two sequences of product states 
\aw{$\rho^n = \rho_{Q^n} = (\rho_1)_{Q_1} \otimes \cdots \otimes (\rho_n)_{Q_n}$ and 
$\sigma^n = \sigma_{Q^n} = (\sigma_1)_{Q_1} \otimes \cdots \otimes (\sigma_n)_{Q_n}$ }
of a composite system with 
charges $A_j^{(n)}$ for $j=1,\ldots,c$, are called \emph{asymptotically equivalent} if  
\begin{align*}
  \lim_{n \to \infty} \frac{1}{n} \abs{S(\rho^n) - S(\sigma^n)}                       &= 0, \\ 
  \lim_{n \to \infty} \frac{1}{n} \abs{\Tr \rho^n A_j^{(n)} - \Tr \sigma^n A_j^{(n)}} &= 0 \text{ for all } j=1,\ldots,c.
\end{align*}
In other words, two sequences of product states are considered equivalent if their
associated points in the normalised phase diagrams $\frac1n \mathcal{P}^{(n)}$ differ
by a sequence converging to $0$. 
\end{definition}
\aw{We note that in the above definition, $\rho^n$ and $\sigma^n$ are tensor 
products of possibly different states; a tensor power state is denoted $\rho^{\otimes n}$}.

The \emph{asymptotic equivalence theorem} of \cite{Sparaciari2016} characterizes 
feasible state transformations via \emph{exactly} commuting unitaries where energy 
is the only conserved quantity of a system, showing that it is precisely 
given by asymptotic equivalence. 
We prove an extension of this theorem for systems with multiple, possibly non-commuting
conserved quantities, by allowing almost-commuting unitaries. 

\begin{theorem}[Asymptotic (approximate) Equivalence Theorem -- AET]
\label{Asymptotic equivalence theorem} 
Let $\rho^n=\rho_1 \otimes \cdots \otimes \rho_n$ and 
$\sigma^n=\sigma_1 \otimes \cdots \otimes \sigma_n$ 
be two sequences of product states of a composite system with 
charges $A_j^{(n)}$ for $j=1,\ldots,c$.
These two states are asymptotically equivalent if and only if 
there exist ancillary quantum systems with corresponding Hilbert space $\mathcal{K}$ 
of dimension $2^{o(n)}$ and an almost-commuting unitary $U$ acting on 
$\mathcal{H}^{\otimes n} \otimes \mathcal{K}$ such that 
\begin{align*}
  \lim_{n \to \infty} \norm{U (\rho^n \otimes \omega') U^{\dagger} - \sigma^n \otimes \omega}_1 &= 0, \\
\end{align*}
where $\omega$ and $\omega'$  are states of the ancillary system, and charges of the ancillary 
system are trivial, $A_j' = 0$.
\end{theorem}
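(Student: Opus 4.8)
The plan is to prove the two directions separately, with the ``only if'' (sufficiency of asymptotic equivalence for the existence of an almost-commuting unitary) being the substantive half.

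\medskip
\textbf{The easy direction (``if'').} Suppose such ancillas $\mathcal{K}$, states $\omega,\omega'$ and an almost-commuting unitary $U$ exist. Entropy is invariant under unitaries and additive across tensor factors, so $S(\rho^n)+S(\omega') = S(U\rho^n\otimes\omega' U^\dagger)$; the trace-norm convergence together with the Fannes--Audenaert inequality (the dimension of $\mathcal{H}^{\otimes n}\otimes\mathcal{K}$ is $2^{O(n)}$, so the continuity bound contributes only $o(n)$) forces $\frac1n|S(\rho^n)+S(\omega') - S(\sigma^n)-S(\omega)| \to 0$. Since $0\le S(\omega),S(\omega')\le \log\dim\mathcal{K} = o(n)$, the ancilla entropies wash out and we get $\frac1n|S(\rho^n)-S(\sigma^n)|\to 0$. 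For the charges: because the ancilla charges are trivial ($A_j'=0$), the almost-commuting condition is just $\frac1n\|[U,A_j^{(n)}]\|_\infty\to 0$, hence $\frac1n|\Tr U\rho^n\otimes\omega' U^\dagger\, A_j^{(n)} - \Tr\rho^n A_j^{(n)}| = \frac1n|\Tr \rho^n\otimes\omega'(U^\dagger A_j^{(n)}U - A_j^{(n)})| \le \frac1n\|[U,A_j^{(n)}]\|_\infty\to 0$; combining with the trace-norm convergence and $\|A_j^{(n)}\|_\infty\le n\|A_j\|_\infty$ yields $\frac1n|\Tr\rho^n A_j^{(n)} - \Tr\sigma^n A_j^{(n)}|\to 0$. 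So the two sequences are asymptotically equivalent.

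\medskip
\textbf{The hard direction (``only if'').} Assume $\rho^n$ and $\sigma^n$ are asymptotically equivalent, with common limiting charge values $\und a$ (per system) and common limiting entropy rate $s$. The strategy is to route both sequences, via almost-commuting unitaries, through a canonical intermediary supported on an approximate microcanonical (a.m.c.) subspace. Concretely: (i) By typicality/concentration, $\rho^n$ is, up to $o(1)$ in trace norm, supported on a subspace on which all the $A_j^{(n)}/n$ are close to $a_j$ and the entropy is $\approx ns$; using the a.m.c. construction of Appendix \ref{section: Approximate microcanonical (a.m.c.) subspace} for the non-commuting charges (following \cite{Halpern2016}), there is an a.m.c. subspace $\mathcal{M}_{\und a,s}^{(n)}$ of dimension $2^{n(s+o(1))}$ on which every $A_j^{(n)}$ acts as $na_j\1 + o(n)$ in operator norm. (ii) One then shows that any product state with these asymptotic parameters can be unitarily rotated — by a unitary that almost commutes with the $A_j^{(n)}$ — arbitrarily close (in trace norm, asymptotically) to the maximally mixed state $\mu_{\und a,s}^{(n)}$ on $\mathcal{M}_{\und a,s}^{(n)}$, possibly after adjoining a small ancilla $\mathcal{K}$ of dimension $2^{o(n)}$ to absorb the entropy deficit and to smooth over the gap between the true support dimension and $2^{ns}$. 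This is the ``information-theoretic'' content: decoupling the state into a flat state on a microcanonical window, which is essentially a Schumacher-compression / randomness-extraction argument performed \emph{within} the charge-conserving setting, so that the rotations never move the charge expectations by more than $o(n)$. (iii) Apply the same to $\sigma^n$ to bring it close to the same $\mu_{\und a,s}^{(n)}\otimes(\text{ancilla})$. (iv) Compose: $U = V_\sigma^{-1} V_\rho$ (with bookkeeping of the two ancillas combined into one of dimension $2^{o(n)}$) is almost-commuting and maps $\rho^n\otimes\omega'$ to within $o(1)$ of $\sigma^n\otimes\omega$ in trace norm.

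\medskip
\textbf{Main obstacle.} The crux is step (ii): constructing, for non-commuting charges, a unitary that simultaneously (a) equilibrates the state to the flat microcanonical state and (b) genuinely almost-commutes with \emph{all} $c$ charges in operator norm (not merely conserves them on average or in some weak sense). For commuting charges this is classical and the microcanonical subspace is an exact simultaneous eigenspace; the non-commuting case forces the approximate a.m.c. machinery, and one must control the propagation of the $o(n)$ errors through the compression unitary — in particular verifying that restricting a unitary to (and dilating from) a subspace on which the charges are near-scalar does not blow up the normalized commutator. I expect the bulk of Appendix \ref{proof-AET} to be devoted precisely to making the a.m.c. subspace interface cleanly with the compression/decoupling unitaries and to tracking that all approximations remain $o(n)$ after the finitely many (namely $c$) charge constraints and the entropy constraint are imposed together.
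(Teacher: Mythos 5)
Your \emph{if} direction is complete and coincides with the paper's argument (additivity and unitary invariance of entropy plus Fannes--Audenaert continuity for the entropy rate; H\"older's inequality plus the commutator bound for the charges). Your \emph{only if} outline also follows the paper's strategy --- restrict to an approximate microcanonical subspace, flatten both states, compose the two flattening unitaries --- and you have correctly located the crux in your step (ii). But that step is precisely what the proposal does not supply, and it is where essentially all the work lies. In particular, the remark that the rotations ``never move the charge expectations by more than $o(n)$'' is not the property you need: almost-commutation is an operator-norm statement about the unitary, uniform over \emph{all} input states, not a statement about expectation values in the two particular states being transformed, and no argument for it is given.

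The paper closes this gap with two moves. First, instead of rotating each state only approximately onto a maximally mixed state, Lemma \ref{lemma: timmed state} trims the eigenvalues of $\widetilde{P}\,\Pi^n_{\alpha,\rho^n}\rho^n\Pi^n_{\alpha,\rho^n}\widetilde{P}$ (entropy-typical projector composed with the a.m.c.\ projector) into bins of equal width, discarding only $o(1)$ of the weight, so as to produce a state $\widetilde\rho$ that is $o(1)$-close to $\rho^n$ in trace norm, supported in $\mathcal{M}$, and \emph{exactly} unitarily equivalent to $\tau\ox\omega$ with $\tau$ maximally mixed of a rank $L$ that can be chosen \emph{identical} for $\rho^n$ and $\sigma^n$ (using $|S(\rho^n)-S(\sigma^n)|\le n\gamma_n$) and $\omega$ of dimension $2^{o(n)}$. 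Hence $\widetilde\rho\ox\omega'$ and $\widetilde\sigma\ox\omega$ are isospectral and an \emph{exact} unitary $\widetilde U$ on $\mathcal{M}_t=\mathcal{M}\ox\mathcal{K}$ maps one to the other; no further approximation enters at this stage. Second, and decisively, one shows that \emph{every} unitary of the form $U=\widetilde U\oplus\1_{\mathcal{M}_t^\perp}$ is automatically almost-commuting: writing $P_t$ for the projector onto $\mathcal{M}_t$, one has $P_t^\perp\bigl(UA_j^{(n)}U^\dagger-A_j^{(n)}\bigr)P_t^\perp=0$, whence $\norm{UA_j^{(n)}U^\dagger-A_j^{(n)}}_\infty\le 6\norm{(A_j^{(n)}-nv_j\1)P_t}_\infty$, and the defining property of the a.m.c.\ subspace (every state supported in $\mathcal{M}$ satisfies $\Tr\omega\Pi_j^\eta\ge 1-\delta$) yields $\norm{(A_j^{(n)}-nv_j\1)P_t}_\infty\le n\eta\,\Sigma(A_j)+n\norm{A_j-v_j\1}_\infty\sqrt{\delta}=o(n)$ for the chosen $\eta,\delta\to0$. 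This is exactly the uniform near-scalarity of the charges on the subspace that you anticipated would be needed; without proving it, the composition in your step (iv) has no justification for being almost-commuting.
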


\medskip
The \emph{proof} of this theorem is given in Appendix \ref{proof-AET}, as it relies on a
number of technical lemmas, among them the concept of an 
\emph{approximate microcanonical subspace (a.m.c.)} \cite{Halpern2016}, of which 
we give a novel construction in Appendix \ref{section: Approximate microcanonical (a.m.c.) subspace}. 

\medskip

\section{Asymptotic thermodynamics of multiple conserved quantities}
\label{sec:asymptotic} 
As a thermodynamic theory, or even as a resource theory in general, transformations
by almost-commuting unitaries do not appear to be the most fruitful: they are reversible
and induce an equivalence relation among the sequences of product states. 
In particular, every point $(\und{a},s)$ of the phase diagram $\overline{\mathcal{P}}^{(1)}$
defines an equivalence class, namely of all state sequences with charges and 
entropy converging to $\und{a}$ and $s$, respectively. 

To make the theory more interesting, and more resembling of ordinary thermodynamics, 
as expressed in its first and second laws (including irreversibility),
we now specialise to a setting considered in many previous papers in the resource theory 
of thermodynamics, both with single or multiple conserved quantities. 
Specifically, we consider an asymptotic analogue of the setting proposed 
in \cite{Guryanova2016} concerning the interaction of thermal baths with a 
quantum system and batteries, where it was shown that the second law constrains
the combination of extractable charge quantities. 
In \cite{Guryanova2016}, explicit protocols for state transformations to saturate 
the second law are presented, that store each of several commuting charges in its 
corresponding explicit battery. However, for the case of non-commuting charges, one battery, 
or a so-called reference frame, stores all different types of charges \cite{Halpern2016,Popescu2018}.
Only recently it was shown that reference frames for non-commuting charges
can be constructed, at least under certain conditions, which store the different 
charge types in physically separated subsystems \cite{Popescu2019}.
Moreover, the size of the bath required to perform the transformations is not 
addressed in these works, as only the limit of asymptotically large bath was
considered. 
We will address these questions in a similar setting but in the asymptotic regime, 
where Theorem~\ref{Asymptotic equivalence theorem} provides the necessary and sufficient
condition for physically possible state transformations.
In this new setting, the \emph{asymptotic} second law constrains the combination of 
extractable charges; we provide explicit protocols for realising transformations 
satisfying the second law, where each explicit battery can store its corresponding type of
work in the general case of non-commuting charges. Furthermore, we determine the 
minimum number of thermal baths of a given type that is required to perform a transformation.

\subsection{System model, batteries and the first law}
\label{subsec:model}
We consider a system being in contact with a bath and suitable batteries,
with a total Hilbert space 
$Q=S\otimes B\otimes W_1\otimes \cdots \otimes W_c$,
consisting of many non-interacting subsystems; namely, the work system, the thermal bath and 
$c$ battery systems with Hilbert spaces ${S}$, ${B}$ and ${W}_j$ for 
$j=1,\ldots,c$, respectively.  
We call the $j$-th battery system the $j$-type battery as it is designed to absorb
$j$-type work.
The work system and the thermal bath have respectively the charges $A_{S_j}$ and $A_{B_j}$ 
for all $j$, but $j$-type battery has only one nontrivial charge $A_{W_j}$, and all 
its other charges are zero because it is meant to store only the $j$-th charge. 
We note that $S$, $B$ and $W_j$s are different 
Hilbert spaces, so the charges of their corresponding systems can be different as well.
The total charge is the sum of the charges of the sub-systems $A_j=A_{S_j}+A_{B_j}+A_{W_j}$ 
for all $j$. Furthermore, for a charge $A$, let $\Sigma(A)=\lambda_{\max}(A)-\lambda_{\min}(A)$ 
denote the spectral diameter, where $\lambda_{\max}(A)$ and $\lambda_{\min}(A)$ are 
the largest and smallest eigenvalues of the charge $A$, respectively. 
We assume that the total spectral 
diameter of the work system and the thermal bath is bounded by the spectral diameter of the 
battery, that is $\Sigma(A_{S_j})+\Sigma(A_{B_j}) \leq \Sigma(A_{W_j})$ for all $j$; this 
assumption ensures that the batteries can absorb or release charges for transformations.   

As we discussed in the previous section, the generalized thermal state $\tau(\und{a})$
is the state that maximizes the entropy subject to the
constraint that the charges $A_j$ have the values $a_j$. 
This state is equal to $\frac{1}{Z}e^{-\sum_{j=1}^c \beta_j A_{j}}$ for real
numbers $\beta_j$ called inverse temperatures and chemical potentials; 
each of them is a smooth function of charge values $a_1,\ldots,a_c$, and 
$Z=\Tr e^{-\sum_{j=1}^c \beta_j A_{j}}$ is the generalized partition function. 
Therefore, the generalized thermal state can be equivalently denoted $\tau(\und{\beta})$ 
as a function of the inverse temperatures, associated uniquely with the charge
values $\und{a}$. 
We assume that the thermal bath is initially in a generalized thermal state
$\tau_b(\und{\beta})$, for globally fixed $\und{\beta}$. 
This is because in \cite{Halpern2016} it was argued that these are precisely the
\emph{completely passive} states, from which no energy can be extracted into 
a battery storing energy, while not changing any of the other conserved quantities,  
by means of almost-commuting unitaries and even when unlimited copies of the state are available. 
We assume that the work system with state $\rho_s$ and the thermal bath are initially 
uncorrelated, and furthermore that the battery systems can acquire only pure states.  

Therefore, the initial state of an \emph{individual} global system $Q$ 
is assumed to be of the following form,
\begin{equation}
  \label{eq:initial composite global}
  \rho_{SBW_1\ldots W_c} 
     = \rho_S \ox \tau(\und{\beta})_B \ox \proj{w_1}_{W_1} \ox \cdots \ox \proj{w_c}_{W_c},
\end{equation}
and the final states we consider are of the form 
\begin{equation}
  \label{eq:final composite global}
  \sigma_{SBW_1\ldots W_c} 
     = \sigma_{SB} \ox \proj{w_1'}_{W_1} \ox \cdots \ox \proj{w_c'}_{W_c},
\end{equation}
where $\rho_S$ and $\sigma_{SB}$ are states of the system and system-plus-bath, respectively, 
and $w_j$ and $w_j'$ label pure states of the $j$-type battery before and after the transformation. 
The notation is meant to convey the expectation value of the $j$-type work, i.e. $w_j^{(\prime)}$
is a real number and $\Tr \proj{w_j^{(\prime)}}A_{W_j} = w_j^{(\prime)}$.

The established resource theory of thermodynamics treats the batteries and the bath as 
`enablers' of transformations of the system $S$, and we will show first and second laws 
that express the essential constraints that any such transformation has to obey. 
We start with the batteries. With the notations $\und{W} = W_1\ldots W_c$, 
$\ket{\und{w}} = \ket{w_1}\cdots\ket{w_c}$, and $\ket{\und{w}'} = \ket{w_1'}\cdots\ket{w_c'}$,
let us look at a sequence $\rho^n = \rho_{S^n} = \rho_{S_1} \ox\cdots\ox \rho_{S_n}$ of initial
system states, and a sequence $\proj{\und{w}}^n = \proj{\und{w}_1}_{\und{W}_1} \ox\cdots\ox \proj{\und{w}_n}_{\und{W}_n}$
of initial battery states, recalling that the baths are initially all in the same thermal
state, $\tau_{B^n} = \tau(\und{\beta})^{\ox n}$; furthermore a sequence of target states 
$\sigma^n = \sigma_{S^nB^n} = \sigma_{S_1B_1} \ox\cdots\ox \sigma_{S_nB_n}$ of the system and bath, and a 
sequence $\proj{\und{w}'}^n = \proj{\und{w}_1'}_{\und{W}_1} \ox\cdots\ox \proj{\und{w}_n'}_{\und{W}_n}$
of target states of the batteries. 

\begin{definition}
  \label{definition:regular}
  A sequence of states $\rho^n$ on any system $Q^n$ is called \emph{regular} if 
  its charge and entropy rates converge, i.e. if
  \begin{align*}
    a_j &= \lim_{n\rightarrow\infty} \frac1n \Tr \rho^n A_j^{(n)},\ j=1,\ldots,c, \text{ and} \\
    s   &= \lim_{n\rightarrow\infty} \frac1n S(\rho^n) 
  \end{align*}
  exist. To indicate the dependence on the state sequence, 
  we write $a_j(\{\rho^n\})$ and $s(\{\rho^n\})$.
\end{definition}
In the rest of the chapter we will essentially focus on regular sequences, so that
we can simply identify them, up to asymptotic equivalence, with a point in the phase 
diagram. However, it should be noted that at the expense of clumsier expressions, 
most of our expositions can be extended to arbitrary sequences of product states or 
block-product states. 

According to the AET and the other results of the previous section, every point $(\und{a},s)$
in the phase diagram $\overline{\mathcal{P}}^{(1)}$ labels an equivalence class of 
regular sequences of product states under transformations by almost-commuting unitaries. 

\medskip

We emphasize that in AET by grouping the $Q$-systems into blocks of $k$, we do not of course change the 
physics of our system, except that now in the asymptotic limit we only consider
$n = k\nu$ copies of $Q$, but the state $\rho^n$ is asymptotically equivalent to
$\rho^{n+O(1)}$ via almost-commuting unitaries according to Definition \ref{almost-commuting unitaries} 
and Theorem \ref{Asymptotic equivalence theorem}. 
But now we consider $Q^k$ with its charge observables $A_j^{(k)}$ as elementary 
systems, which have many more states than the $k$-fold product states we began with. 
Yet, Lemma \ref{lemma:phase diagram properties} shows that the phase diagram for 
the $k$-copy system is simply the rescaled single-copy phase diagram,
$\overline{\mathcal{P}}^{(k)} = k \overline{\mathcal{P}}^{(1)}$, and indeed 
for $k\geq d$, $\mathcal{P}^{(k)} = k \overline{\mathcal{P}}^{(1)}$. This means 
that we can extend the relation of asymptotic equivalence and the
concomitant Asymptotic Equivalence Theorem (AET) \ref{Asymptotic equivalence theorem}
to any sequences of states that factor into product states of blocks $Q^k$, for 
any integer $k$, which freedom we exploit in this thermodynamics setup.

\medskip
Now, for regular sequences $\rho_{S^n}$ of initial states of the system and 
final states of the system plus bath, $\sigma_{S^nB^n}$, as well as regular
sequences of initial and final battery states, $\proj{\und{w}}^n$ and $\proj{\und{w}'}^n$, 
respectively, define the asymptotic rate of $j$-th charge change of the $j$-type battery as 
\begin{equation}
  \label{eq: W_j definition}
  \Delta A_{W_j} := a_j(\{\proj{w_j'}^n\})-a_j(\{\proj{w_j}^n\})
                  = \lim_{n\rightarrow\infty} \frac1n  \Tr (\proj{w_j'}^n-\proj{w_j}^n)A_{W_j}^{(n)}.
\end{equation}
Where there is no danger of confusion, we denote this number also as $W_j$, 
the $j$-type work extracted (if $W_j < 0$, this means that the work $-W_j$ is done 
on system $S$ and bath $B$). 

Similarly, we define the asymptotic rate of $j$-th charge change of the work system
and the bath as 
\begin{align*}
  \Delta A_{S_j} &:= a_j(\{\sigma_{S^n}\})-a_j(\{\rho_{S^n}\})
                   = \lim_{n\rightarrow\infty} \frac1n \Tr (\sigma_{S^n}-\rho_{S^n})A_{S_j}^{(n)}, \\ 
  \Delta A_{B_j} &:= a_j(\{\sigma_{B^n}\})-a_j(\{\tau(\und{\beta})_{B^n}\})
                   = \lim_{n\rightarrow\infty} \frac1n \Tr (\sigma_{B^n}-\tau(\und{\beta})_B^{\ox n})A_{B_j}^{(n)}, 
\end{align*}
where we denote $\sigma_{S^n} = \tr_{B^n} \sigma_{S^nB^n}$ and likewise 
$\sigma_{B^n} = \tr_{S^n} \sigma_{S^nB^n}$.

\begin{theorem}[First Law]
\label{thm:first-law}
Under the above notations, if the regular sequences of the initial state
$\rho_{S^nB^n\und{W}^n} = \rho_{S^n} \ox \tau(\und{\beta})_B^{\ox n} \ox \proj{\und{w}}^n$ 
and the final state $\sigma_{S^nB^n\und{W}^n} = \sigma_{S^nB^n} \ox \proj{\und{w}'}^n$
are equivalent under almost-commuting unitaries, then 
\begin{align*}
  s(\{\sigma_{S^nB^n}\}) &= s(\{\rho_{S^n}\}) + S(\tau(\und{\beta})) \text{ and} \\ 
  W_j                    &= -\Delta A_{S_j}-\Delta A_{B_j} \text{ for all } j=1,\ldots,c. 
\end{align*}

Conversely, given regular sequences $\rho_{S^n}$ and $\sigma_{S^nB^n}$ of product
states such that 
\[
  s(\{\sigma_{S^nB^n}\}) = s(\{\rho_{S^n}\}) + S(\tau(\und{\beta})), 
\]
and assuming that the spectral radius of the battery observables $W_{A_j}$ is large enough 
(see the discussion at the start of this section), then there exist regular sequences of
product states of the $j$-type battery, $\proj{w_j}^n$ and $\proj{w_j'}^n$, for all
$j=1,\ldots,c$, such that 
\begin{align}
  \label{eq:initial}
  \rho_{S^nB^n\und{W}^n}   &= \rho_{S^n} \ox \tau(\und{\beta})_B^{\ox n} \ox \proj{\und{w}}^n \text{ and} \\
  \label{eq:final}
  \sigma_{S^nB^n\und{W}^n} &= \sigma_{S^nB^n} \ox \proj{\und{w}'}^n 
\end{align} 
can be transformed into each other by almost-commuting unitaries.
\end{theorem}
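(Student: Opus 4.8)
The plan is to derive both halves of the statement directly from the Asymptotic Equivalence Theorem~\ref{Asymptotic equivalence theorem} (AET), using only the additivity of the von Neumann entropy and of charge expectation values under tensor products, together with the fact that pure states carry zero entropy.

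\emph{First I would treat the forward direction.} The sequences $\rho_{S^nB^n\und{W}^n} = \rho_{S^n} \ox \tau(\und{\beta})_B^{\ox n} \ox \proj{\und{w}}^n$ and $\sigma_{S^nB^n\und{W}^n} = \sigma_{S^nB^n} \ox \proj{\und{w}'}^n$ are regular (block-)product sequences with common block $S \ox B \ox \und{W}$, so if they are related by an almost-commuting unitary the AET forces them to be asymptotically equivalent: their entropy rates agree, and for every $j$ their rates with respect to the total charge $A_j^{(n)} = A_{S_j}^{(n)} + A_{B_j}^{(n)} + A_{W_j}^{(n)}$ agree (the $j$-charge of every battery of type $\neq j$ vanishing by assumption). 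Additivity of entropy and purity of the batteries give the entropy rate $s(\{\rho_{S^n}\}) + S(\tau(\und{\beta}))$ on the initial side and $s(\{\sigma_{S^nB^n}\})$ on the final side, which is the first displayed equation. For the charges, take the expectation of $A_j^{(n)}$ on both sides, divide by $n$ and pass to the limit: the $S$- and $B$-contributions give $a_j(\{\rho_{S^n}\}) + \Tr\tau(\und{\beta})A_{B_j}$ versus $a_j(\{\sigma_{S^n}\}) + a_j(\{\sigma_{B^n}\})$, and the battery contributions give $a_j(\{\proj{w_j}^n\})$ versus $a_j(\{\proj{w_j'}^n\})$; rearranging the resulting equality yields $W_j = \Delta A_{W_j} = -\Delta A_{S_j} - \Delta A_{B_j}$.

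\emph{For the converse}, set $W_j := -\Delta A_{S_j} - \Delta A_{B_j}$. Since each of $\abs{\Delta A_{S_j}}$ and $\abs{\Delta A_{B_j}}$ is bounded by the corresponding spectral diameter, $\abs{W_j} \leq \Sigma(A_{S_j}) + \Sigma(A_{B_j}) \leq \Sigma(A_{W_j})$, so one can pick reals $w_j, w_j'$ in $[\lambda_{\min}(A_{W_j}),\lambda_{\max}(A_{W_j})]$ with $w_j' - w_j = W_j$, and pure states $\ket{w_j}, \ket{w_j'}$ of the $j$-type battery with $\Tr \proj{w_j} A_{W_j} = w_j$ and $\Tr \proj{w_j'} A_{W_j} = w_j'$ (possible, since that interval is exactly the numerical range of the Hermitian $A_{W_j}$). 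Take the i.i.d.\ sequences $\proj{w_j}^n = \proj{w_j}^{\ox n}$ and $\proj{w_j'}^n = \proj{w_j'}^{\ox n}$, which are regular, have zero entropy rate, and have the prescribed $j$-charge rates. The combined sequences $\rho_{S^n} \ox \tau(\und{\beta})_B^{\ox n} \ox \proj{\und{w}}^n$ and $\sigma_{S^nB^n} \ox \proj{\und{w}'}^n$ are then regular; their entropy rates coincide by the hypothesis on $\rho_{S^n}$ and $\sigma_{S^nB^n}$; and by construction all $c$ total-charge rates coincide, the constraint for $A_{j'}^{(n)}$ involving only the $j'$-type battery so that the $c$ conditions decouple. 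By the AET the two sequences are asymptotically equivalent, and the AET then supplies an ancilla of dimension $2^{o(n)}$ with trivial charges together with an almost-commuting unitary effecting the transformation.

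\emph{Where I expect the work to be.} The statement is essentially a corollary of the AET, so there is no deep obstacle; the only genuine point is in the converse, namely that the demanded battery charge shifts fit within the spectral diameter — which is exactly what the standing assumption $\Sigma(A_{S_j}) + \Sigma(A_{B_j}) \leq \Sigma(A_{W_j})$ provides — together with the observation that the $c$ charge-matching conditions are simultaneously satisfiable because each total charge couples to a single battery type. A minor technical caveat worth flagging is that the argument uses the marginal sequences $\sigma_{S^n}$ and $\sigma_{B^n}$ to be regular, as is already implicit in the very definitions of $\Delta A_{S_j}$ and $\Delta A_{B_j}$.
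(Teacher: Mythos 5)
Your proposal is correct and follows essentially the same route as the paper: the forward direction is the ``if'' part of the AET (almost-commuting unitaries preserve entropy and charge rates) combined with additivity and the purity of the battery states, and the converse chooses battery states to absorb the charge differences $W_j=-\Delta A_{S_j}-\Delta A_{B_j}$ and then invokes the AET. You merely make explicit two points the paper leaves implicit — that the numerical range of the Hermitian $A_{W_j}$ is the full spectral interval so the required shift fits under the assumption $\Sigma(A_{S_j})+\Sigma(A_{B_j})\leq\Sigma(A_{W_j})$, and that the marginal rates on $S$ and $B$ must converge for $\Delta A_{S_j}$, $\Delta A_{B_j}$ to be defined — both of which are sound.
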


\begin{proof}
The first part is by definition, since the almost-commuting unitaries 
asymptotically preserve the entropy rate and the work rate of all 
charges. 

In the other direction, all we have to do is find states $\proj{w_j}$ 
and $\proj{w_j'}$ of the $j$-type battery $W_j$, such that
$W_j = \Delta A_{W_j} = -\Delta A_{S_j}-\Delta A_{B_j}$, for all $j=1,\ldots,c$. 
This is clearly possible if the spectral radius of $W_{A_j}$ is large enough. 
With this, the states in Eqs. (\ref{eq:initial}) and (\ref{eq:final}) have the 
same asymptotic entropy and charge rates. 
Hence, the claim follows from the AET, Theorem~\ref{Asymptotic equivalence theorem}.
\end{proof}

\begin{remark}\normalfont
The second part of Theorem~\ref{thm:first-law} says that for regular product state 
sequences, as long as the initial and final states of the work system and the thermal bath 
have asymptotically the same entropy, they can be transformed one into the another 
because there are always batteries that can absorb or release the necessary charge difference. 
Furthermore, we can even fix the initial (or final) state of the batteries and 
design the matching final (initial) battery state, assuming that the charge 
expectation value of the initial (final) state is far enough from the edge of the
spectrum of $A_{W_j}$.
\end{remark}

For any such states, we say that there is a \emph{work transformation} 
taking one to the other, denoted 
$\rho_{S^n} \ox \tau(\und{\beta})_B^{\ox n} \rightarrow \sigma_{S^nB^n}$.
This transformation is always feasible, implicitly assuming the 
presence of suitable batteries for all $j$-type works to balance to books
explicitly. 

\begin{remark}\normalfont
As a consequence of the previous remark, we now change our point of view of what a 
transformation is. Of our complicated $S$-$B$-$\und{W}$ compound, we only 
focus on $SB$ and its state, and treat the batteries as implicit. Since we insist
that batteries need to remain in a pure state, which thus factors off and 
does not contribute to the entropy, and due to the above first law Theorem \ref{thm:first-law}, 
we can indeed understand everything that is going on by looking at how 
$\rho_{S^nB^n}$ transforms into $\sigma_{S^nB^n}$. 
\end{remark}

Note that in this context, it is in a certain sense enough that the initial states 
$\rho_{S^n}$ form a regular sequence of product states and that the target states 
$\sigma_{S^nB^n}$ form a regular sequence. This is because the first part of 
the first law, Theorem \ref{thm:first-law}, only requires regularity, and 
since the target state defines a unique point $(\und{a}',s')$ in the phase 
diagram, we can find a sequence of product states $\widetilde{\sigma}_{S^nB^n}$ in 
its equivalence class, and use the second part of Theorem \ref{thm:first-law}
to realise the work transformation 
$\rho_{S^n} \ox \tau(\und{\beta})_B^{\ox n} \rightarrow \widetilde{\sigma}_{S^nB^n}$.

\subsection{The second law}
\label{subsec:secondlaw}
If the first law in our framework arises from focusing on the system-plus-bath 
compound $SB$, while making the batteries implicit, the second law comes about 
from trying to understand the action on the work system $S$ alone, through the
concomitant back-action on the bath $B$. 
Following \cite{Guryanova2016,Halpern2016}, the second law constrains the different 
combinations of commuting conserved quantities that can be extracted from the work 
system. We show here that in the asymptotic regime, the second law similarly bounds 
the extractable work rate via the rate of free entropy of the system. 

The \emph{free entropy} for a system with state $\rho$, charges $A_j$ and inverse temperatures 
$\beta_j$ is defined in \cite{Guryanova2016} as
\begin{align}
  \label{free entropy}
  \widetilde{F}(\rho) = \sum_{j=1}^c \beta_j \Tr \rho A_j - S(\rho).
\end{align}
It is shown in \cite{Guryanova2016} that the generalized thermal state 
$\tau(\und{\beta})$ is the state that minimizes the free entropy for 
fixed $\beta_j$.

For any work transformation 
$\rho_{S^n} \ox \tau(\und{\beta})_B^{\ox n} \rightarrow \sigma_{S^nB^n}$
between regular sequences of states, 
we define the asymptotic rate of free entropy change for the work system and the 
thermal bath respectively as follows:
\begin{equation}\begin{split}
  \label{eq:free entropy rates}
  \Delta\widetilde{F}_S 
    &:= \lim_{n \to \infty} \frac{1}{n}\left(\widetilde{F}(\sigma_{S^n})
                                             -\widetilde{F}(\rho_{S^n}) \right), \\
  \Delta\widetilde{F}_B
    &:= \lim_{n \to \infty} \frac{1}{n}\left(\widetilde{F}(\sigma_{B^n})
                                             -n \widetilde{F}(\tau_B)\right),
\end{split}\end{equation}
where the free entropy is with respect to the charges of the work system and the thermal 
bath with fixed inverse temperatures $\beta_j$.

\begin{figure}[ht]
\begin{center}
  \includegraphics[width=10cm,height=8cm]{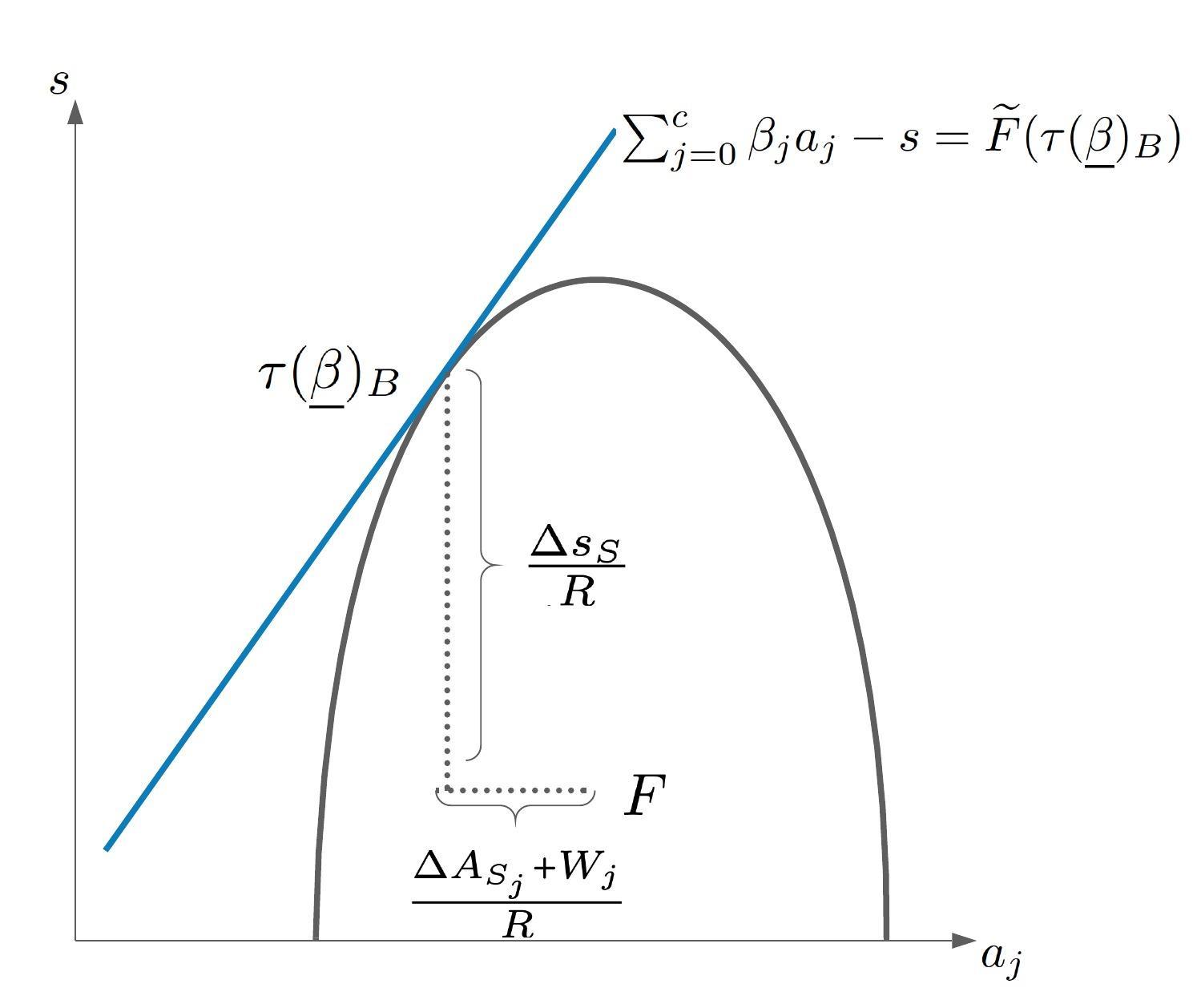}
 \end{center}
  \caption{State change of the bath for a given work transformation under extraction of 
           $j$-type work $W_j$, viewed in the phase diagram of the bath $\overline{\cP}_B$. 
           The blue line represents the tangent hyperplane at the corresponding point 
           of the generalized thermal state $\tau(\und{\beta})_B$, $R$ is the number of copies 
           of the elementary baths in the proof of Theorem \ref{asymptotic second law}, 
           and $F$ is the point corresponding to the final state of the bath.}
  \label{fig:second-law}
\end{figure}

\begin{theorem}[Second Law]
\label{asymptotic second law}
For any work transformation 
$\rho_{S^n} \ox \tau(\und{\beta})_B^{\ox n} \rightarrow \sigma_{S^nB^n}$
between regular sequences of states, the $j$-type works $W_j$ that are extracted 
(and they are necessarily $W_j = -\Delta A_{S_j}-\Delta A_{B_j}$ according 
to the first law) are constrained by the rate of free entropy change of the system:
\[
   \sum_{j=1}^c \beta_j W_j  \leq -\Delta\widetilde{F}_S.
\]

Conversely, for arbitrary regular sequences of product states, 
$\rho_{S^n}$ and $\sigma_{S^n}$, and any real numbers $W_j$ with  
$\sum_{j=1}^c \beta_j W_j < -\Delta\widetilde{F}_S$, 
there exists a bath system $B$ and a regular sequence of product states 
$\sigma_{S^nB^n}$ with $\Tr_{B^n}\sigma_{S^nB^n} = \sigma_{S^n}$, such that 
there is a work transformation 
$\rho_{S^n} \ox \tau(\und{\beta})_B^{\ox n} \rightarrow \sigma_{S^nB^n}$
with accompanying extraction of $j$-type work at rate $W_j$. 
This is illustrated in Fig.~\ref{fig:second-law}.
\end{theorem}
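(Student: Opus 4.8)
The plan is to derive both directions from the First Law (Theorem~\ref{thm:first-law}) together with the variational characterization of the generalized thermal state as the free-entropy minimizer. For the \emph{direct} part, start from the observation that a work transformation $\rho_{S^n} \ox \tau(\und{\beta})_B^{\ox n} \rightarrow \sigma_{S^nB^n}$ requires, by the first part of Theorem~\ref{thm:first-law}, that the entropy rate be conserved: $s(\{\sigma_{S^nB^n}\}) = s(\{\rho_{S^n}\}) + S(\tau(\und{\beta}))$. Now expand $\sum_j \beta_j W_j$ using $W_j = -\Delta A_{S_j} - \Delta A_{B_j}$ from the first law, and compare with $-\Delta\widetilde F_S - \Delta\widetilde F_B$, which by definition of the free entropy \eqref{free entropy} equals $-\sum_j \beta_j(\Delta A_{S_j} + \Delta A_{B_j}) + \bigl(s(\{\sigma_{S^n}\}) + s(\{\sigma_{B^n}\}) - s(\{\rho_{S^n}\}) - S(\tau(\und{\beta}))\bigr)$. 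The entropy-rate term here is $s(\{\sigma_{S^n}\}) + s(\{\sigma_{B^n}\}) - s(\{\sigma_{S^nB^n}\})$, which is the (nonnegative, rate of) mutual information between $S^n$ and $B^n$ in the final state, hence $\geq 0$. Therefore $\sum_j \beta_j W_j = -\Delta\widetilde F_S - \Delta\widetilde F_B - (\text{mutual information rate}) \leq -\Delta\widetilde F_S - \Delta\widetilde F_B$. It then remains to observe $\Delta\widetilde F_B \geq 0$: since the bath starts in $\tau(\und\beta)^{\ox n}$, which minimizes $\widetilde F$ at inverse temperatures $\und\beta$, and $\widetilde F$ is additive over tensor factors, $\widetilde F(\sigma_{B^n}) \geq n\widetilde F(\tau_B)$, giving $\Delta\widetilde F_B \geq 0$. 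Combining yields $\sum_j \beta_j W_j \leq -\Delta\widetilde F_S$.

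For the \emph{converse}, given regular product-state sequences $\rho_{S^n}$, $\sigma_{S^n}$ and target work rates $W_j$ with $\sum_j \beta_j W_j < -\Delta\widetilde F_S$, I would construct the bath explicitly as $R$ copies of a fixed elementary bath system (as in Fig.~\ref{fig:second-law}), i.e. $B = b^{\ox R}$ with $b$ in the elementary thermal state $\tau_b(\und\beta)$, and choose the final bath state to be again a thermal state $\tau_b(\und{\beta}'')^{\ox R}$ with a slightly shifted parameter, so that $\sigma_{S^nB^n} = \sigma_{S^n} \ox \tau_b(\und{\beta}'')^{\ox Rn}$ is an uncorrelated product sequence. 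The freedom in $R$ and in $\und{\beta}''$ lets us tune the bath's charge changes $\Delta A_{B_j}$ and its free-entropy change $\Delta\widetilde F_B$. Concretely, moving along the max-entropy curve $\overline{\mathcal P}_{\max,b}$ near $\tau_b(\und\beta)$, the first-order change of $\widetilde F_b$ vanishes (the tangent hyperplane — the blue line in Fig.~\ref{fig:second-law} — is the graph of $\und a \mapsto \sum_j\beta_j a_j$ up to the constant $\log Z$), so for a small displacement $\Delta\widetilde F_B = o(\text{displacement})$ while the charge displacements are first order; scaling by $R$ we can realize any prescribed charge changes $\Delta A_{B_j}$ with $\Delta\widetilde F_B$ made as small as we like. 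The entropy-conservation requirement $s(\{\sigma_{S^nB^n}\}) = s(\{\rho_{S^n}\}) + S(\tau(\und\beta))$ — here $s(\{\sigma_{S^nB^n}\}) = s(\{\sigma_{S^n}\}) + R\,S(\tau_b(\und{\beta}''))$ and the bath contributes $R\,S(\tau_b(\und\beta))$ initially — fixes one more scalar constraint relating the entropy change $s(\{\sigma_{S^n}\}) - s(\{\rho_{S^n}\})$ to $R\bigl(S(\tau_b(\und{\beta}'')) - S(\tau_b(\und\beta))\bigr)$; since $\sum_j\beta_j W_j < -\Delta\widetilde F_S$ is a strict inequality, there is slack to satisfy this simultaneously with the work-rate constraints $W_j = -\Delta A_{S_j} - \Delta A_{B_j}$. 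Once all these equalities of charge and entropy rates hold, the second part of the First Law (equivalently the AET, Theorem~\ref{Asymptotic equivalence theorem}) guarantees the work transformation exists via almost-commuting unitaries, with batteries absorbing the $W_j$.

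The main obstacle I anticipate is the bookkeeping in the converse: one must show that the $c+1$ scalar equations (the $c$ charge-balance equations determining $\Delta A_{B_j}$, plus the entropy-conservation equation) can be met \emph{simultaneously} by choosing $R$ and $\und{\beta}''$, while keeping $\Delta\widetilde F_B$ small enough that the strict inequality $\sum_j\beta_j W_j < -\Delta\widetilde F_S$ still leaves $\sum_j\beta_j W_j + \Delta\widetilde F_S + \Delta\widetilde F_B < 0$ — but this last quantity is exactly the (negative of the) final mutual information rate, which is automatically available as $0$ for the product-state construction, so the real content is a smooth-dependence / implicit-function argument showing the parametrization $(\und{\beta}'', R) \mapsto (\Delta A_{B_1},\ldots,\Delta A_{B_c}, \text{entropy change})$ is surjective onto a neighborhood of what we need, using the smoothness of $\und\beta \mapsto (\und a(\und\beta), S(\tau(\und\beta)))$ noted after \eqref{free entropy}. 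A secondary technical point is handling the case where $\rho_{S^n},\sigma_{S^n}$ are merely regular product sequences rather than genuinely i.i.d.; but as remarked after Theorem~\ref{thm:first-law}, one may replace $\sigma_{S^n}$ by an asymptotically equivalent product sequence without changing the phase-diagram point, so no generality is lost.
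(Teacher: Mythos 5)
Your proof of the direct part is essentially the paper's own argument: expand $\sum_j\beta_jW_j$ via the first law as $-\Delta\widetilde F_S-\Delta\widetilde F_B-(\Delta s_S+\Delta s_B)$, identify $\Delta s_S+\Delta s_B$ with the nonnegative final mutual-information rate (the paper phrases this as subadditivity plus entropy conservation), and use minimality of the free entropy at the generalized Gibbs state to get $\Delta\widetilde F_B\geq0$. One small imprecision: the inequality $\widetilde F(\sigma_{B^n})\geq n\widetilde F(\tau_B)$ does not follow from ``additivity over tensor factors'' (the final bath state need not be a product); it follows because $\tau(\und\beta)_B^{\ox n}$ is the generalized Gibbs state of the \emph{composite} bath for the additive charges, hence minimizes $\widetilde F$ over all states of $B^n$.

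The converse is where there is a genuine gap. You insist that the final bath state be a generalized thermal state $\tau_b(\und\beta'')^{\ox Rn}$, i.e.\ that the final bath point lie on the max-entropy surface $\overline{\cP}_{\max,b}$. But the constraints forced by the first law pin down \emph{both} the bath's charge rates and its entropy rate: per elementary bath, the target point is $\bigl(\und a^0-\tfrac1R(\Delta\und A_S+\und W),\,s^0-\tfrac1R\Delta s_S\bigr)$, and a short Taylor computation shows its entropy coordinate sits \emph{strictly below} the max-entropy surface at those charges, by $\tfrac{\delta}{R}+O(R^{-2})$ with $\delta=-\Delta\widetilde F_S-\sum_j\beta_jW_j>0$. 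So for a given integer $R$ the $c$ charge equations already determine $\und\beta''$, and the entropy equation is then generically violated; a thermal final state works only at one critical (generally irrational) value of $R$ where the target ray re-intersects the max-entropy surface. That tuning is exactly the content of the paper's later Theorem~\ref{thm:optimal-rate} on the \emph{optimal} bath rate, and your implicit-function sketch is essentially its proof for small $\delta$ — but it is not what Theorem~\ref{asymptotic second law} needs, and for large $\delta$ it can fail outright: replacing $b$ by $b^m$ merely reparametrizes the displacement, so the free-entropy production attainable along the max-entropy surface of copies of a fixed elementary bath is bounded, while the theorem claims achievability for every $\delta>0$. Relatedly, your statement that ``$\Delta\widetilde F_B$ can be made as small as we like'' is inconsistent with your own observation that the final mutual information vanishes for a product state, which forces $\Delta\widetilde F_B=\delta$ exactly.

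The missing idea is to drop the requirement that the final bath state be thermal. The paper only needs the target point to lie in the \emph{phase diagram} $\overline{\cP}_b$ (not on its upper boundary): since the point approaches the thermal point along a fixed ray pointing into the open supporting half-space $\sum_j\beta_ja_j-s>\widetilde F(\tau(\und\beta)_b)$ — and this is precisely where the strict inequality $\sum_j\beta_jW_j<-\Delta\widetilde F_S$ enters — it lies in $\overline{\cP}_b$ for \emph{all} sufficiently large integer $R$. Point 5 of Lemma~\ref{lemma:phase diagram properties} then supplies a regular sequence of (non-thermal) tensor product states realizing that charge-and-entropy point, and the first law finishes the argument. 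This is both simpler and robust: no equation needs to be solved exactly, and no blocking to rational $R$ is required.
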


\begin{proof}
We start with the first statement of the theorem. Consider the global system transformation 
$\rho_{S^n} \ox \tau(\und{\beta})_B^{\ox n} \rightarrow \sigma_{S^nB^n}$ by 
almost-commuting unitaries. 
We use the definition of work (\ref{eq: W_j definition}) and free 
entropy (\ref{free entropy}), as well as the first law, Theorem \ref{thm:first-law}, 
to get
\begin{equation}
\label{work expansion formula}
\begin{split}
    \sum_j \beta_j W_j &= -\sum_j \beta_j(\Delta A_{S_j}+\Delta A_{B_j})\\
                       &= -\Delta\widetilde{F}_S-\Delta\widetilde{F}_B-\Delta s_S -\Delta s_B .
\end{split}
\end{equation}
The second line is due to the definition in Eq. (\ref{eq:free entropy rates}). 
Now observe that 
\begin{align}\label{eq: positive Delta_SB}
  \Delta s_S +\Delta s_B 
     &=     \lim_{n \to \infty} \frac1n \bigl(S(\sigma_{S^n})-S(\rho_{S^n})\bigr) 
                                + \frac1n \bigl(S(\sigma_{B^n})-nS(\tau(\und{\beta})_B)\bigr) \nonumber\\
     &\geq  \lim_{n \to \infty} \frac1n \bigl(S(\sigma_{{SB}^n})-S(\rho_{S^n})-S(\tau(\und{\beta})_B^{\ox n})\bigr)
            = 0, 
\end{align}
where the inequality is due to sub-additivity of von Neumann entropy, and the 
final equality is due to asymptotic entropy conservation. 
Further, the generalized thermal state $\tau(\und{\beta})_B$ has 
the minimum free entropy \cite{Guryanova2016}, hence 
\begin{align}\label{eq:positive Delta_F_B}
\Delta\widetilde{F}_B \geq 0.
\end{align}

For the second statement of the theorem, the achievability part of the second law, 
we aim to show that there is a work transformation 
$\rho_{S^n} \ox \tau(\und{\beta})_B^{\ox n} \rightarrow \sigma_{S^n} \ox \xi_{B^n}$, 
with a suitable regular sequences of product states, 
and works $W_1,\ldots,W_c$ are extracted. 
This will be guaranteed, by the first law  (Theorem \ref{thm:first-law}) 
and the AET, Theorem \ref{Asymptotic equivalence theorem}, if 
\begin{equation}\begin{split}
  s(\{\xi_{B^n}\})   &= S(\tau(\und{\beta})_B)  - \Delta s_S, \\
  a_j(\{\xi_{B^n}\}) &= \Tr \tau(\und{\beta})_B A_{B_j} - \Delta A_{S_j} - W_j \quad \text{for all } j=1,\ldots,c.
  \label{eq:state assumptions}
\end{split}\end{equation}
The left hand side here defines a point $(\und{a},s)$ in the charges-entropy 
space of the bath, and our task is to show that it lies in the phase diagram,
for which purpose we have to define the bath characteristics suitably. 
On the right hand side, 
$\bigl(\Tr\tau(\und{\beta})_B A_{B_1}, \ldots, \Tr\tau(\und{\beta})_B A_{B_c}, S(\tau(\und{\beta})_B)\bigr)$
is the point corresponding to the initial state of the bath, which due to 
its thermal nature is situated on the upper boundary of the region. 
At that point, the region has a unique tangent hyperplane, which has the equation 
$\sum_j \beta_j a_j-s = \widetilde{F}(\tau(\und{\beta})_B)$, and the phase diagram 
is contained in the half space $\sum_j \beta_j a_j-s \geq \widetilde{F}(\tau(\und{\beta})_B)$, 
corresponding to the fact that their free entropy is larger than that of the
thermal state. In fact, due to the strict concavity of the entropy, and hence 
of the upper boundary of the phase diagram, the phase diagram, with the exception 
of the thermal point $\bigl(\Tr \tau(\und{\beta})_B \underline{A_{B}}, S(\tau(\und{\beta})_B)\bigr)$ 
is contained in the open half space $\sum_j \beta_j a_j-s > \widetilde{F}(\tau(\und{\beta})_B)$. 

One of many ways to construct a suitable bath $B$ is as several ($R\gg 1$) non-interacting 
copies of an ``elementary bath'' $b$: $B=b^R$ and charges $A_{B_j}=A^{(R)}_{b_j}$, so that 
the GGS of $B$ is $\tau(\und{\beta})_B = \tau(\und{\beta})_b^{\otimes R}$. 
We claim that for large enough $R$, the left hand side of Eq. (\ref{eq:state assumptions}) 
defines a point in the phase diagram of $B$. Indeed, we can express the conditions 
in terms of $b$, assuming that we aim for a regular sequence of product states
$\xi_{b^{nR}}$:
\begin{equation}\begin{split}
  s(\{\xi_{b^{nR}}\})   &= S(\tau(\und{\beta})_b) - \frac1R \Delta s_S, \\
  a_j(\{\xi_{b^{nR}}\}) &= \Tr \tau(\und{\beta})_b A_{b_j} - \frac1R (\Delta A_{S_j} + W_j) 
                                                             \quad \text{for all } j=1,\ldots,c.
  \label{eq:R-bath-assumptions}
\end{split}\end{equation}
For all sufficiently large $R$, these points $(\und{a},s)$ are arbitrarily close to
where the bath starts off, at 
$(\und{a}_{\und{\beta}},s_{\und{\beta}}) 
 = \bigl(\Tr\tau(\und{\beta})_b A_{b_1}, \ldots, \Tr\tau(\und{\beta})_b A_{b_c}, S(\tau(\und{\beta})_b)\bigr)$,
while they always remains in the open half plane 
$\sum_j \beta_j a_j-s > \widetilde{F}(\tau(\und{\beta})_b)$. Indeed, 
they all lie on a straight line pointing from $(\und{a}_{\und{\beta}},s_{\und{\beta}})$
into the interior of that half plane. Hence, for sufficiently large $R$,
$(\und{a},s) \in \overline{\cP}$, the phase diagram of $b$, and by 
point 5 of Lemma~\ref{lemma:phase diagram properties} there does indeed exist
a regular sequence of product states corresponding to it.
\end{proof}

\zk{In the next two subsections we study the achievability of the second law in a setting where the thermal bath is given. Namely, given a bath system with fixed charges and the thermal states $\tau(\und{\beta})_B^{\otimes n}$, we aim to understand whether a specific work transformation is feasible and if so what is the minimum size of the thermal bath to implement such a work transformation? We rigorously state these questions as Q1 and Q2 in subsections \ref{subsec:finitebath} and \ref{subsec:bath-rate}, respectively, and answer them in their corresponding subsections.}

\subsection{Finiteness of the bath: tighter constraints and negative entropy}
\label{subsec:finitebath}
In the previous two subsections we have elucidated the traditional statements of 
the first and second law of thermodynamics, as emerging in our resource theory. 
In particular, the second law is tight, if sufficiently large baths are allowed 
to be used. 

Here, we specifically look at the the second statement (achievability) 
of the second law in the presence of an explicitly given, finite bath $B$. It will 
turn out that usually, equality in the second law cannot be attained, only 
up to a certain loss due to the finiteness of the bath. We also discover 
a purely quantum effect whereby the system and the bath remain entangled after 
effecting a certain state transformation, allowing quantum engines to perform 
tasks impossible classically (i.e. with separable correlations). 
The question we want to address is the following refinement of the one answered 
in the previous subsection:

\begin{quote}
\textit{
%
\zk{
Q1: For a given bath $B$, regular sequences $\{\rho_{S^n}\}$ and $\{\sigma_{S^n}\}$ of the initial and final states of the product form, respectively, as well as real numbers $W_1,\ldots,W_c$ satisfying the second law, 
are there extensions $\sigma_{S^nB^n}$ of $\sigma_{S^n}$ 
forming a regular sequence of product states, such that the work 
transformation $\rho_{S^n} \ox \tau(\und{\beta})_B^{\ox n} \rightarrow \sigma_{S^nB^n}$
is feasible with the extracted works at rates $W_1,\ldots,W_c$?
}
%
}
\end{quote}

To answer it, we need the following \emph{extended phase diagram}.
For a give state $\sigma_S$ of the system $S$, and a bath $B$, 
\aw{it is defined as the following set:}
\begin{equation}
  \cP^{(1)}_{|\sigma_S} := \left\{ \bigl(\Tr\xi_B A_1^{(B)},\ldots,\Tr\xi_B A_c^{(B)},S(B|S)_\xi\bigr) : 
                                   \xi_{SB} \text{ state with } \Tr_B\xi_{SB}=\sigma_S \right\}.
\end{equation}
\aw{Furthermore its $n$-copy version, for a given product state}
$\sigma_{S^n}=(\sigma_{1})_{S_1}\ox\cdots\ox(\sigma_n)_{S_n}$,
\begin{align}
 \cP^{(n)}_{|\sigma_{S^n}} 
    :=  \left\{\! \bigl(\Tr\xi_{B^n} A_1^{(B^n)}\!,\!\ldots\!,\!\Tr\xi_{B^n} A_c^{(B^n)}\!\!,S(B^n|S^n)_\xi\bigr)\! : 
               \xi_{S^nB^n} \text{ state with } \Tr_{B^n}\xi_{S^nB^n}\!=\sigma_{S^n} \!\right\}\!.
\end{align}
\aw{These sets capture which combinations of charge value of the bath and conditional 
von Neumann entropy $S(B|S)$ of the bath conditional on the system are consistent 
with quantum mechanics. Note that extended phase diagram contains the previously 
discussed phase diagram of the bath, since we can choose $\xi_{SB}=\sigma_S\ox\xi_B$
as a product state, and then $S(B|S)_\xi = S(\xi_B)$, but correlations between 
the system and the bath can reduce the conditional entropy below this quantity,
in some cases not only to zero but to negative values.} 
Finally, define the \emph{conditional entropy phase diagram} as 
\begin{align}
  \overline{\cP}_{|s_0} := \overline{\cP}^{(1)}_{|s_0}
    := \left\{ \bigl(\und{a},s\bigr) : a_j = \Tr\xi_B A_j^{(B)},\,
                                       -\min\{s_0,S(\tau(\und{a}))\} \leq s \leq S(\tau(\und{a})) 
                                       \text{ for a state } \xi_B \right\},
\end{align}
and likewise its $n$-copy version $\overline{\cP}^{(n)}_{|ns_0}$, 
for a number $s_0$ (intended to be an entropy or entropy rate). 
These concepts are illustrated in Fig.~\ref{fig:extended-phase-diagram}.
The relation between the sets, and the name of the latter, are explained in the 
following lemma.

\begin{figure}[ht]
\begin{center}   
  \includegraphics[scale=0.4]{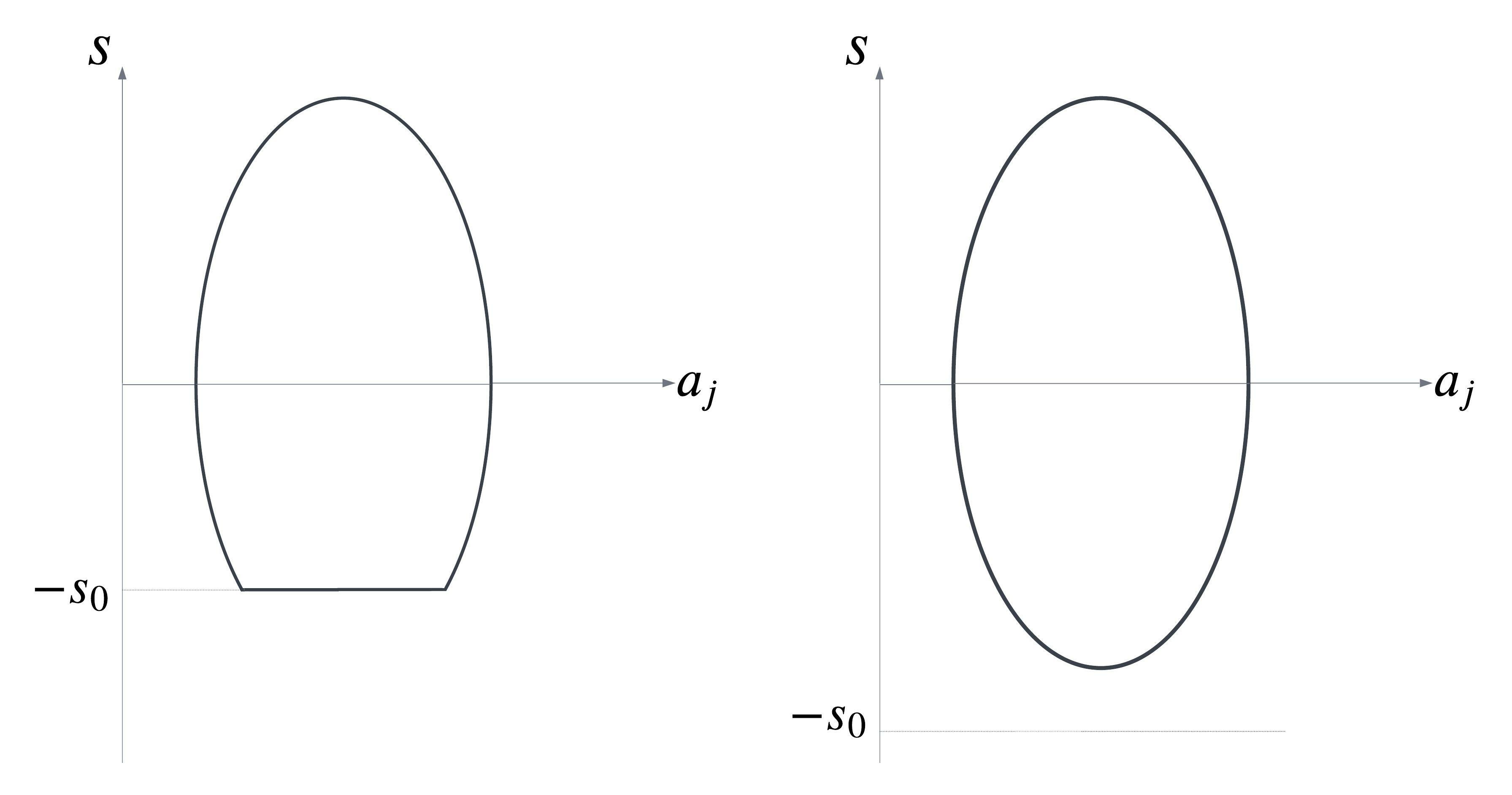}
  \end{center}
  \caption{Schematic of the extended phase diagram $\overline{\cP}_{|s_0}$. 
           Depending on the value of $s_0$, whether it is smaller or larger than 
           $\log|B|$, the diagram acquires either the left hand or the right hand 
           one of the above shapes.}
  \label{fig:extended-phase-diagram}
\end{figure}

\begin{lemma}
\label{lemma:extended-phase-diagram}
With the previous notation, we have:
\begin{enumerate}
  \item For all $k$, $\cP^{(k)}_{|\sigma_{S^k}} \subset \overline{\cP}^{(k)}_{|S(\sigma_{S^k})}$, 
        and the latter is a closed convex set.
  \item For all $k$, $\overline{\cP}^{(k)}_{|ks_0} = k \overline{\cP}^{(1)}_{|s_0}$.
  \item For a regular sequence $\{\sigma_{S^k}\}$ of product states with entropy rate
        $s_0=s(\{\sigma_{S^k}\})$, every point in $\overline{\cP}_{|s}$ is arbitrarily well 
        approximated by points in $\frac1k \cP^{(k)}_{|\sigma_{S^k}}$ for all sufficiently large $k$. 
        I.e., $\displaystyle{\overline{\cP}_{|s_0} = \lim_{k\to\infty} \frac1k \cP^{(k)}_{|S(\sigma_{S^k})}}$.        
\end{enumerate}
\end{lemma}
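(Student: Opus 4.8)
\textbf{Proof proposal for Lemma \ref{lemma:extended-phase-diagram}.}

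The plan is to establish the three items in order, since each builds on the previous one. For item 1, I would first observe that $\cP^{(k)}_{|\sigma_{S^k}}$ is the image of the compact convex set $\{\xi_{S^kB^k} : \Tr_{B^k}\xi=\sigma_{S^k}^{\otimes k}\}$ under the map $\xi \mapsto (\Tr\xi A_j^{(B^k)}, S(B^k|S^k)_\xi)$. The charge components are linear, hence fine, but conditional entropy is \emph{not} linear, only concave in $\xi$ for fixed marginal, so the image is not obviously convex; this is why the lemma only claims containment in the larger set $\overline{\cP}^{(k)}_{|S(\sigma_{S^k})}$. To prove that containment, I would check the two bounds on the conditional entropy: the upper bound $S(B^k|S^k)_\xi \leq S(B^k)_\xi \leq S(\tau(\und{a}^{(k)}))$ follows since conditioning cannot increase entropy and the Gibbs state maximises entropy at fixed charges; the lower bound $S(B^k|S^k)_\xi \geq -S(S^k)_\xi = -S(\sigma_{S^k}^{\otimes k})$ is the standard bound $S(B|S)\geq -S(S)$ (equivalently $|S(A)-S(B)| \le S(AB)$ applied to the purification, or directly from Araki–Lieb). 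One also needs the charge component to lie in the single-system zero-entropy diagram's charge projection, which is automatic. Closedness and convexity of $\overline{\cP}^{(k)}_{|ks_0}$ then follow because it is described by the linear constraint $a_j=\Tr\xi_B A_j$ (projection of a compact convex set, hence compact convex), intersected with the region between the graph of the strictly concave function $\und{a}\mapsto S(\tau(\und{a}))$ and its negation truncated at $-ks_0$; strict concavity gives a closed convex hypograph, and truncation by a constant preserves convexity.

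For item 2, the additivity $\overline{\cP}^{(k)}_{|ks_0} = k\overline{\cP}^{(1)}_{|s_0}$ reduces, just as in Lemma \ref{lemma:phase diagram properties} item 3, to two facts: the charge-projection of the achievable region scales linearly (same argument as $\overline{\cP}_0^{(k)} = k\overline{\cP}_0^{(1)}$), and the thermal state of $B^k$ is the $k$-th tensor power of the thermal state of $B$, so $S(\tau(\und{a}^{(k)})) = k\,S(\tau(\und{a}/k))$ when $\und{a}^{(k)}=k\und{a}$. The truncation level scales as $ks_0$ by construction. So this item is essentially bookkeeping once item 1 is in place.

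Item 3 is the substantive one and I expect it to be the main obstacle. I need to show that for a regular product-state sequence $\{\sigma_{S^k}\}$ with entropy rate $s_0$, the rescaled sets $\frac1k\cP^{(k)}_{|\sigma_{S^k}}$ converge to $\overline{\cP}_{|s_0}$. The ``$\subseteq$'' direction (every limit point of $\frac1k\cP^{(k)}_{|\sigma_{S^k}}$ lies in $\overline{\cP}_{|s_0}$) follows from item 1 together with item 2 and the fact that $\frac1k S(\sigma_{S^k})\to s_0$: a point in $\frac1k\cP^{(k)}_{|\sigma_{S^k}}$ lies in $\frac1k\overline{\cP}^{(k)}_{|S(\sigma_{S^k})} = \overline{\cP}^{(1)}_{|S(\sigma_{S^k})/k}$, and these sets converge to $\overline{\cP}_{|s_0}$ in Hausdorff distance as the truncation level $S(\sigma_{S^k})/k\to s_0$ (continuity of the truncation in the parameter, using closedness from item 1). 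The hard direction is ``$\supseteq$'': given a target point $(\und{a},s)\in\overline{\cP}_{|s_0}$, I must construct, for large $k$, a state $\xi_{S^kB^k}$ with marginal $\sigma_{S^k}$ on $S^k$, the right charge rates on $B^k$, and conditional entropy rate $\approx s$. The construction I would use: pick a Gibbs-type state $\tau(\und{a})_B$ (or a state realising $(\und{a},S(\tau(\und{a})))$) on $B$; if the desired $s$ is nonnegative, take $\xi_{S^kB^k}=\sigma_{S^k}\otimes\zeta_{B^k}$ with $\zeta_{B^k}$ a product state on the bath having charge rate $\und{a}$ and entropy rate $s$ (which exists by Lemma \ref{lemma:phase diagram properties} item 5, since $(\und{a},s)\in\overline{\cP}_B$), giving $S(B^k|S^k)=S(\zeta_{B^k})\approx ks$. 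The delicate case is $s<0$: here I need genuine system–bath entanglement, and the idea is to ``borrow'' up to $s_0$ worth of entropy from the $S^k$ system. Concretely, since $\{\sigma_{S^k}\}$ is a regular sequence with entropy rate $s_0$, on a typical subspace $\sigma_{S^k}$ is close to maximally mixed on roughly $2^{ks_0}$ dimensions; I would entangle that ``entropy reservoir'' with a corresponding subsystem of $B^k$ in a maximally entangled state (contributing $\approx -k s_0$ to $S(B^k|S^k)$ via the purification-type term) while putting the remaining bath degrees of freedom in a product state carrying the leftover charge and entropy, and interpolate/mix to hit any intermediate value of $s\in[-\min\{s_0,S(\tau(\und{a}))\},\,S(\tau(\und{a}))]$ by a continuity-of-conditional-entropy argument as in Lemma \ref{lemma:phase diagram properties}. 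Matching the charge rates exactly requires choosing the bath building blocks with the right expected charges, using the spectral-diameter slack; and one must check the marginal on $S^k$ really is $\sigma_{S^k}^{\otimes k}$ up to $o(k)$ corrections, which is where the typical-subspace approximation and a careful gentle-measurement/smoothing argument enter. I expect the technical heart to be making the entanglement-borrowing construction respect the exact $S^k$-marginal constraint while controlling all three rates simultaneously; everything else is continuity and the already-established Lemma \ref{lemma:phase diagram properties}.
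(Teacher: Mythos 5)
Your items 1 and 2 are essentially the paper's argument: the upper bound via $S(B^k|S^k)_\xi\le S(B^k)_\xi\le kS(\tau(\und{a}))$, the lower bound via Araki--Lieb (note you need \emph{both} consequences, $S(B^k|S^k)_\xi\ge -S(S^k)_\xi$ \emph{and} $S(B^k|S^k)_\xi\ge -S(B^k)_\xi\ge -kS(\tau(\und{a}))$, to get the $-\min$; you only wrote out the first, though your citation of Araki--Lieb covers both), and additivity by bookkeeping. The ``$\subseteq$'' half of item 3 and the $s\ge 0$ regime (tensor products $\sigma_{S^k}\ox\zeta_{B^k}$, then interpolation with $\sigma_{S^k}\ox\tau(\und{a})_B^{\ox k}$) also match the paper.

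The gap is in your construction for the negative-conditional-entropy boundary. Your plan is to flatten $\sigma_{S^k}$ to a near-maximally-mixed state on its typical subspace and maximally entangle that subspace with part of $B^k$. This fails the marginal constraint in a way that no ``gentle smoothing'' repairs: the maximally mixed state on the typical subspace of $\sigma_S^{\ox k}$ is \emph{not} close to $\sigma_S^{\ox k}$ in trace distance (typical eigenvalues of $\sigma_S^{\ox k}$ fluctuate by factors $2^{\pm O(\sqrt{k})}$ or worse around $2^{-kS}$, so the flattened state is at essentially constant trace distance from the original), and the set $\cP^{(k)}_{|\sigma_{S^k}}$ demands the $S^k$-marginal \emph{exactly}. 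The obvious fix --- purify $\sigma_{S^k}$ exactly into a subsystem of $B^k$ --- is dimensionally impossible whenever $\rank\sigma_{S^k}>|B|^k$ (e.g.\ full-rank $\sigma_S$ with $|S|>|B|$), even though the target conditional entropy $-\min\{s_0,S(\tau(\und{a}))\}$ is still legitimate in that regime. Finally, your charge-matching (``choose bath building blocks with the right expected charges'') is asserted, not constructed: the $B$-marginal of whatever entangled block you use is forced by the purification, so you cannot independently dial it to an arbitrary admissible $\und{a}$. The paper resolves all three difficulties at once with quantum state merging: it purifies $\sigma_{S^k}\ox\tau(\und{a})_B^{\ox k}$ into a reference $R^k$ (padded with auxiliary entanglement $\phi_{CR'}^{\ox k}$ to create an entropy gap in the relevant bipartitions), applies a Haar-random rank-one measurement on the reference so that the post-measurement state on $S^kB^kC^k$ retains the $S^k$- and $B^k$-marginals approximately, restricts to a typical subspace of $C^k$ of rank $\approx 2^{S(\sigma_{S^k})-kS(\tau(\und{a}))}$ to pin down the joint entropy, and then invokes Uhlmann's theorem to replace the output by a nearby state whose $S^k$-marginal is \emph{exactly} $\sigma_{S^k}$; Fannes' inequality and a concentration bound on the charge expectations finish the estimates. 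You correctly located where the difficulty sits, but the construction you propose would not go through as described.
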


\begin{proof}
1. We only have to convince ourselves that for a state
$\xi_{S^kB^k}$ with $\Tr_{B^k}\xi_{S^kB^k}=\sigma_{S^k}$, 
\[
  - \min\{S(\sigma_{S^k}),kS(\tau(\und{a}))\} \leq S(B^k|S^k)_\xi \leq k S(\tau(\und{a})), 
\]
where $\und{a}=(a_1,\ldots,a_c)$ with $a_i = \frac1k \Tr\xi_{B^k} A_i^{(B^k)}$.
The upper bound follows from subadditivity, since 
$S(B^k|S^k)_\xi \leq S(B^k)_\xi \leq k S(\tau(\und{a}))$.
The lower bound consists of two inequalities: first, by purifying $\xi$ to 
a state $\ket{\phi} \in S^kB^kR$ and strong subadditivity, 
$S(B^k|S^k)_\xi \geq S(B^k|S^kR)_\phi = - S(B^k)_\xi \geq - k S(\tau(\und{a}))$.
Secondly, $S(B^k|S^k)_\xi \geq -S(S^k)_\xi = -S(\sigma_{S^k})$.

2. Follows easily from the definition. 

3. It is enough to show that the points of the minimum entropy diagram 
\[
  \overline{\cP}_{\min|s} 
    := \left\{\bigl(\und{a},-\min\{s_0,S(\tau(\und{a}))\}\bigr) 
              : \Tr \xi_B A_j^{(B)} = a_j \text{ for a state } \xi_B \right\}
\]
can be approximated as claimed by an admissible $k$-copy state $\xi_{S^kB^k}$. This 
is because the maximum entropy diagram $\overline{\cP}_{\max}^{(k)}$ is realized 
by states $\vartheta_{S^kB^k} := \sigma_{S^k}\ox\tau(\und{a})_B^{\ox k}$, and by 
interpolating the states, i.e. $\lambda\xi + (1-\lambda)\vartheta$ for $0\leq\lambda\leq 1$,
we can realize the same charge values $\und{a}$ with entropies in the 
whole interval $[S(B^k|S^k)_\xi;k S(\tau(\und{a}))]$.

The approximation of $\overline{\cP}_{\min|s}$ can be proved invoking 
results from quantum Shannon theory, specifically \emph{quantum state merging}, 
the form of which that we need here is stated below as Lemma \ref{lemma:marge}. 
For this, consider a tuple $\und{a} \in \overline{\cP}_0$ and a purification 
$\ket{\Psi} \in S^kB^kR^k$ of the state $\vartheta_{S^kB^k} = \sigma_{S^k}\ox\tau(\und{a})_B^{\ox k}$, 
which can be chosen in such a way as to be a product state itself: 
$\ket{\Psi} = \ket{\Psi_1}_{S_1B_1R_1}\ox\cdots\ox\ket{\Psi_k}_{S_kB_kR_k}$. 
\aw{Our strategy is to find $\xi_{S^kB^k}$ as correlated as possible, 
in the sense that we would like to minimize its entropy, subject to the constraint 
that its marginal on $S^k$ is $\sigma_{S^k}$ and that on $B^k$ shares the 
charge values with $\tau(\und{a})_B^{\ox k}$. 
As we do not know an explicit construction that achieves this, we 
resort to a random one that succeeds with high probability, which is 
what quantum state merging facilitates.}

We distinguish two cases, depending on which of the entropies 
$S(\sigma_{S^k})$ and $kS\bigl(\tau(\und{a})_B\bigr)$ is the smaller.

\begin{enumerate}[{(i)}]
  \item $\mathbf{S(\sigma_{S^k}) \geq S\bigl(\tau(\und{a})_B\bigr)}$: We shall 
    construct $\xi_{S^kB^k}$ in such a way that $\xi_{S^k} = \sigma_{S^k}$ and 
    $\xi_{B^k} \approx \tau\bigl(\und{a}\bigr)_B^{\ox k}$. To this end, choose 
    a pure state $\phi_{CR'}$ with entanglement entropy 
    $S(\phi_C) = \frac1k S(\sigma_{S^k})-S\bigl(\tau(\und{a})_B\bigr) + \frac12 \epsilon$,
    and consider the state 
    $\widetilde{\Psi}^{S^kB^kC^kR^k{R'}^k} = \Psi_{S^kB^kR^k}\ox\phi_{CR'}^{\ox k}$.
    Now we apply state merging (Lemma \ref{lemma:marge}) twice to 
    this state (which is a tensor product of $k$ systems), with a random 
    rank-one projector $P$ on the combined system $R^k{R'}^k$: 
    first, by splitting the remaining parties $S^k : B^kC^k$, 
    and second by splitting them $B^k : S^kC^k$. 
    By construction, in both bipartitions it is the solitary system 
    ($S^k$ and $B^k$, resp.) that has the smaller entropy by at least $\frac12\epsilon k$, 
    showing that the post-measurement state $\widetilde{\xi}(P)_{S^kB^kC^k}$ with 
    high probability approximates the marginals of $\vartheta_{S^kB^k}$ on $S^k$ 
    and on $B^k$ simultaneously.
    Choose a typical subspace projector $\Pi$ of $\phi_C^{\ox k}$ with 
    $\log \rank \Pi \leq S(\sigma_{S^k})-k S\bigl(\tau(\und{a})_B\bigr) + \epsilon k$,
    and let
    \[
      \ket{\xi(P)}_{S^kB^kC^k} := \frac1c (\1_{S^kB^k}\Pi_{C^k})\ket{\widetilde{\xi}(P)},
    \]
    with a normalization constant $c$. 
    Merging and properties of the typical subspace imply that for sufficiently large $k$, 
    \begin{align}
      \label{eq:xiP-S}
      \frac12 \left\| \xi(P)_{S^k} - \sigma_{S^k} \right\|_1            &\leq \epsilon, \\
      \label{eq:xiP-B}
      \frac12 \left\| \xi(P)_{B^k} - \tau(\und{a})_B^{\ox k} \right\|_1 &\leq \epsilon.
    \end{align}
    Now, we invoke Uhlmann's theorem applied to purifications of $\sigma_{S^k}$ and  
    of $\xi(P)_{S^kB^k}$, together with the well-known relations between fidelity 
    and trace norm applied to Eq.~(\ref{eq:xiP-S}), 
    to obtain a state $\xi_{S^kB^k}$ with $\xi_{S^k} = \sigma_{S^k}$ and 
    $\frac12 \left\| \xi(P)_{S^kB^k} - \xi_{S^kB^k} \right\|_1 \leq \sqrt{\epsilon(2-\epsilon)}$,
    thus by Eq.~(\ref{eq:xiP-B})
    \[
      \frac12 \left\| \xi_{B^k} - \tau(\und{a})_B^{\ox k} \right\|_1 \leq \epsilon + \sqrt{\epsilon(2-\epsilon)}.
    \]
    From the latter bound it follows that
    \[
      \left| \frac1k \tr \xi_{B^k}A_j^{(B^k)} - a_j \right| 
               \leq \|A_{B_j}\| \left(\epsilon + \sqrt{\epsilon(2-\epsilon)}\right).
    \]
    It remains to bound the conditional entropy: 
    \[\begin{split}
      \frac1k S(B^k|S^k)_\xi &=    \frac1k S\bigl(\xi_{S^kB^k}\bigr) - \frac1k S(\xi_{S^k})           \\
                         &\leq \frac1k S\bigl(\xi(P)_{S^kB^k}\bigr) - \frac1k S(\sigma_{S^k}) 
                                  + \left( \epsilon + \sqrt{\epsilon(2-\epsilon)} \right)\log(|S||B|) 
                                  + h\left( \epsilon + \sqrt{\epsilon(2-\epsilon)} \right)           \\
                         &\leq \frac1k \log \rank\Pi - \frac1k S(\sigma_{S^k}) 
                                  + \left( \epsilon + \sqrt{\epsilon(2-\epsilon)} \right)\log(|S||B|) 
                                  + h\left( \epsilon + \sqrt{\epsilon(2-\epsilon)} \right)           \\
                         &\leq \frac1k \left( S(\sigma_{S^k}) - kS\bigl(\tau(\und{a})\bigr) \right) 
                                  - \frac1k S(\sigma_{S^k}) 
                                  + \left( 2\epsilon + \sqrt{\epsilon(2-\epsilon)} \right)\log(|S||B|) 
                                  + h\left( \epsilon + \sqrt{\epsilon(2-\epsilon)} \right)            \\
                         &=    -S\bigl(\tau(\und{a})\bigr) 
                                  + \left( 2\epsilon + \sqrt{\epsilon(2-\epsilon)} \right)\log(|S||B|) 
                                  + h\left( \epsilon + \sqrt{\epsilon(2-\epsilon)} \right) ,
    \end{split}\]
    where in the second line we have used the Fannes inequality on the continuity 
    of the entropy \cite{Fannes1973,Audenaert2007}, with the binary entropy
    $h(x)=-x\log x-(1-x)\log(1-x)$;
    in the third line that $\xi(P)_{S^kB^k}$ has rank at most $\rank \Pi$;
    and in the fourth line the upper bound on the latter rank by construction. 

  \item $\mathbf{S(\sigma_{S^k}) < S\bigl(\tau(\und{a})_B\bigr)}$: We shall 
    construct $\xi_{S^kB^k}$ such that $\xi_{S^k} = \sigma_{S^k}$ and 
    $\tr\xi_{B^k}A_j^{(B^k)} \approx \tr\tau\bigl(\und{a}\bigr)_B A_{B_j}$ for 
    all $j=1,\ldots,c$. Here, choose a pure state $\phi_{CR'}$ with entanglement entropy 
    $S(\phi_C) = \epsilon$, and define
    $\widetilde{\Psi}^{S^kB^kC^kR^k{R'}^k} = \Psi_{S^kB^kR^k}\ox\phi_{CR'}^{\ox k}$.
    Now we apply state merging (Lemma \ref{lemma:marge}) to
    this state (which is a tensor product of $k$ systems), with a random 
    rank-one projector $P$ on the combined system $R^k{R'}^k$,
    by splitting the remaining parties $S^k : B^kC^k$, which ensures that
    $S^k$ has the smaller entropy by at least $\epsilon k$, 
    showing that the post-measurement state $\widetilde{\xi}(P)_{S^kB^kC^k}$ with 
    high probability approximates the marginal of $\vartheta_{S^kB^k}$ on $S^k$.
    Proceed as before with a typical subspace projector $\Pi$ of $\phi_C^{\ox k}$ 
    such that  
    $\log \rank \Pi \leq S(\sigma_{S^k})-k S\bigl(\tau(\und{a})_B\bigr) + \epsilon k$,
    and let
    \(
      \ket{\xi(P)}_{S^kB^kC^k} := \frac1c (\1_{S^kB^k}\Pi_{C^k})\ket{\widetilde{\xi}(P)},
    \)
    with a normalization constant $c$. 
    Merging and properties of the typical subspace thus imply that for sufficiently large $k$, 
    \begin{equation}
      \label{eq:xiPt-S}
      \frac12 \left\| \xi(P)_{S^k} - \sigma_{S^k} \right\|_1 \leq \epsilon. 
    \end{equation}
    Next we need to look at the charge values of $\xi(P)_{B^k}$. 
    Note that the expectation $\EE_P \xi(P)_{B^k}$ is approximately 
    equal to $\EE_P \widetilde{\xi}(P)_{B^k} = \tau(\und{a})_B^{\ox k}$.
    It follows from \cite[{Lemma III.5}]{Hayden2006}, that if $k$ is sufficiently 
    large, then with high probability 
    \begin{equation}
      \label{eq:xiPt-B-A}
      \left| \tr \bigl(\xi(P)_{B^k} - \tau(\und{a})_B^{\ox k}\bigr) A_j^{(B^k)} \right| \leq \|A_{B_j}\| \epsilon 
                      \quad \text{for all } j=1,\ldots,c. 
    \end{equation}
    So we just focus on a good instance of $P$, where both 
    Eqs.~(\ref{eq:xiPt-S}) and (\ref{eq:xiPt-B-A}) hold. Now we proceed as
    in the first case to find a state $\xi_{S^kB^k}$ with $\xi_{S^k} = \sigma_{S^k}$ and 
    $\frac12 \left\| \xi(P)_{S^kB^k} - \xi_{S^kB^k} \right\|_1 \leq \sqrt{\epsilon(2-\epsilon)}$, 
    using Uhlmann's theorem. Thus, as before we find
    \[
      \left| \frac1k \tr \xi_{B^k}A_j^{(B^k)} - a_j \right| 
                 \leq \|A_{B_j}\| \left(\epsilon + \sqrt{\epsilon(2-\epsilon)}\right).
    \]
    Regarding the conditional entropy, we have quite similarly as before,
    \[\begin{split}
      \frac1k S(B^k|S^k)_\xi &=    \frac1k S\bigl(\xi_{S^kB^k}\bigr) - \frac1k S(\xi_{S^k})           \\
                             &\leq \frac1k S\bigl(\xi(P)_{S^kB^k}\bigr) - \frac1k S(\sigma_{S^k}) 
                                  + \left( \epsilon + \sqrt{\epsilon(2-\epsilon)} \right)\log(|S||B|) 
                                  + h\left( \epsilon + \sqrt{\epsilon(2-\epsilon)} \right)           \\
                             &\leq \frac1k \log 2^{\epsilon k} - \frac1k S(\sigma_{S^k}) 
                                  + \left( \epsilon + \sqrt{\epsilon(2-\epsilon)} \right)\log(|S||B|) 
                                  + h\left( \epsilon + \sqrt{\epsilon(2-\epsilon)} \right)           \\
                             &\leq -\frac1k S(\sigma_{S^k}) 
                                  + \left( 2\epsilon + \sqrt{\epsilon(2-\epsilon)} \right)\log(|S||B|) 
                                  + h\left( \epsilon + \sqrt{\epsilon(2-\epsilon)} \right).
    \end{split}\]
\end{enumerate}

Since in both cases we knew the conditional entropy to be always
$\geq - \frac1k \min\left\{ S(\sigma_{S^k}),k S\bigl(\tau(\und{a})\bigr) \right\}$,
this concludes the proof.
\end{proof}

\medskip
\begin{lemma}[Quantum state merging \cite{HOW:negative-Nature,HOW:negative-CMP}]
\label{lemma:marge}
Given a pure product state 
$\Psi_{A^nB^nC^n}=(\Psi_1)_{A_1B_1C_1}\ox\cdots\ox(\Psi_n)_{A_nB_nC_n}$,
such that $S(\Psi_{A^n})-S(\Psi_{B^n}) \geq \epsilon n$, consider a Haar 
random rank-one projector $P$ on $C^n$. Then, it holds that the post-measurement state 
\[
  \psi(P)_{A^nB^n} = \frac{1}{\tr\Psi_{C^n}P}\tr_{C^n}\Psi(\1_{A^nB^n}\ox P)
\]
satisfies $\frac12\| \psi(P)-\Psi_{A^nB^n}\|_1 \leq \epsilon$, 
except with arbitrarily small probability for sufficiently large $n$.
\hfill$\blacksquare$
\end{lemma}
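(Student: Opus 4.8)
The statement to prove is Lemma~\ref{lemma:marge}, the quantum state merging primitive in the form needed here.

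\medskip

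The plan is to derive this version of state merging from the standard one-shot/asymptotic merging protocol of \cite{HOW:negative-Nature,HOW:negative-CMP}, specialized to the i.i.d.-like situation of a product state $\Psi_{A^nB^nC^n} = \bigotimes_i (\Psi_i)_{A_iB_iC_i}$ where the factors need not be identical but have a uniform entropy gap $S(\Psi_{A^n}) - S(\Psi_{B^n}) \geq \epsilon n$. First I would recall that merging works by having the $C$-party (who holds the purifying system, here playing the role of ``Charlie'' who wants to send his share to the $B$-party) perform a random measurement: concretely, draw a Haar-random unitary on $C^n$, apply it, and measure in a fixed basis; conditioning on an outcome corresponds to applying a Haar-random rank-one projector $P$ on $C^n$, exactly as in the statement. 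The content of the merging theorem is that, provided the number of qubits Charlie keeps after compression is at least the coherent information $I(A^n\rangle B^n)_\Psi = S(B^n) - S(A^nB^n)$ plus a sublinear correction, the residual state is close to the target; in the rank-one projector formulation, decoupling of $C^n$'s ``discarded'' part from the reference is what guarantees that the post-measurement state on $A^nB^n$ is (after a local isometry on $B$, which we may absorb since we only claim closeness up to such freedom, or more precisely the statement as written already asserts closeness of $\psi(P)$ to $\Psi_{A^nB^n}$ itself) within trace distance $\epsilon$ of $\Psi_{A^nB^n}$, except with vanishing probability over $P$.

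\medskip

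The key steps, in order, would be: (1) invoke the decoupling theorem / one-shot merging bound of \cite{HOW:negative-CMP} for the state $\Psi_{A^nB^nC^n}$ with reference $R^n$ (a purification), which says that for a Haar-random rank-one projector $P$ on a subsystem of $C^n$ of dimension $2^m$ with $m \geq S(C^n|A^nB^n) + O(\sqrt n)$ — equivalently keeping $\log\dim C^n - m \geq S(C^n) - S(C^n|A^nB^n) + O(\sqrt n)$, and using $S(C^n) = S(A^nB^n)$ and $S(C^n|A^nB^n) = S(A^nB^nC^n) - S(A^nB^n) = -S(A^nB^n) = -S(C^n)$ for the pure global state, so the relevant gap is controlled by the coherent information — the discarded register decouples; (2) translate the hypothesis $S(\Psi_{A^n}) - S(\Psi_{B^n}) \geq \epsilon n$ into a statement that merging succeeds with a rank-one projector (i.e.\ Charlie keeps zero qubits, $m = \log\dim C^n$) precisely when the coherent information $I(A^n\rangle B^n) = S(B^n) - S(A^nB^n) = S(B^n) - S(C^n) = S(B^n) - S(A^nB^n)$; for a product state $S(A^nB^n) = \sum_i S(\Psi_{(A_iB_i)})$ and additivity of all the entropies reduces the condition to a per-site gap summing to at least $\epsilon n$, matching the hypothesis (here one uses that $\Psi_{A^nB^nC^n}$ pure implies $S(\Psi_{A^nB^n}) = S(\Psi_{C^n})$ and $S(\Psi_{A^n}) = S(\Psi_{B^nC^n})$, etc., and a short computation shows $S(\Psi_{A^n}) - S(\Psi_{B^n})$ is exactly the quantity that must be nonnegative with a linear margin); (3) apply the Haar-measure concentration estimate (Levy's lemma, as in \cite{HOW:negative-CMP}) to upgrade ``in expectation over $P$'' to ``except with probability exponentially small in $n$''; (4) bound the trace-distance error, collecting the $O(\sqrt n / n) = o(1)$ corrections, so that for all sufficiently large $n$ the bound $\tfrac12\|\psi(P) - \Psi_{A^nB^n}\|_1 \leq \epsilon$ holds.

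\medskip

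The main obstacle I anticipate is bookkeeping the exact form of the entropic condition: the cited merging theorems are usually phrased for $n$ copies of a single state $\Psi$, whereas here the factors $(\Psi_i)_{A_iB_iC_i}$ differ, so I would need the version of merging valid for arbitrary tensor products with a uniform gap — this follows from the decoupling/one-shot bounds combined with the fact that smooth min- and max-entropies of a tensor product are additive up to $O(\sqrt n)$ fluctuations (asymptotic equipartition), but care is needed to see that the single inequality $S(\Psi_{A^n}) - S(\Psi_{B^n}) \geq \epsilon n$ is the right hypothesis and that it indeed implies the min-entropy condition controlling the random measurement. A secondary subtlety is that the statement claims closeness to $\Psi_{A^nB^n}$ itself rather than merely to a state locally equivalent on $B^n$; this is legitimate because with a rank-one projector on all of $C^n$ and a suitable choice of measurement basis (or, equivalently, by Uhlmann's theorem applied after decoupling) the recovery isometry can be taken to be the identity, but I would state this carefully. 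The remaining steps — Levy's lemma, Fannes-type continuity to convert fidelity bounds into trace-distance bounds, and absorbing sublinear terms — are routine and I would not belabour them.
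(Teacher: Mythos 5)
The paper does not actually prove Lemma~\ref{lemma:marge}: it is imported from the state-merging literature and closed with a box, so there is no internal proof to compare against line by line; the question is whether your derivation is sound. Your machinery --- a Haar-random rank-one measurement on $C^n$, the decoupling theorem applied to the measurement (qc) channel, the asymptotic equipartition property to handle non-identical tensor factors, and L\'evy-type concentration --- is the standard and correct route, and it is the route consistent with how the authors handle the analogous rank-$D$ computation elsewhere.

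However, your resolution of what you call the ``secondary subtlety'' contains a genuine error, and it sits at the crux of the lemma. The post-measurement state $\psi(P)_{A^nB^n}$ is \emph{pure}: for $P=\proj{\phi}$ it is $(\1_{A^nB^n}\ox\bra{\phi})\ket{\Psi}$ up to normalisation. By contrast, $\Psi_{A^nB^n}$ has entropy $S(\Psi_{C^n})$, which is generically $\Theta(n)$; hence $\frac12\|\psi(P)_{A^nB^n}-\Psi_{A^nB^n}\|_1$ is close to $1$, not to $0$, and no choice of measurement basis or ``identity recovery isometry'' can repair this. Equivalently, decoupling of the outcome from $E=A^nB^n$ would require $H_{\min}(C^n|A^nB^n)\gtrsim 0$, whereas purity forces $S(C^n|A^nB^n)=-S(C^n)<0$. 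What the hypothesis $S(\Psi_{A^n})-S(\Psi_{B^n})\geq\epsilon n$ actually buys is decoupling from $B^n$ alone, via the clean identity $S(\Psi_{A^n})-S(\Psi_{B^n})=S(C^nB^n)_\Psi-S(B^n)_\Psi=S(C^n|B^n)_\Psi\geq\epsilon n$, whence $H_{\min}^{\delta}(C^n|B^n)=\Omega(n)$ by the AEP and the decoupling bound yields $\frac12\|\psi(P)_{B^n}-\Psi_{B^n}\|_1\leq\epsilon$ for all but a vanishing fraction of $P$. This marginal form is also exactly how the lemma is used in the proof of Lemma~\ref{lemma:extended-phase-diagram}, where it is invoked twice, once for each bipartition, precisely because each application controls only the marginal of the solitary, lower-entropy party. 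The correct course is therefore to read the conclusion as $\frac12\|\psi(P)_{B^n}-\Psi_{B^n}\|_1\leq\epsilon$ and prove that; as you have set it up, your argument aims at a statement that is false in general. (A minor further point: your chain of equalities around the coherent information in step (2) is circular and never isolates the conditional entropy that the decoupling bound actually requires, namely $S(C^n|B^n)$.)
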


\medskip
\begin{remark}\normalfont
While we have seen that the upper boundary of the extended phase diagram 
$\overline{\cP}^{(k)}_{|S(\sigma_{S^k})}$ is exactly realized by points 
in $\cP^{(k)}_{|\sigma_{S^k}}$, namely those corresponding to the tensor product 
states $\sigma_{S^k} \ox \tau(\und{a})_B^{\ox k}$, 
it seems unlikely that we can achieve the analogous thing for the lower boundary: 
this would entail finding, for every (sufficiently large) $k$ a tensor product
state, or a block tensor product state, $\xi_{S^kB^k}$ with prescribed charge vector 
$\und{a}$ on $B^k$, and $S(B^k|S^k)_\xi = -\min\{kS\bigl(\tau(\und{a})\bigr),S(\sigma_{S^k})\}$. 

Now, for concreteness, consider the case that 
$kS\bigl(\tau(\und{a})\bigr) \leq S(\sigma_{S^k})$, so that the conditional entropy 
aimed for is $S(B^k|S^k)_\xi = -k S\bigl(\tau(\und{a})_B\bigr)$, which is the value 
of a purification of $\tau(\und{a})_B^{\ox k}$. In particular, it would mean that 
$S(\xi_{B^k}) = k S\bigl(\tau(\und{a})_B\bigr)$, and so -- recalling the
charge values and the maximum entropy principle -- it would follow that 
$\xi_{B^k} = \tau(\und{a})_B^{\ox k}$. 
However, from the equality conditions in strong subadditivity \cite{SSA-eq}, 
this in turn would imply that $\xi_{S^kB^k}$ is a probabilistic mixture of 
purifications of $\tau(\und{a})_B^{\ox k}$ whose restrictions to $S^k$ 
are pairwise orthogonal. This would clearly put constraints on the spectrum 
of $\sigma_{S^k}$ that are not generally met. 

In the other case that $kS\bigl(\tau(\und{a})\bigr) > S(\sigma_{S^k})$, the
conditional entropy should be $S(B^k|S^k)_\xi = - S(\sigma_{S^k})$, and 
since $\xi_{S^k} = \sigma_{S^k}$, this would necessitate a pure state 
$\xi_{S^kB^k}$. Looking at the proof of Lemma \ref{lemma:extended-phase-diagram}, 
however, we see that it leaves quite a bit of manoeuvring space, so it
may or may not be possible to satisfy all charge constraints
$\tr \xi_{B^k}A_j^{(B^k)} = a_j$ ($j=1,\ldots,c$).
\end{remark}

\medskip
Coming back to our question, if a work transformation 
$\rho_{S^n} \ox \tau(\und{\beta})_B^{\ox n} \rightarrow \sigma_{S^nB^n}$ is
feasible for regular sequences on the left hand side, 
by the first law this implies that 
\begin{align*}
  s(\{\sigma_{S^nB^n}\}) &= s(\{\rho_{S^n}\}) + S(\tau(\und{\beta})) \text{ and} \\ 
  W_j                    &= -\Delta A_{S_j}-\Delta A_{B_j}  \\
                         &= a_j(\{\rho_{S^n}\}) - a_j(\{\sigma_{S^n}\})
                            + a_j(\{\tau(\und{\beta})_{B^n}\}) - a_j(\{\sigma_{B^n}\}). 
\end{align*}
When $\sigma_{S^n}$ and the $W_j$ are given, this constrains the possible 
states $\sigma_{S^nB^n}$ as follows: for each $n$, 
\begin{align*}
  \frac1n S(B^n|S^n)_\sigma              &\approx S(\tau(\und{\beta})) - \Delta s_S, \\
  \frac1n \Tr \sigma_{B^n} A_{B_j}^{(n)} &\approx \Tr \tau(\und{\beta})_B A_{B_j} 
                                                  - \Delta A_{S_j} - W_j,\quad \text{for all } j=1,\ldots,c.
\end{align*}
Since by Lemma \ref{lemma:extended-phase-diagram} the left hand sides converge to the 
components of a point in $\overline{\cP}_{|s(\{\sigma_{S^n}\})}$, meaning that a necessary 
condition for the feasibility of the work transformation in question is that 
\begin{equation}\begin{split}
  \label{eq:conditional-point}
  (\und{a},t) \in \overline{\cP}_{|s(\{\sigma_{S^n}\})}, \text{ with } 
  a_j         &:=  \Tr \tau(\und{\beta})_B A_{B_j} - \Delta A_{S_j} - W_j, \\
  t           &:=  S(\tau(\und{\beta})) - \Delta s_S.
\end{split}\end{equation}
Again by Lemma \ref{lemma:extended-phase-diagram}, this is equivalent to 
all $a_j$ to be contained in the set of joint quantum expectations of the 
observables $A_{B_j}$, and 
\[
  -\min\left\{ s(\{\sigma_{S^n}\}),S\bigl(\tau(\und{a})\bigr) \right\} \leq t \leq S\bigl(\tau(\und{a})\bigr).
\]
The following theorem shows that this is also sufficient, when we allow blockings 
of the asymptotically many systems. 

\begin{figure}[ht]
  \begin{center}
    \includegraphics[width=10cm,height=8cm]{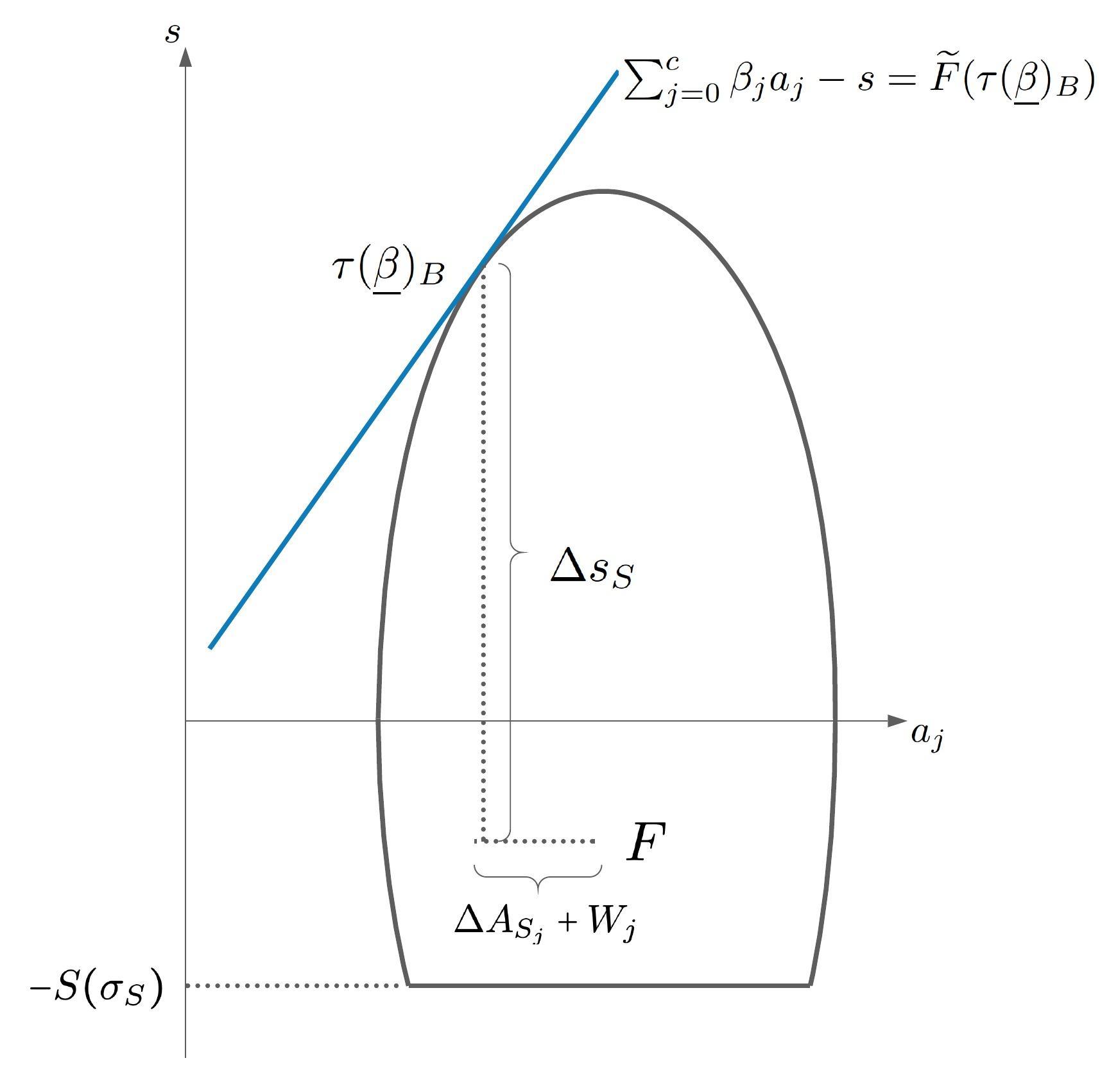}
  \end{center}
  \caption{State change of the bath for a given work transformation under the extraction of
           $j$-type work $W_j$, viewed in the extended phase diagram of the bath, which 
           initially is in the thermal state $\tau(\und{\beta})_B$, the blue line at the 
           corresponding point in the diagram representing the tangent hyperplane of the 
           diagram. The final states $\{\sigma_{S^nB^n}\}$ give rise to the point $F$ in 
           the extended diagram, whose charge values are those of $\{\sigma_{B^n}\}$, 
           while the entropy is $\frac1n S(B^n|S^n)_\sigma$.}
  \label{fig:second-law-finite}
\end{figure} 

\begin{theorem}[Second Law with fixed bath]
\label{thm:second-law-finite-bath}
For arbitrary regular sequences $\rho_{S^n}$ and $\sigma_{S^n}$ of product states, 
a given bath $B$, and any real numbers $W_j$, 
if there exists a regular sequence of block product states 
$\sigma_{S^nB^n}$ with $\Tr_{B^n}\sigma_{S^nB^n} = \sigma_{S^n}$, such that 
there is a work transformation 
$\rho_{S^n} \ox \tau(\und{\beta})_B^{\ox n} \rightarrow \sigma_{S^nB^n}$
with accompanying extraction of $j$-type work at rate $W_j$,
then Eq. (\ref{eq:conditional-point}) defines a point
$(\und{a},t) \in \overline{\cP}_{|s(\{\sigma_{S^n}\})}$.

Conversely, assuming additionally that $\sigma_{S^n} = \sigma_S^{\otimes n}$ is an i.i.d.~state,
if Eq. (\ref{eq:conditional-point}) defines a point
$(\und{a},t) \in \overline{\cP}_{|S(\sigma_S)}^0$ in the interior of the 
extended phase diagram, then for every $\epsilon>0$
there is a work transformation 
$\rho_{S^n} \ox \tau(\und{\beta})_B^{\ox n} \rightarrow \sigma_{S^nB^n}$
with block product states $\sigma_{S^nB^n}$ such that 
$\Tr_{B^n}\sigma_{S^nB^n} = \sigma_{S^n}$, and with accompanying 
extraction of $j$-type work at rate $W_j\pm\epsilon$.
This is illustrated in Fig.~\ref{fig:second-law-finite}.
\end{theorem}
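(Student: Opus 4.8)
\textit{Proof proposal.} I would prove the two implications separately. The necessity direction is essentially a repackaging of the first law together with Lemma~\ref{lemma:extended-phase-diagram}: if the work transformation $\rho_{S^n}\ox\tau(\und{\beta})_B^{\ox n}\to\sigma_{S^nB^n}$ is feasible with block‑product $\sigma_{S^nB^n}$ and $j$‑type work rates $W_j$, then the block‑product version of Theorem~\ref{thm:first-law} (licensed by the remark following Theorem~\ref{Asymptotic equivalence theorem}) gives $s(\{\sigma_{S^nB^n}\})=s(\{\rho_{S^n}\})+S(\tau(\und{\beta}))$ and $W_j=-\Delta A_{S_j}-\Delta A_{B_j}$. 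Rearranging the latter yields $a_j(\{\sigma_{B^n}\})=\Tr\tau(\und{\beta})_B A_{B_j}-\Delta A_{S_j}-W_j=a_j$, and writing $\tfrac1n S(B^n|S^n)_\sigma=\tfrac1n S(\sigma_{S^nB^n})-\tfrac1n S(\sigma_{S^n})$ and taking the limit gives $t=S(\tau(\und{\beta}))-\Delta s_S$. Hence $(\und{a},t)$ is a limit of points $\tfrac1n\bigl(\Tr\sigma_{B^n}A_{B_j}^{(n)},S(B^n|S^n)_\sigma\bigr)$, each of which lies in $\tfrac1n\overline{\cP}^{(n)}_{|S(\sigma_{S^n})}=\overline{\cP}^{(1)}_{|\frac1n S(\sigma_{S^n})}$ by parts~1 and~2 of Lemma~\ref{lemma:extended-phase-diagram}; since $\tfrac1n S(\sigma_{S^n})\to s(\{\sigma_{S^n}\})$ and the defining inequalities of the extended phase diagram depend continuously on this parameter, the limit lies in $\overline{\cP}_{|s(\{\sigma_{S^n}\})}$, as claimed.

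For the sufficiency direction, fix $\epsilon>0$. Because $(\und{a},t)$ lies in the interior $\overline{\cP}^0_{|S(\sigma_S)}$, there is a margin $\eta>0$ with $-\min\{S(\sigma_S),S(\tau(\und{a}))\}+\eta\leq t\leq S(\tau(\und{a}))-\eta$ and with $\und{a}$ in the interior of the joint‑expectation set of the $A_{B_j}$. I would then run the construction from the proof of part~3 of Lemma~\ref{lemma:extended-phase-diagram} --- state merging plus typical‑subspace truncation for the lower boundary, and the product state $\vartheta_{S^kB^k}=\sigma_S^{\ox k}\ox\tau(\und{a})_B^{\ox k}$ for the upper boundary --- to obtain, for all sufficiently large $k$, two states on $S^kB^k$ with the same $B^k$‑charge vector $\tilde{\und{a}}^{(k)}\to\und{a}$, with $S^k$‑marginal $\sigma_S^{\ox k}$, and with conditional‑entropy rates respectively strictly below $t$ and strictly above $t$ (here the margin $\eta$, which is independent of $k$, dominates the $O(\epsilon_k)$ merging errors). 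Since linear interpolation between these two states leaves the $B^k$‑charge vector fixed at $\tilde{\und{a}}^{(k)}$ while moving the conditional entropy continuously, an intermediate‑value argument produces a state $\xi^{(k)}_{S^kB^k}$ with $\Tr_{B^k}\xi^{(k)}_{S^kB^k}=\sigma_S^{\ox k}$, with $\tfrac1k S(B^k|S^k)_{\xi^{(k)}}=t$ \emph{exactly}, and with $\bigl|\tfrac1k\Tr\xi^{(k)}_{B^k}A_j^{(B^k)}-a_j\bigr|\leq\delta_k\to0$.

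Finally I would assemble the target sequence as the block‑product state $\sigma_{S^nB^n}=(\xi^{(k)}_{S^kB^k})^{\ox\nu}$ for $n=k\nu$, filling the $O(1)$ remainder systems arbitrarily. Its $S^n$‑marginal is $\sigma_S^{\ox n}=\sigma_{S^n}$ by construction, and its entropy rate is $\tfrac1k S(\xi^{(k)}_{S^kB^k})=\tfrac1k S(B^k|S^k)_{\xi^{(k)}}+S(\sigma_S)=t+S(\sigma_S)=s(\{\rho_{S^n}\})+S(\tau(\und{\beta}))$, so the entropy‑rate hypothesis of the achievability part of Theorem~\ref{thm:first-law} is met. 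Hence, assuming as throughout this section that the spectral diameters of the battery observables are large enough, there are regular product battery sequences realising $\rho_{S^n}\ox\tau(\und{\beta})_B^{\ox n}\to\sigma_{S^nB^n}$ by almost‑commuting unitaries, with accompanying $j$‑type work $W_j'=\Tr\tau(\und{\beta})_B A_{B_j}-\Delta A_{S_j}-\tilde{a}^{(k)}_j=W_j+(a_j-\tilde{a}^{(k)}_j)$, which differs from $W_j$ by at most $\delta_k$; choosing $k$ with $\delta_k<\epsilon$ finishes the argument.

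I expect the crux to be the middle step: producing a block of \emph{fixed} finite size $k$ whose conditional‑entropy rate equals $t$ exactly rather than merely asymptotically. The block size is frozen before the limit $n\to\infty$ is taken, so any residual mismatch in the per‑block entropy would survive in the rate and violate the exact entropy balance that the AET, Theorem~\ref{Asymptotic equivalence theorem}, demands. This is precisely what forces the combination of the approximate, merging‑based realisation of the minimum‑entropy boundary with the exact maximum‑entropy state $\vartheta_{S^kB^k}$ and an intermediate‑value argument, and it is where the interior hypothesis is essential, guaranteeing that $t$ really does lie strictly between the two realised conditional‑entropy rates for all large $k$.
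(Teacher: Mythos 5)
Your proposal is correct and follows essentially the same route as the paper: necessity via the first law together with Lemma~\ref{lemma:extended-phase-diagram}, and sufficiency by using point~3 of that lemma to produce a $k$-block state with $S^k$-marginal $\sigma_S^{\ox k}$, conditional-entropy rate strictly below $t$ and charges near $\und{a}$, mixing with $\sigma_S^{\ox k}\ox\tau(\und{a})_B^{\ox k}$ to reach conditional-entropy rate exactly $t$, taking block tensor powers, and invoking the first law and the AET. The only differences are cosmetic: you re-run the merging construction rather than just citing the lemma, and you make the paper's ``mix in a small fraction'' step explicit as an intermediate-value argument (your two interpolation endpoints have only approximately, not exactly, equal $B^k$-charge vectors, but this is harmless since any convex combination of them still lies within $\delta_k$ of $\und{a}$).
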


\begin{proof}
We have already argued the necessity of the condition. It remains to show its 
sufficiency. 
Using Lemma \ref{lemma:extended-phase-diagram}, this is not hard: Namely, 
by its point 3, for sufficiently large $k$, $(\und{a},t) \in \overline{\cP}_{|s}$ 
is $\epsilon$-approximated by $\frac1k \cP^{(k)}_{|\sigma_S^{\ox k}}$, i.e. there 
exists a $\sigma_{S^kB^k}$ with $\tr_{B^k} \sigma_{S^kB^k} = \sigma_S^{\ox k}$ 
with $\frac1k S(B^k|S^k)_\sigma \leq t-\epsilon$ and $\frac1k \tr \sigma_{B^k} A_j^{(B^k)} \approx a_j$
for all $j=1,\ldots,c$. By mixing $\sigma$ with a small fraction of 
$\bigl(\tau(\und{a})_B\ox\sigma_S\bigr)^{\ox k}$, we can in fact assume that 
$\frac1k S(B^k|S^k)_\sigma = t$ while preserving $\frac1k \tr \sigma_{B^k} A_j^{(B^k)} \approx a_j$.
Now our target block product states will be 
$\sigma_{S^nB^n} := \bigl(\sigma_{S^kB^k}\bigr)^{\ox \frac{n}{k}}$ for $n$ a multiple of $k$.
By construction, this sequence has the same entropy rate as 
the initial regular sequence of product states $\rho_{S^n} \ox \tau(\und{\beta})_B^{\ox n}$, 
so by the first law, Theorem \ref{thm:first-law}, and the 
AET, Theorem \ref{Asymptotic equivalence theorem}, there is indeed a 
corresponding work transformation with $j$-type work extracted 
equal to $W_j\pm\epsilon$.
\end{proof}

\medskip
\begin{remark}\normalfont
One might object that tensor power target states are not general enough 
in Theorem \ref{thm:second-law-finite-bath}, as we 
had observed in Section \ref{sec:resource-theory} that such states do not
generate the full phase diagram $\overline{\mathcal{P}}$ of the system $S$. 
However, by considering blocks of $\ell$ systems $S^\ell$, we can 
apply the theorem to block tensor power target states 
$\sigma_{S^n} = \bigl(\sigma_1\ox\cdots\ox\sigma_\ell\bigr)^{\ox \frac{n}{\ell}}$, 
and these latter are in fact a rich enough class to exhaust the entire 
phase diagram $\overline{P}$, when $\ell \geq \dim S$ 
(point 5 of Lemma \ref{lemma:phase diagram properties}).

More generally, we can allow as target \emph{uniformly regular} sequences of 
product states $\sigma_{S^n}$, by which we mean the following strengthening 
of the condition in Definition \ref{definition:regular}. 
Denoting $B_{N+1}^{N+n} := B_{N+1}\ldots B_{N+n}$, we require that for all 
$\epsilon > 0$ and uniformly for all $N$, it holds that for sufficiently large $n$, 
\[
  \left| a_j - \frac1n \Tr \sigma_{B_{N+1}^{N+n}} A_j^{(n)} \right| 
                                                       \leq \epsilon \text{ for all } j=1,\ldots,c, \text{ and } 
  \left| s - \frac1n S(\sigma_{B_{N+1}^{N+n}}) \right| \leq \epsilon. 
\]
\end{remark}

\medskip
\begin{remark}\normalfont
\aw{We conclude this subsection with a reflexion on the peculiar role 
of entanglement played in the quantum advantage implied by 
Theorem \ref{thm:second-law-finite-bath}. Indeed, whereas in many 
quantum tasks entanglement is the fuel requisite at the beginning to perform 
super-classically, here it is the possibility of leaving the system and 
bath in an entangled state which allows to reach points in the extended
phase diagram outside the usual phase diagram, i.e. with negative conditional 
entropy $S(B|S)$. Note that no separable state can achieve this, as by the 
result of \cite{HoroHoro1994,NielsenKempe} then $S(B|S)\geq 0$.}

\aw{Evidently, demonstrating such an effect would require phenomenal 
control of the quantum degrees of freedom of both $S$ and $B$,
so in a macroscopic system that would presumably be impossible. But
we believe it not completely beyond the bounds of the recent 
demonstrations of thermal machines in mesoscopic and nanoscopic systems.}
While we cannot indicate any concrete references, a well-designed experiment would 
be feasible with any of the contemporary platform for quantum simulations (QS), such as

\begin{itemize}
\item {\bf Superconducting qubits}, used by Google \cite{Google} or D-Wave \cite{D-Wave}, 
      are often employed as digital QSs (cf. \cite{Parao}) and/or in circuit QED 
      systems \cite{Wallraff};
\item {\bf Ultracold atoms}, which offer analog quantum simulation, can be realized 
      in the continuum or in optical lattices \cite{LSA2017}. They are very flexible 
      and they allow to simulate complex Hubbard models, as well as spin systems;
\item {\bf Trapped ions} can also be used as perfect analog or digital QSs \cite{Monroe53,MonroeRMP}.
      They typically simulate spin-$\frac12$ systems, but very recently a qudit quantum 
      computer/simulator was realized with ions \cite{Ringbauer};
\item {\bf Rydberg atoms} are atoms where the electron has been excited to a high 
      principal quantum number, and which are trapped in optical tweezers. They mimic 
      spin systems with long-range interactions \cite{Lukin51,LukinScars,Browaeys};
\item {\bf Light and cavity materials}: Quantum simulators based on cavity QED take 
      advantage of the coupling between quantum system and the coherent light field of 
      the cavity in which such system has been placed. Experiments are 
      mainly conducted in the scope of Jaynes-Cummings and Dicke models \cite{Schlawin_2022}.
      Recent studies concern also engineering materials entirely from light with 
      resulting photon-photon interactions \cite{Clark_2020,carusotto_2020,Ma_2019,Schine_2016};
\item {\bf Twistronics systems}: Twistronics deals with twisted bilayer graphene or 
      other two-dimensional materials \cite{Pablo,Dima}. For small ``magic'' angle, such 
      systems lead to periodic Moir\'e patterns at a length scale much larger than the 
      typical scale of condensed matter systems: in this sense, they can themselves be 
      considered as condensed matter quantum simulators of condensed matter \cite{Kennes21}.
      Twisted bilayer materials can, however, also be mimicked by ultracold atoms in a 
      two-dimensional lattice with synthetic dimensions \cite{tymek};
\item {\bf Polaritons} are especially useful for non-equilibrium systems and quantum 
      hydrodynamics simulation, as well as relativistic effects thanks to dual (half light 
      half particle) nature of the polaritonic quasi-particles \cite{Basov2021,hubener2021,JBloch}.
\end{itemize}
\end{remark}

\subsection{Tradeoff between thermal bath rate and work extraction}  
\label{subsec:bath-rate}
Here we consider a different take on the question of the work deficit due
to finiteness of the bath. Namely, we still consider a given fixed finite 
bath system $B$, but now as which state transformations and associated 
generalized works are possible when for each copy of the subsystem $S$,
$R\geq 0$ copies of $B$ are present. It is clear what that means when 
$R$ is an integer, but below we shall give a meaning to this rate as a real number. 
We start off with the observation that ``large enough bath'' in 
Theorem \ref{asymptotic second law} can be taken to mean $B^R$, for the given 
elementary bath $B$ and sufficiently large integer $R$.

\begin{theorem}
\label{thm:large-R-2nd-law}
For arbitrary regular sequences of product states, 
$\rho_{S^n}$ and $\sigma_{S^n}$, and any real numbers $W_j$ with  
$\sum_{j=1}^c \beta_j W_j < -\Delta\widetilde{F}_S$, 
there exists an integer $R\geq 0$ and a regular sequence of product states 
$\sigma_{S^nB^{nR}}$ with $\Tr_{B^{nR}}\sigma_{S^nB^{nR}} = \sigma_{S^n}$, such that 
there is a work transformation 
$\rho_{S^n} \ox \tau(\und{\beta})_B^{\ox nR} \rightarrow \sigma_{S^nB^{nR}}$
with accompanying extraction of $j$-type work at rate $W_j$.
\end{theorem}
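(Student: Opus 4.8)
The plan is to recognise this as the achievability half of the Second Law, Theorem~\ref{asymptotic second law}, read with the ``elementary bath'' of that proof taken to be the given system $B$, so that the required ``large enough bath'' is literally $B^R$ for a suitably large integer $R$. Concretely, I would aim for a work transformation $\rho_{S^n}\otimes\tau(\und{\beta})_B^{\otimes nR}\to\sigma_{S^n}\otimes\xi_{B^{nR}}$ in which system and bath stay uncorrelated, i.e.\ $\sigma_{S^nB^{nR}}:=\sigma_{S^n}\otimes\xi_{B^{nR}}$, for a regular sequence of product states $\xi_{B^{nR}}$ yet to be chosen. By the first law, Theorem~\ref{thm:first-law}, and the AET, Theorem~\ref{Asymptotic equivalence theorem}, such a transformation is feasible as soon as the charge and entropy rates of $\xi$ balance the books; expressed per single copy of $B$, these conditions are exactly Eq.~(\ref{eq:R-bath-assumptions}) with $b$ replaced by $B$, namely
\begin{equation*}
  s(\{\xi_{B^{nR}}\}) = S(\tau(\und{\beta})_B) - \tfrac1R\Delta s_S, \qquad
  a_j(\{\xi_{B^{nR}}\}) = \Tr\tau(\und{\beta})_B A_{B_j} - \tfrac1R\bigl(\Delta A_{S_j}+W_j\bigr).
\end{equation*}
Thus everything reduces to showing that, for all sufficiently large integers $R$, the point $(\und{a},s)$ on the right-hand side lies in the phase diagram $\overline{\cP}_B$ of a single bath system, for then Lemma~\ref{lemma:phase diagram properties} yields the desired $\xi$.

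For the geometric step I would reuse the argument in the proof of Theorem~\ref{asymptotic second law} verbatim. As $R$ grows, $(\und{a},s)$ runs along the straight segment issuing from the thermal point $(\und{a}_{\und{\beta}},s_{\und{\beta}})=\bigl(\Tr\tau(\und{\beta})_B A_{B_1},\ldots,\Tr\tau(\und{\beta})_B A_{B_c},S(\tau(\und{\beta})_B)\bigr)$ of $B$ in the fixed direction $-(\und{v},\Delta s_S)$, where $v_j:=\Delta A_{S_j}+W_j$, reaching the parameter value $\tfrac1R$. The phase diagram $\overline{\cP}_B$ is contained in the closed half-space $\sum_j\beta_j a_j - s\geq\widetilde{F}(\tau(\und{\beta})_B)$ and touches the bounding hyperplane only at the thermal point, by strict concavity of the upper boundary (point~6 of Lemma~\ref{lemma:phase diagram properties}). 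The hypothesis $\sum_j\beta_j W_j<-\Delta\widetilde{F}_S$ says precisely that $\Delta s_S-\sum_j\beta_j v_j=-\Delta\widetilde{F}_S-\sum_j\beta_j W_j>0$, i.e.\ that $-(\und{v},\Delta s_S)$ points strictly into the open half-space, so the segment enters the interior at once. To upgrade this to membership in $\overline{\cP}_B$ I would check the two boundary conditions at charge vector $\und{a}$: the floor $s\geq 0$ holds for large $R$ since $s\to s_{\und{\beta}}\geq 0$; and the ceiling $s\leq S(\tau(\und{a}))$ follows from the first-order expansion $S(\tau(\und{a}_{\und{\beta}}-\und{v}/R))=s_{\und{\beta}}-\tfrac1R\sum_j\beta_j v_j+O(1/R^2)$, which uses the thermodynamic identity $\nabla_{\und{a}}S(\tau(\und{a}))\big|_{\und{a}_{\und{\beta}}}=\und{\beta}$, combined with the fact that $\sum_j\beta_j v_j-\Delta s_S=\Delta\widetilde{F}_S+\sum_j\beta_j W_j$ is a fixed negative constant, so for $R$ large the required inequality $\tfrac1R\bigl(\sum_j\beta_j v_j-\Delta s_S\bigr)\leq O(1/R^2)$ is satisfied. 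Fixing now also $R\geq\dim B$, points~4 and~5 of Lemma~\ref{lemma:phase diagram properties} give $R(\und{a},s)\in\cP^{(R)}_B=\overline{\cP}^{(R)}_B=R\overline{\cP}^{(1)}_B$ realised by a genuine product state $\rho_1\otimes\cdots\otimes\rho_R$ on $B^R$, and $\xi_{B^{nR}}$ is then its $n$-fold tensor power.

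Assembling: fix $R$ large enough for the half-space and boundary estimates and for $R\geq\dim B$, set $\sigma_{S^nB^{nR}}:=\sigma_{S^n}\otimes\xi_{B^{nR}}$, and read off from the first law that the accompanying $j$-type work is extracted at rate exactly $W_j$ --- with no $\epsilon$-loss, unlike in Theorem~\ref{thm:second-law-finite-bath}, because we land on an exact point of $\overline{\cP}_B$ rather than an approximation. The step I expect to need the most care is the ceiling estimate $s\leq S(\tau(\und{a}))$ near the thermal point, as it pits the second-order curvature of $S(\tau(\cdot))$ against the strict first-order margin supplied by the hypothesis; a minor accompanying subtlety is that $\und{a}$ must be an achievable charge vector for $B$ at all, which is automatic for large $R$ because $\und{a}\to\und{a}_{\und{\beta}}$ lies in the relative interior of $B$'s (convex, compact) charge region, the thermal state $\tau(\und{\beta})_B$ being of full rank.
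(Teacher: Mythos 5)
Your proposal is correct and follows essentially the same route as the paper: the paper's proof of this theorem is a one-line pointer back to the achievability part of Theorem~\ref{asymptotic second law}, whose argument (the point of Eq.~(\ref{eq:R-bath-assumptions}) moving along a segment from the thermal point into the open half-space $\sum_j\beta_j a_j - s > \widetilde{F}(\tau(\und{\beta})_B)$, hence into $\overline{\cP}_B$ for large $R$, realized by a product state via Lemma~\ref{lemma:phase diagram properties}) is exactly what you reconstruct. Your explicit verification of the floor and ceiling conditions near the thermal point is a legitimate fleshing-out of the paper's more geometric "points into the interior" assertion, not a departure from it.
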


\begin{proof}
This was already shown in the achievability part of Theorem \ref{asymptotic second law}.
\end{proof}

\medskip
To give meaning to a rational rate $R = \frac{\ell}{k}$, group the systems of $S^n$,
for $n=\nu k$, into blocks of $k$, which we denote $\widetilde{S}=S^k$, 
and consider $\rho_{S^n} \equiv \rho_{\widetilde{S}^\nu}$ as a $\nu$-party state,
and likewise $\sigma_{S^n} \equiv \sigma_{\widetilde{S}^\nu}$. For each 
$\widetilde{S}=S^k$ we assume $\ell$ copies of the thermal bath, 
$\tau(\und{\beta})_B^{\otimes \ell} = \tau(\und{\beta})_{\widetilde{B}}$,
with $\widetilde{B} = B^\ell$. If $\{\rho_{S^n}\}$ and $\{\sigma_{S^n}\}$ are
regular sequences of product states, then evidently so are 
$\{\rho_{\widetilde{S}^\nu}\}$ and $\{\sigma_{\widetilde{S}^\nu}\}$. With this definition of the rate, the question that we address in this subsections is:

\zk{
\begin{quote}
\textit{
Q2: For a given bath $B$, regular sequences $\{\rho_{S^n}\}$ and $\{\sigma_{S^n}\}$ of the initial and final states of the product form, respectively, as well as real numbers $W_1,\ldots,W_c$ satisfying 
$\sum_j\beta_j W_j = -\Delta\widetilde{F}_S-\delta$, $\delta \geq 0$, 
what is the infimum over all rates $R = \frac{\ell}{k}$
such that there is a work transformation 
\[
  \rho_{S^n} \ox \tau(\und{\beta})_{B^{nR}} 
  \equiv \rho_{\widetilde{S}^\nu} \ox \tau(\und{\beta})_{\widetilde{B}}^{\ox \nu\ell} 
             \rightarrow 
             \sigma_{\widetilde{S}^\nu \widetilde{B}^{\nu\ell}}
             \equiv \sigma_{S^nB^{nR}},
\]
with the extracted works at rates $W_1,\ldots,W_c$ and the final state satisfying 
$\Tr_{\widetilde{B}^{\nu\ell}} \sigma_{\widetilde{S}^\nu \widetilde{B}^{\nu\ell}} 
= \sigma_{\widetilde{S}^\nu}$. 
}
\end{quote}
}

\bigskip
%
%
%
%
%

\zk{We first observe that if $S(\rho_{S^n})=S(\sigma_{S^n})$, then $\sum_j\beta_j W_j = -\Delta\widetilde{F}_S$ can hold without using any thermal bath, which follows from Eq.~(\ref{work expansion formula}).}
That is, the thermal bath is not necessary  for
the work transformation and extracting work if the entropy of the work 
system does not change. 
Conversely, the role of the thermal bath is precisely to facilitate changes 
of entropy in the work system.

To answer the above question about the minimum bath rate $R^*$, 
we first show the following lemma.

\begin{lemma}
Consider regular sequences of product states, 
$\rho_{S^n}$ and $\sigma_{S^n}$, and real numbers $W_j$, and assume 
that for large enough rate $R$ there is a work transformation 
$\rho_{S^n} \otimes \tau(\und{\beta})_B^{\ox nR} \rightarrow \sigma_{S^nB^{nR}}$,
with $\sigma_{S^n}$ as the reduced final state on the work system, 
and works $W_1,\ldots,W_c$ are extracted. Then there is another work transformation 
$\rho_{S^n} \otimes \tau(\und{\beta})_B^{\ox nR} \rightarrow \sigma_{S^n}\ox\xi_{B^{nR}}$,
in which the final state of the work system and the thermal bath are uncorrelated,
$\xi_{B^{nR}}$ is a regular sequence of product states, 
and the same works $W_1,\ldots,W_c$ are extracted.
\end{lemma}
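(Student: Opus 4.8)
The plan is to reduce the claim to the first law (Theorem~\ref{thm:first-law}) plus the structure of the bath's phase diagram (Lemma~\ref{lemma:phase diagram properties}), exploiting the fact that a work transformation only ``sees'' the entropy rates and the per-charge rates of $S$ and of $B$, not the correlations between them. First I would extract from the hypothesis, via the first law applied to the given transformation $\rho_{S^n}\ox\tau(\und{\beta})_B^{\ox nR}\to\sigma_{S^nB^{nR}}$, the two facts I need: entropy conservation $s(\{\sigma_{S^nB^{nR}}\}) = s(\{\rho_{S^n}\}) + R\,S(\tau(\und{\beta})_B)$, and $W_j = -\Delta A_{S_j}-\Delta A_{B_j}$ with $\Delta A_{B_j} = a_j(\{\sigma_{B^{nR}}\}) - R\,\Tr\tau(\und{\beta})_B A_{B_j}$. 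I then set $\und{\alpha} := \bigl(a_1(\{\sigma_{B^{nR}}\}),\ldots,a_c(\{\sigma_{B^{nR}}\})\bigr)$ and $t := s(\{\sigma_{S^nB^{nR}}\}) - s(\{\sigma_{S^n}\}) = R\,S(\tau(\und{\beta})_B) - \Delta s_S$, which is the asymptotic conditional-entropy rate $\lim_n \tfrac1n S(B^{nR}|S^n)_\sigma$.

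Next I would build the decorrelated bath state: the goal is a regular sequence of \emph{product} states $\xi_{B^{nR}}$ whose charge rates are $\und{\alpha}$ and whose entropy rate is $t$, i.e.\ realising the point $(\und{\alpha},t)$. The heart of the matter is to check that this point lies in the (rescaled) bath phase diagram. Now $\und{\alpha}/R$ is a legitimate charge vector of $B$ because it is a limit of charge vectors of genuine states and the zero-entropy diagram is closed; moreover $t \le s(\{\sigma_{B^{nR}}\}) \le R\,S(\tau(\und{\alpha}/R)_B)$ by subadditivity together with the maximum-entropy characterisation of the generalized Gibbs state (as in the proof of Lemma~\ref{lemma:phase diagram properties}, item 3); and $t\ge 0$ precisely when $R\,S(\tau(\und{\beta})_B)\ge \Delta s_S$, which is automatic in the ``large enough $R$'' regime we are assuming. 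Given these bounds, item 2 of Lemma~\ref{lemma:phase diagram properties} places $(\und{\alpha}/R,t/R)$ in $\overline{\cP}^{(1)}_B$, and items 4 and 5 then produce, for each sufficiently large $n$, an actual product state $\xi_{B^{nR}}$ on $B^{nR}$ with charge values $\approx n\und{\alpha}$ and entropy $\approx nt$; the resulting sequence $\{\xi_{B^{nR}}\}$ is regular.

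Finally I would close the argument with the converse half of the first law. The sequences $\rho_{S^n}\ox\tau(\und{\beta})_B^{\ox nR}$ and $\sigma_{S^n}\ox\xi_{B^{nR}}$ are regular sequences of product states with the same entropy rate, namely $s(\{\sigma_{S^nB^{nR}}\})$, since $s(\{\sigma_{S^n}\})+t$ equals that by construction; hence Theorem~\ref{thm:first-law}, via the AET (Theorem~\ref{Asymptotic equivalence theorem}), supplies pure battery states turning this into a bona fide work transformation. Its extracted works are $-\Delta A_{S_j}-\Delta A_{B_j}$ with the \emph{same} $\Delta A_{S_j}$ (the initial and final $S$-marginals are unchanged) and the \emph{same} $\Delta A_{B_j}$ (because $\xi_{B^{nR}}$ was manufactured to share the charge rates $\und{\alpha}$ of $\sigma_{B^{nR}}$), so we recover exactly $W_1,\ldots,W_c$. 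The one place that calls for genuine care rather than bookkeeping is the pair of inequalities $0\le t\le R\,S(\tau(\und{\alpha}/R)_B)$ that certify that $(\und{\alpha},t)$ belongs to the bath phase diagram; the nonnegativity is exactly where the ``sufficiently large $R$'' hypothesis gets used.
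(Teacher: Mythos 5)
Your proposal is correct and follows essentially the same route as the paper: identify the target point for a decorrelated bath state (same charge rates as $\sigma_{B^{nR}}$, entropy rate fixed by global entropy conservation), certify that it lies in the bath's phase diagram for large enough $R$, realise it by a regular sequence of product states via Lemma~\ref{lemma:phase diagram properties}, and conclude with the converse direction of the first law. The only cosmetic difference is how membership in the diagram is certified — you argue directly via subadditivity plus the maximum-entropy property of $\tau(\und{\alpha}/R)$, whereas the paper tracks the entropy surplus $\delta'$ from the second-law bookkeeping and appeals to convexity — but these amount to the same observation.
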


\begin{proof}
Assuming that $\rho_{S^n} \otimes \tau(\und{\beta})_B^{\ox nR} \rightarrow \sigma_{S^nB^{nR}}$ is a work transformation, the second law implies that $\sum_j\beta_j W_j = -\Delta\widetilde{F}_s-\delta$ for some $\delta \geq 0$, and \zk{we obtain the following coordinates for the bath system for $0 \leq \delta' \leq \delta$:}
\begin{equation}\label{eq: coordinates_P_1}
\begin{split}
  s(\{\sigma_{B^{nR}}\})   &= S(\tau(\und{\beta})_B) - \frac1R \Delta s_S+\frac{\delta'}{R}, \\
  a_j(\{\sigma_{B^{nR}}\}) &= \Tr \tau(\und{\beta})_B A_{B_j} - \frac1R (\Delta A_{S_j} + W_j) 
                                                             \quad \text{for all } j=1,\ldots,c.
\end{split}\end{equation}
\zk{
To obtain the the first equality, which is the expansion of $\Delta s_S +\Delta s_B=\delta'$, we use the fact that $\sum_j \beta_j W_j 
                       = -\Delta\widetilde{F}_S\underbrace{-\Delta\widetilde{F}_B-\Delta s_S -\Delta s_B }_{-\delta}$, i.e. $\Delta\widetilde{F}_B+\Delta s_S +\Delta s_B =\delta$ which follows from Eq.~(\ref{work expansion formula}). Due to positivity of the entropy rate change, i.e. $\Delta s_S +\Delta s_B \geq 0$  from Eq.~(\ref{eq: positive Delta_SB}) and 
                $\Delta \widetilde{F}_B \geq 0$  from Eq.~(\ref{eq:positive Delta_F_B}), we infer that $0 \leq \underbrace{\Delta s_S +\Delta s_B}_{\delta'} \leq \delta$.
The second equality, which is the expansion  of $\Delta A_{B_j}+\Delta A_{S_j}=-W_j$, follows from the first law, Theorem \ref{thm:first-law}, 
and the AET, Theorem \ref{Asymptotic equivalence theorem}.}
If $R$ is large enough, due to the convexity of the phase diagram of the thermal bath $\overline{\mathcal{P}}_B^{(1)}$,  the following coordinates belong  to the phase diagram as well
\begin{equation}\label{eq: coordinates_P_2}
\begin{split}
  s(\{\xi_{B^{nR}}\})   &= S(\tau(\und{\beta})_B) - \frac1R \Delta s_S, \\
  a_j(\{\xi_{B^{nR}}\}) &= \Tr \tau(\und{\beta})_B A_{B_j} - \frac1R (\Delta A_{S_j} + W_j) 
                                                             \quad \text{for all } j=1,\ldots,c.
\end{split}\end{equation}
\zk{We can observe this in Fig.~\ref{fig:optimal-rate}; the new coordinates have the same charge values, but the entropy is $\frac{\delta'}{R}$ smaller than the entropy of the coordinates in Eq.~(\ref{eq: coordinates_P_1}). Therefore, as long as  $S(\tau(\und{\beta})_B) - \frac1R \Delta s_S \geq 0$, the new coordinates are inside the phase diagram as well.}
Hence, due to points 3 and 5 of Lemma~\ref{lemma:phase diagram properties}, 
there is a tensor product state $\xi_{B^{nR}}$ with coordinate of 
Eq.~(\ref{eq: coordinates_P_2}) on $\overline{\cP}_B^{(1)}$. Hence the first law, 
Theorem~\ref{thm:first-law}, implies that the desired transformation exists, and
works $W_1,\ldots,W_c$ are extracted.
\end{proof}

\medskip
\begin{theorem}
\label{thm:optimal-rate}
For regular sequences of product states, $\rho_{S^n}$ and $\sigma_{S^n}$, 
and real numbers $W_j$ satisfying $\sum_j\beta_j W_j = -\Delta\widetilde{F}_s-\delta$, 
let $R^*$ be the infimum of rates such that there is a work transformation 
$\rho_{S^n} \otimes \tau(\und{\beta})_B^{\ox nR} \rightarrow \sigma_{S^n}\ox\xi_{B^{nR}}$
under which works $W_1,\ldots,W_c$ are extracted, and 
$\xi_{B^{nR}}$ is a regular sequence of product states. 

Then, this minimum $R^*$ is achieved for a state $\xi_{B^{nR}}$ on the boundary of the 
phase diagram $\overline{\mathcal{P}}_B$ of the thermal bath. Indeed, it is the point 
where the line given by Eq.~(\ref{eq:R-bath-assumptions}) intersects the boundary of 
the phase diagram; see Fig.~\ref{fig:optimal-rate}.
Equivalently, it is the smallest $R$ such that the point in 
Eq.~(\ref{eq:R-bath-assumptions}) is contained in $\overline{\mathcal{P}}_B$. 

For $\delta \ll 1$, the minimum rate can be written as
\begin{equation}
  \label{eq:rate-vs-heatcapacity}
  R \approx -\frac{1}{2\delta} \sum_{ij} \frac{\partial\beta_j}{\partial a_i}(\Delta A_{S_i}+W_i)(\Delta A_{S_j}+W_j),
\end{equation}
where $\Delta A_{S_j} = a(\{\sigma_{S^n}\}) - a(\{\rho_{S^n}\})$.
\end{theorem}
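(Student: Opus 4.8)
The plan is to leverage the preceding lemma, which already establishes that we may restrict attention to final bath states $\xi_{B^{nR}}$ that are uncorrelated from $\sigma_{S^n}$ and form a regular sequence of product states. Granting this reduction, the first law (Theorem~\ref{thm:first-law}) together with the AET (Theorem~\ref{Asymptotic equivalence theorem}) say that a work transformation $\rho_{S^n} \ox \tau(\und{\beta})_B^{\ox nR} \to \sigma_{S^n} \ox \xi_{B^{nR}}$ extracting works $W_1,\ldots,W_c$ is feasible exactly when the point with the coordinates of Eq.~(\ref{eq:R-bath-assumptions}) lies in the elementary-bath phase diagram $\overline{\mathcal{P}}_b$; point~5 of Lemma~\ref{lemma:phase diagram properties} then produces a regular product sequence realising that point, and the entropy accounting is the one already used in the proof of Theorem~\ref{asymptotic second law}. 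Thus the whole question becomes: for which $R$ does Eq.~(\ref{eq:R-bath-assumptions}) describe a point of $\overline{\mathcal{P}}_b$?

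First I would note that, as $\tfrac1R$ ranges over $[0,\infty)$, the points of Eq.~(\ref{eq:R-bath-assumptions}) trace a ray starting at the thermal point $(\und{a}_{\und{\beta}},s_{\und{\beta}})$ in the direction $-(\Delta A_{S_1}+W_1,\ldots,\Delta A_{S_c}+W_c,\Delta s_S)$. Unpacking the hypothesis $\sum_j\beta_jW_j=-\Delta\widetilde{F}_S-\delta$ gives $\sum_j\beta_j(\Delta A_{S_j}+W_j)=\Delta s_S-\delta$, so the value of the linear functional $\sum_j\beta_ja_j-s$ increases at rate $\delta>0$ along the ray; that is, the ray enters the open half-space $\sum_j\beta_ja_j-s>\widetilde{F}(\tau(\und{\beta})_b)$, which contains $\overline{\mathcal{P}}_b$ and meets it only at the thermal point (this last fact is exactly the strict-concavity argument from the proof of Theorem~\ref{asymptotic second law}). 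Since $\overline{\mathcal{P}}_b$ is compact and convex and the ray departs from its boundary, the set of parameters $\tfrac1R$ for which the point lies in $\overline{\mathcal{P}}_b$ is a closed bounded interval $[0,\tfrac1{R^*}]$, and the point at $R=R^*$ sits on $\partial\overline{\mathcal{P}}_b$. This is precisely the geometric characterisation asserted: $R^*$ is the smallest $R$ for which the point of Eq.~(\ref{eq:R-bath-assumptions}) belongs to $\overline{\mathcal{P}}_b$, equivalently the point where that line meets the boundary (Fig.~\ref{fig:optimal-rate}).

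For the closed-form expression valid when $\delta\ll1$, I would expand around the thermal point, near which the relevant piece of $\partial\overline{\mathcal{P}}_b$ is the smooth max-entropy surface $\und{a}\mapsto S(\tau(\und{a}))$. Writing $\tau(\und{a})=\frac1Z e^{-\sum_j\beta_jA_j}$ gives $S(\tau(\und{a}))=\sum_j\beta_ja_j+\log Z$, whence the familiar identities $\partial S/\partial a_i=\beta_i$ and Hessian $\partial^2 S/\partial a_i\partial a_j=\partial\beta_j/\partial a_i$ (symmetric and negative definite by concavity of $S$). Putting $v_j:=\Delta A_{S_j}+W_j$, $\und{a}=\und{a}_{\und{\beta}}-\tfrac1R\und{v}$, $s=s_{\und{\beta}}-\tfrac1R\Delta s_S$, the exit condition $s=S(\tau(\und{a}))$ reads to leading orders
\[
  -\frac1R\Delta s_S = -\frac1R\sum_j\beta_jv_j + \frac{1}{2R^2}\sum_{ij}\frac{\partial\beta_j}{\partial a_i}v_iv_j ,
\]
and substituting $\sum_j\beta_jv_j=\Delta s_S-\delta$ cancels the $\Delta s_S$ terms, leaving $-\delta/R=\tfrac{1}{2R^2}\sum_{ij}(\partial\beta_j/\partial a_i)v_iv_j$, i.e.\ Eq.~(\ref{eq:rate-vs-heatcapacity}); the neglected terms are of higher order in $1/R$, hence in $\delta$, so the formula is the leading asymptotics as $\delta\to0$ (and it correctly forces $R\to\infty$, the right-hand side being positive since the Hessian is negative definite).

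I expect the main obstacle to be the geometric step, specifically verifying that for small $\delta$ the ray leaves $\overline{\mathcal{P}}_b$ through the smooth max-entropy surface, so that the exit condition is indeed $s=S(\tau(\und{a}))$ with a well-defined Hessian, rather than through some lower-dimensional face or a nonsmooth portion of the boundary. Here one uses that $R^*\to\infty$ as $\delta\to0$, so the exit point stays in a small neighbourhood of the thermal point where $S(\tau(\cdot))$ is smooth and strictly concave; for larger $\delta$ the optimum is still pinned down by the first (purely convex-geometric) characterisation, but the closed form~(\ref{eq:rate-vs-heatcapacity}) need not hold.
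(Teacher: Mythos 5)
Your proposal is correct and follows essentially the same route as the paper's proof: feasibility is reduced via the first law and AET to membership of the point of Eq.~(\ref{eq:R-bath-assumptions}) in $\overline{\mathcal{P}}_B$, convexity of the diagram gives the interval structure in $1/R$ with the optimum on the boundary, and the second-order Taylor expansion of $S(\tau(\und{a}))$ around the thermal point yields Eq.~(\ref{eq:rate-vs-heatcapacity}). Your explicit cancellation of the $\Delta s_S$ terms and your remark that $R^*\to\infty$ as $\delta\to 0$ keeps the exit point on the smooth max-entropy surface are slightly more careful than the paper's terse treatment, but they are refinements of the same argument rather than a different one.
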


\begin{figure}[ht]
  \begin{center}
    \includegraphics[width=10cm,height=8cm]{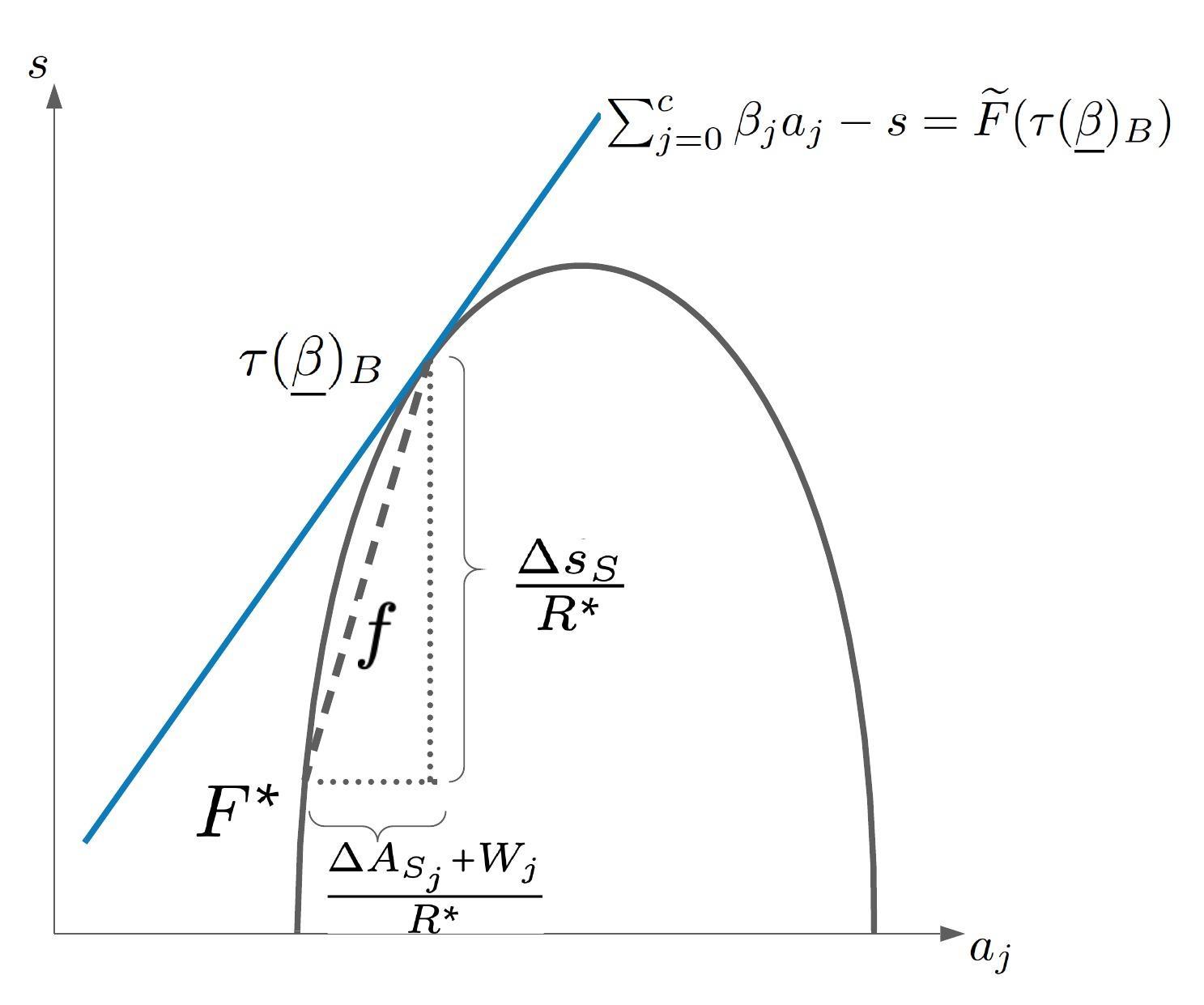}
  \end{center}
  \caption{Graphical illustration of $R^*$, the minimum bath rate for a 
           work transformation $\{\rho_{S^n}\} \rightarrow \{\sigma_{S^n}\}$ 
           satisfying the second law, according to Theorem~\ref{thm:optimal-rate}. 
           The initial state is the generalized thermal state $\tau(\und{\beta})$, its 
           corresponding point marked on the upper boundary of the phase diagram. The final 
           bath states correspond to points on the line denoted $f$, and they 
           are feasible if and only they fall into the phase diagram. 
           Consequently, $F^*$ is the point corresponding to the minimum rate.}
  \label{fig:optimal-rate}
\end{figure}

\begin{proof}
\zk{We notice that the initial and final states of the work system as well as the initial state of the bath and the extracted work rates are known.} Also, the final state of the thermal bath $\xi_{{B}^{ nR}}$ is a tensor product state, therefore,   
the first law (Theorem~\ref{thm:first-law}), 
and the AET( Theorem~\ref{Asymptotic equivalence theorem}) imply that \zk{the entropy and the charges rates of the global system are preserved, hence, we obtain the following entropy and charge rates for the final state of the bath:}
\begin{equation}\label{eq:coordinates} 
\begin{split}
  s(\{\xi_{B^{nR}}\})   &= S(\tau(\und{\beta})_B) - \frac1R \Delta s_S, \\
  a_j(\{\xi_{B^{nR}}\}) &= \Tr \tau(\und{\beta})_B A_{B_j} - \frac1R (\Delta A_{S_j} + W_j) 
                                                             \quad \text{for all } j=1,\ldots,c,
\end{split}\end{equation}
where $\Delta s_{S} = s(\{\sigma_{S^n}\}) - s(\{\rho_{S^n}\})$. 
\zk{The above quantities on the left member are rates of the entropy and charge changes, therefore, they must belong to the diagram $\frac{\overline{\mathcal{P}}_B^{(nR)}}{nR}$.}
Hence, due to point 3 of Lemma~\ref{lemma:phase diagram properties}, 
the above coordinates belong to $\overline{\mathcal{P}}_B^{(1)}=\frac{\overline{\mathcal{P}}_B^{(nR)}}{nR}$. 
%
Now, for $R=R^*$ assume that the above coordinates belong to the point $(\und{a},s)$ on the boundary 
of the phase diagram $\overline{\mathcal{P}}^{(1)}_B$. Then, for $R>R^*$ the point 
of Eq.~(\ref{eq:coordinates}) is a convex combination of the points 
$(\und{a},s)$ and the corresponding point of the state $\tau(\und{\beta})_B$, 
so it belongs to the phase diagram due to its convexity. Therefore, all points with $R>R^*$ 
are inside the diagram.

To approximate the minimum $R$ for small $\delta$, define the function 
$S(\und{a}):=S(\tau(\und{a})_B)$ for $\und{a}=(a_1,\ldots,a_c)$. 
Its Taylor expansion around the point corresponding to the
initial thermal state $\tau(\und{\beta})_B \equiv S\left(\tau(\und{a}^0)_B\right)$ of the bath
gives the approximation
\begin{equation}
  \label{eq:S-taylor}
  S(\und{a}) \approx S(\und{a}^0) + \sum_j \beta_j (a_j-a_j^0)
                                  + \frac12 \sum_{ij} \frac{\partial\beta_j}{\partial a_i}(a_j-a_j^0) (a_i-a_i^0),
\end{equation}
where we have used the well-know relation $\frac{\partial S}{\partial a_i}=\beta_i$. 
From Eq.~(\ref{eq:coordinates}), we obtain
\begin{align*}
  S(\und{a})-S(\und{a}^0) &= -\frac{\Delta s_S}{R},\\
  a_j-a_j^0               &=  \frac{1}{R} (-\Delta A_{S_j}-W_j),
\end{align*}
and by substituting these values in the Taylor approximation (\ref{eq:S-taylor}), 
using the definition of the free entropy and of the deficit $\delta$, 
we arrive at the claimed Eq. (\ref{eq:rate-vs-heatcapacity}).
\end{proof}

\medskip
\begin{remark}\normalfont
For a single charge, $c=1$, which we traditionally interpret as the internal 
energy $E$ of a system, Eq.~(\ref{eq:rate-vs-heatcapacity}) takes on the very 
simple form
\[
  R \approx -\frac{1}{2\delta} \frac{\partial\beta}{\partial E}(\Delta E_S+W)^2.
\]
Here we can use the usual thermodynamic definitions to rewrite 
$\frac{\partial\beta}{\partial E} = \frac{\partial\frac{1}{T}}{\partial E} = -\frac{1}{T^2}\frac{1}{C}$,
with the heat capacity $C = \frac{\partial E}{\partial T}$, all derivatives taken with
respect to corresponding Gibbs equilibrium states. Thus, 
\begin{equation}
  R \approx \frac{1}{T^2}\frac{1}{C}\cdot\frac{1}{2\delta}(\Delta E_S+W)^2,
\end{equation}
resulting in a clear operational interpretation of the heat capacity in terms 
of the rate of the bath to approach the second law tightly. 

For larger numbers of charges, the matrix 
$\bigl[ \frac{\partial\beta_j}{\partial a_i} \bigr]_{ij} 
  = \bigl[ \frac{\partial^2 S}{\partial a_i\partial a_j} \bigr]_{ij}$ is actually 
the Hessian of the entropy $S\bigl(\tau(\und{a})_B\bigr)$ with respect to the charges, 
and the r.h.s. side of Eq.~(\ref{eq:rate-vs-heatcapacity}) is $\frac{1}{2\delta}$
times the corresponding quadratic form evaluated on the vector
$(\Delta A_{S_1}+W_1,\ldots,\Delta A_{S_c}+W_c)$. Note that by the strict 
concavity of the generalized Gibbs entropy, this is a negative definite 
symmetric matrix, thus explaining the minus sign in Eq. (\ref{eq:rate-vs-heatcapacity}).
In the same vein as the single-parameter discussion before, the Hessian matrix can 
be read as being composed of generalized heat capacities, which likewise receive 
their operational interpretation in terms of the required rate of the bath. 
\end{remark}

\medskip
The heat capacity has made appearances in previous results in the resource approach 
to thermodynamics: Chubb \emph{et al.} \cite{Korzekwa:heatcap} have found it to 
show up in the optimal interconversion rate between states in a resource
theory of Gibbs-preserving transformations and with unlimited baths at temperature 
$T$. Their setting is the finite-copy regime, and in contrast to our result
of finite bath where the heat capacity affects the first order term (scaling 
linear with $n$), the heat capacity determines the so-called second order term, 
scaling with $\sqrt{n}$. While it is thus amusing to contemplate the separate 
appearance of the heat capacity in the two results, the settings seem too different 
to allow for a meaningful comparison.

\section{Discussion} 
\label{sec:discussion}
We have presented a resource theory in which the objects are sequences of tensor product states, 
and thermodynamically meaningful allowed transformations, 
namely operations which preserve the entropy and charges of a system asymptotically. 
The allowed operations classify the objects into equivalence classes of state sequences 
that are interconvertible under allowed operations. The basic result on which 
our theory is built is that the objects are interconvertible via allowed operations 
if and only if they have the same average entropy and average charge values in the 
asymptotic limit. 

The existence of the allowed operations between the objects of the same class is based on two pillars:
First, for objects with the same average entropy there are states with sublinear dimension which 
can be coupled to the objects to make their spectrum asymptotically identical.
Second, objects with the same average charge values project onto a common subspace of the 
charges of the system which has the property that any unitary acting on this subspace 
is an almost-commuting unitary with the corresponding charges. Therefore, the spectrum 
of the objects of the same class can be modified using small ancillary systems and then 
they are interconvertible via unitaries that asymptotically preserve the charges of the system.
The notion of a common subspace for different charges, which are Hermitian operators, 
is introduced in \cite{Halpern2016} as approximate microcanonical (a.m.c.) subspace.
In this paper, for given charges and parameters, we construct a permutation-symmetric 
a.m.c., something not guaranteed by the construction in \cite{Halpern2016}.

We then applied this resource theory to understand quantum thermodynamics with 
multiple conserved quantities. We specifically consider an asymptotic generalization of 
the setting proposed in \cite{Guryanova2016} where there are many copies of a global 
system consisting of a main system, called a work system, a thermal bath with fixed 
temperatures and various batteries to store the different charges of the system. 
Our approach allows us, in our setting, to resolve affirmatively a question 
from \cite{Guryanova2016,Halpern2016}, which asks about the possibility of constructing 
physically separate batteries for all the involved charge numbers, be they commuting 
or not (cf. \cite{Popescu2019}).
Therefore, the objects and allowed operations of the resource theory apply
quantum states of a thermodynamics system and thermodynamical transformations, 
respectively. 
It is evident that the allowed operations can transform a state with a tensor product 
structure to a state of a general form; however, we show that restricting the final 
states to the specific form of tensor product structure does not reduce the generality 
and tightness of the bounds that we obtain, which follows from the fact that for 
any point of the phase diagram there is a state with tensor product structure realizing it. 

As discussed in \cite{Guryanova2016}, for a system with multiple charges, the free entropy is
a conceptually more meaningful quantity than the free energy, which is originally defined 
when energy is the only conserved quantity of the system. Namely, the free energy bounds the 
amount of energy that can be extracted (while conserving the other charges); 
however, for a system with multiple charges there are not various quantities that 
bound the extraction of individual charges. 
Rather, there is only a bound on the trade-off between the charges that can be extracted 
which is precisely the free entropy defined with respect to the temperatures of the thermal bath.
We show that indeed this is the case in our scenario as well and formulate the second law: 
the amount of charge combination that is extracted is bounded by the free entropy change of 
the work system per number of copies of the work system, i.e. the free entropy rate change.
Conversely, we show that all transformations
with given extracted charge values,
with a combination strictly bounded by the free entropy rate change of the 
work system, are feasible.
In particular, any amount of a given charge, or the so-called work type, is extractable
providing that sufficient amounts of other charges are injected to the system.

This raises the following fundamental question: for given extractable charge values, with 
a combination saturating the second law up to a deficit $\delta$, what is the minimum number 
of the thermal baths per number of the copies of the work system.
We define this ratio as the thermal bath rate.
We find that for large thermal bath rates the optimal value is inversely 
proportional to the deficit $\delta$, and there is always a corresponding 
transformation where the final state of the work system and the thermal bath are uncorrelated.
However, in general this is not true: the minimum rate might be obtained where the final state correlates
the work system and the thermal bath.
This is a purely quantum mechanical effect, making certain work transformations possible 
with a smaller size of the thermal bath than would be possible classically; it 
relies on work system and bath becoming entangled.
In order to describe precisely the possible work transformations with a fixed bath, 
we define and analyze the extended phase diagram of the bath, which depends on a given 
conditional state of the work system and records the conditional, rather than plain, entropy. 

Our results paint a broad picture of thermodynamics as a resource theory,
which ultimately relies only on conservation laws, namely the conservation 
of information (i.e. entropy) -- cf.~\cite{Bera2017} --, and the conservation 
of extensive physical quantities. At the microscopic level, the former means that 
the allowed transformations are (approximate) unitaries, the latter that they 
(approximately) commute with the conserved quantities. Amazingly, after these
simple premises give rise to the phase diagram, the supposed deep distinction between 
entropy and the conserved charges disappears: they both are simply extensive conserved 
quantities. Following our development of thermodynamics, with its batteries for 
the distinct charges, we could augment this with an entropy battery, which 
carries no charges and is only there to absorb or release entropy. This is a
very general picture that in some respect includes as a special case our treatment 
of the second law: namely, the role of the bath is largely as such an entropy 
battery, although the fact that it also carries charges complicates things 
compared to this abstract vantage point.

\medskip
We leave several open questions to be addressed. Not to dwell on the overly 
technical ones, which will be evident to readers of the detailed claims and 
proofs, a fundamental problem is whether it is possible to prove the AET
Theorem~\ref{Asymptotic equivalence theorem} with unitaries that exactly commute 
with the conserved quantities, rather than approximately? This would require
the construction of a subexponential reference frame to take care of the 
conservation laws; this is known to be possible for a single conserved 
quantity (energy) \cite{Sparaciari2016}, and more generally for pairwise 
commuting charges \cite{Bera2017}. If it were possible in the non-commuting 
setting, it would give our theory a much stronger appeal, since at a fundamental 
level, conservation laws in nature are considered to hold strictly, rather than 
only approximately.

There is a whole plethora of open questions concerning practical and experimental 
implications of our results (similar experimental settings for thermodynamics with non-commuting charges have been characterized recently in \cite{Experiment1,Experiment2}). The most straightforward, and perhaps most interesting
is this one: can one design a system and a bath of small to moderate size, such 
that a concrete work transformation will necessarily leave the system and the bath in 
a final entangled state? The impossibility of such a transformation in a classical 
system could be interpreted as a thermal machine entanglement witness.

\acknowledgments
The authors thank Micha{\l} Horodecki for pointing out Ref. \cite{Korzekwa:heatcap}.
ZBK is grateful to Paul Skrzypczyk and Tony Short for valuable discussions. 
AW thanks Glen and Ella Runciter for invaluable hermetic and immanent remarks on 
the ubiquity of the second law. 

ZBK and AW acknowledge financial support by the Spanish MINECO 
(projects FIS2016-86681-P and PID2019-107609GB-I00) 
with the support of FEDER funds, and the Generalitat de Catalunya
(project CIRIT 2017-SGR-1127).
MNB acknowledges financial support from SERB-DST (CRG/2019/002199), Government of India.
ML acknowledges the support from ERC AdG NOQIA, Spanish Ministry of Economy and 
Competitiveness (Severo Ochoa program for Centres of Excellence in R\&{}D (CEX2019-000910-S), 
Plan National FISICATEAMO and FIDEUA PID2019-106901GB-I00/10.13039/501100011033, FPI), 
Fundaci\'{o} Privada Cellex, Fundaci\'{o} Mir-Puig, and from Generalitat de Catalunya 
(AGAUR grant no. 2017-SGR-1341, CERCA program, QuantumCAT U16-011424, co-funded by 
ERDF Operational Program of Catalonia 2014-2020), MINECO-EU QUANTERA MAQS 
(funded by State Research Agency (AEI) PCI2019-111828-2/10.13039/501100011033), 
EU Horizon 2020 FET-OPEN OPTOLogic (grant no. 899794), and the National Science Centre, 
Poland-Symfonia grant no. 2016/20/W/ST4/00314.


\appendix

\section*{Appendices}
Here we collect mathematically involved arguments that would detract from 
the exposition of the main article. These are the proof 
of Theorem~\ref{Asymptotic equivalence theorem} (in Appendix \ref{proof-AET}), which 
requires the construction of approximate microcanonical (a.m.c.) subspaces \cite{Halpern2016}, 
for which we give a self-contained proof in 
Appendix \ref{section: Approximate microcanonical (a.m.c.) subspace}.
We start with the collection of miscellaneous definitions and facts in
Appendix \ref{section: Miscellaneous definitions and facts}.

\section{Miscellaneous definitions and facts}
\label{section: Miscellaneous definitions and facts}

\begin{definition}\label{def:typicality}
Let $\rho_1,\ldots,\rho_n$ be quantum states on a $d$-dimensional Hilbert space $\mathcal{H}$ with diagonalizations $\rho_i=\sum_j p_{ij} \pi_{ij}$ and one-dimensional projectors $\pi_{ij}$. For $\alpha >0$ and $\rho^n=\rho_1 \otimes \cdots \otimes \rho_n$ define the set of entropy-typical sequences as
\begin{align} 
    \mathcal{T}_{\alpha,\rho^n }^n=\left\{j^n=j_1 j_2 \ldots j_n : \abs{\sum_{i=1}^n -\log p_{ i j_i}-S(\rho_i) } \leq \alpha \sqrt{n}\right\}. \nonumber
\end{align}
Define the entropy-typical projector of $\rho^n$ with constant $\alpha$ as
\begin{align*}
    \Pi^n_{\alpha ,\rho^n }=\sum_{j^n \in \mathcal{T}_{\alpha, \rho^n}^n} \pi_{1j_1} \otimes \cdots \otimes \pi_{nj_n}.
\end{align*}
\end{definition}

\begin{lemma}[{Cf.~\cite{csiszar_korner_2011}}]
\label{lemma:typicality properties}
There is a constant $0<\beta \leq \max \set{(\log 3)^2,(\log d)^2}$ 
such that the entropy-typical projector has the following properties for any $\alpha >0$, $n>0$ and arbitrary state $\rho^n=\rho_1 \otimes \cdots \otimes \rho_n$:
\begin{align*}
  \Tr \left(\rho^n \Pi^n_{\alpha ,\rho^n }\right) &\geq 1-\frac{\beta}{\alpha^2}, \\
  2^{-\sum_{i=1}^n S(\rho_i)-\alpha \sqrt{n}}  \Pi^n_{\alpha,\rho^n} 
                                &\leq \Pi^n_{\alpha,\rho^n}\rho^n \Pi^n_{\alpha,\rho^n} 
                                 \leq 2^{-\sum_{i=1}^n S(\rho_i)+\alpha \sqrt{n}}\Pi^n_{\alpha,\rho^n},\quad \text{and} \\
  \left(1-\frac{\beta}{\alpha^2}\right) 2^{ \sum_{i=1}^n S(\rho_i)-\alpha \sqrt{n}} 
                                &\leq  \Tr \left(\Pi^n_{\alpha ,\rho^n }\right) 
                                 \leq  2^{\sum_{i=1}^n S(\rho_i)+\alpha \sqrt{n}}.
\end{align*}
\end{lemma}

\begin{lemma}[Gentle operator lemma \cite{winter1999_2,Ogawa2007,wilde_2013}]
\label{Gentle Operator Lemma}
If a quantum state $\rho$ with diagonalization $\rho=\sum_j p_j \pi_j$ projects onto a
POVM element $\Lambda$ with probability  $1- \epsilon$, 
i.e. $\Tr(\rho \Lambda) \geq 1- \epsilon$ for $0 \leq \Lambda \leq \1$, then
\begin{align*}
  \sum_j p_j\norm{\pi_j-\sqrt{\Lambda}\pi_j \sqrt{\Lambda}}_1 \leq 2\sqrt{\epsilon}.
\end{align*}
\end{lemma}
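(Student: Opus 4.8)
The plan is to reduce the claim to its rank-one special case, dispose of that case by an elementary two-dimensional computation, and then reassemble the bound via concavity of the square root. First I would note that the left-hand side is the $\{p_j\}$-average of the quantities $\norm{\pi_j - \sqrt{\Lambda}\,\pi_j\sqrt{\Lambda}}_1$, where each $\pi_j = \proj{\psi_j}$ is a one-dimensional projector from the given diagonalization of $\rho$. Hence it suffices to establish the \emph{pure-state estimate}: for any unit vector $\ket{\psi}$ and any $0\le\Lambda\le\1$, setting $\delta := \Tr\bigl(\proj{\psi}(\1-\Lambda)\bigr) = 1-\bra{\psi}\Lambda\ket{\psi}$, one has $\norm{\proj{\psi} - \sqrt{\Lambda}\proj{\psi}\sqrt{\Lambda}}_1 \le 2\sqrt{\delta}$.

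For the pure-state estimate I would introduce $\ket{\phi} := \sqrt{\Lambda}\ket{\psi}$, so that $\sqrt{\Lambda}\proj{\psi}\sqrt{\Lambda} = \proj{\phi}$ with $\braket{\phi}{\phi} = \bra{\psi}\Lambda\ket{\psi} = 1-\delta$. The crucial input is the operator inequality $\sqrt{\Lambda}\ge\Lambda$ (from $t^{1/2}\ge t$ on $[0,1]$), which gives $a := \braket{\psi}{\phi} = \bra{\psi}\sqrt{\Lambda}\ket{\psi} \ge \bra{\psi}\Lambda\ket{\psi} = 1-\delta\ge 0$. Writing $\ket{\phi} = a\ket{\psi} + b\ket{\psi^{\perp}}$, the difference $X := \proj{\psi}-\proj{\phi}$ is supported on $\operatorname{span}\{\ket{\psi},\ket{\psi^{\perp}}\}$, where a one-line computation yields $\Tr X = 1-\norm{\phi}^2 = \delta$ and $\det X = -|b|^2 \le 0$. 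So $X$ has one nonnegative and one nonpositive eigenvalue, whence $\norm{X}_1 = \sqrt{(\Tr X)^2 - 4\det X} = \sqrt{\delta^2 + 4|b|^2}$. Since $|b|^2 = \norm{\phi}^2 - a^2 \le (1-\delta) - (1-\delta)^2 = \delta(1-\delta)$, this is at most $\sqrt{4\delta - 3\delta^2} \le 2\sqrt{\delta}$, as claimed.

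Finally I would assemble everything: apply the pure-state estimate to each $\pi_j$ with $\delta_j := \Tr\bigl(\pi_j(\1-\Lambda)\bigr)$, sum against the weights $p_j$, and use Jensen's inequality for the concave map $t\mapsto\sqrt{t}$, so that
\[
  \sum_j p_j\norm{\pi_j-\sqrt{\Lambda}\,\pi_j\sqrt{\Lambda}}_1
     \le 2\sum_j p_j\sqrt{\delta_j}
     \le 2\sqrt{\textstyle\sum_j p_j\delta_j}
     = 2\sqrt{\Tr\bigl(\rho(\1-\Lambda)\bigr)}
     = 2\sqrt{1-\Tr(\rho\Lambda)}
     \le 2\sqrt{\epsilon}.
\]
The only point that requires genuine care is the sharp constant $2$ in the pure-state estimate — this is exactly where the operator inequality $\sqrt{\Lambda}\ge\Lambda$ and the exact $2\times2$ trace-norm evaluation are used; everything else is bookkeeping. (The lemma is standard, and the argument above is in essence that of the cited references.)
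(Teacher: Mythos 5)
Your proof is correct: the reduction to the rank-one case, the $2\times 2$ trace-norm evaluation using $\sqrt{\Lambda}\ge\Lambda$, and the final application of Jensen's inequality are all sound and yield the stated constant $2\sqrt{\epsilon}$. The paper itself states this lemma without proof, merely citing the standard references, so there is no in-paper argument to compare against; your derivation is a valid self-contained version of the standard one (pure-state estimate plus concavity of the square root), and nothing in it needs repair.
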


\begin{lemma}[Cf.~Bhatia~\cite{bhatia97}]
\label{lemma:norm inequality}
For operators $A$, $B$ and $C$ and for any $p \in [1,\infty]$, the following holds 
\begin{align*}
    \norm{ABC}_p \leq \norm{A}_{\infty} \norm{B}_p \norm{C}_{\infty}.
\end{align*}
\end{lemma}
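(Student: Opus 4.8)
The plan is to reduce this H\"older-type bound to two standard ingredients: (i) the Schatten $p$-norm of an operator equals the $\ell^p$-norm of its (non-increasing) sequence of singular values $s_1(\cdot)\geq s_2(\cdot)\geq\cdots$, and (ii) the one-sided domination inequalities $s_k(AB)\leq\norm{A}_\infty\,s_k(B)$ and $s_k(BC)\leq\norm{C}_\infty\,s_k(B)$ for every index $k$. Granting (ii), one immediately gets $s_k(ABC)\leq\norm{A}_\infty\,s_k(BC)\leq\norm{A}_\infty\norm{C}_\infty\,s_k(B)$ for all $k$. For finite $p$, raising to the $p$-th power and summing over $k$ then yields $\norm{ABC}_p^p=\sum_k s_k(ABC)^p\leq(\norm{A}_\infty\norm{C}_\infty)^p\sum_k s_k(B)^p=\bigl(\norm{A}_\infty\norm{B}_p\norm{C}_\infty\bigr)^p$, and taking $p$-th roots concludes; for $p=\infty$ the same chain with $k=1$ (the largest singular value) is precisely the assertion.

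For ingredient (ii) I would invoke the low-rank approximation characterisation of singular values (Eckart--Young--Schmidt): $s_k(B)=\min\{\norm{B-R}_\infty:\operatorname{rank}R\leq k-1\}$. Fixing an optimal $R$ with $\operatorname{rank}R\leq k-1$ and $\norm{B-R}_\infty=s_k(B)$, the operator $AR$ has rank at most $k-1$, so by the definition of $s_k$ and submultiplicativity of the operator norm, $s_k(AB)\leq\norm{AB-AR}_\infty=\norm{A(B-R)}_\infty\leq\norm{A}_\infty\norm{B-R}_\infty=\norm{A}_\infty\,s_k(B)$. The mirror-image argument, using $RC$ in place of $AR$, gives $s_k(BC)\leq\norm{C}_\infty\,s_k(B)$.

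I do not anticipate any genuine obstacle here: the only non-routine input is the singular-value domination step, which is entirely classical (this is exactly the fact cited from Bhatia), and everything else is bookkeeping. An equally short alternative for the one-sided bound is to note that $A^\dagger A\leq\norm{A}_\infty^2\,\mathbf 1$ forces $(AB)^\dagger(AB)=B^\dagger A^\dagger A B\leq\norm{A}_\infty^2\,B^\dagger B$, and then to use Weyl monotonicity of eigenvalues under the operator order to conclude $\norm{AB}_p^p=\Tr\bigl((AB)^\dagger AB\bigr)^{p/2}\leq\norm{A}_\infty^p\,\Tr(B^\dagger B)^{p/2}=\norm{A}_\infty^p\norm{B}_p^p$; in an expository paper one could equally well simply cite this and skip the explicit construction. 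Since the paper works in finite dimension throughout, all the sums above are finite and no convergence issues arise.
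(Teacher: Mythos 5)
Your proof is correct. The paper itself offers no argument for this lemma --- it is stated with a bare citation to Bhatia --- so there is nothing to compare against; what you have written is essentially the standard textbook derivation. Both of your routes are sound: the singular-value domination $s_k(AB)\leq\norm{A}_\infty s_k(B)$ via the Schmidt--Mirsky low-rank approximation characterisation, followed by summing $p$-th powers, is exactly Bhatia's proof of the Schatten--H\"older inequality in this special case; and the alternative via $B^\dagger A^\dagger A B\leq\norm{A}_\infty^2 B^\dagger B$ plus Weyl monotonicity of eigenvalues under the operator order (together with the adjoint trick $\norm{BC}_p=\norm{C^\dagger B^\dagger}_p$ for the right-hand factor) is an equally valid shortcut. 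In finite dimension there are indeed no convergence issues, so either version would serve as a complete proof.
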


\begin{lemma}[Hoeffding's inequality, cf.~\cite{DemboZeitouni}]
\label{Hoeffding's inequality}
Let $X_1, X_2,\ldots,X_n$ be independent random variables with $a_i \leq X_i \leq b_i$,
and define the empirical mean of these variables as $\overline{X}=\frac{X_1+\ldots+X_n}{n}$.
Then, for any $t>0$,
\begin{align*}
  \Pr\left\{\overline{X}-\mathbb{E}(\overline{X}) \geq  t\right\} 
                  &\leq \exp\left(-\frac{2n^2t^2}{\sum_{i=1}^n(b_i-a_i)^2}\right),\\
  \Pr\left\{\overline{X}-\mathbb{E}(\overline{X}) \leq -t\right\} 
                  &\leq \exp\left(-\frac{2n^2t^2}{\sum_{i=1}^n(b_i-a_i)^2}\right).
\end{align*}
\end{lemma}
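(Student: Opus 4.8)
\textit{Proof proposal.} The plan is the standard Chernoff--Cram\'er exponential moment method. First I would reduce to the centered case by setting $Y_i := X_i - \mathbb{E}X_i$, so that $\mathbb{E}Y_i = 0$ and $Y_i$ takes values in an interval of length $b_i - a_i$; the $Y_i$ remain independent. For any $s > 0$, Markov's inequality applied to the nonnegative random variable $e^{s\sum_i Y_i}$ gives
\[
  \Pr\left\{ \overline{X} - \mathbb{E}(\overline{X}) \geq t \right\}
    = \Pr\left\{ \sum_i Y_i \geq nt \right\}
    \leq e^{-snt}\, \mathbb{E}\exp\left( s\sum_i Y_i \right)
    = e^{-snt} \prod_{i=1}^n \mathbb{E}\, e^{sY_i},
\]
where the last equality uses independence.

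The heart of the argument is the bound on each factor, i.e.\ Hoeffding's lemma: if $Y$ has mean zero and takes values in $[\alpha,\beta]$ with $\beta - \alpha = \ell$, then $\mathbb{E}\,e^{sY} \leq e^{s^2\ell^2/8}$. I would prove this by convexity of the exponential: for $y\in[\alpha,\beta]$, $e^{sy} \leq \frac{\beta-y}{\ell}e^{s\alpha} + \frac{y-\alpha}{\ell}e^{s\beta}$, so taking expectations and using $\mathbb{E}Y = 0$ yields $\mathbb{E}\,e^{sY} \leq \frac{\beta}{\ell}e^{s\alpha} - \frac{\alpha}{\ell}e^{s\beta} =: e^{\varphi(s)}$. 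A direct computation shows $\varphi(0) = 0$, $\varphi'(0) = 0$, and $\varphi''(s)$ has the form $p(1-p)(s\ell)^2$-ish structure bounded by $\ell^2/4$ for all $s$ (it is a variance of a Bernoulli-type quantity times $\ell^2$); hence by Taylor's theorem $\varphi(s) \leq s^2\ell^2/8$.

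Combining, $\Pr\{\overline{X} - \mathbb{E}(\overline{X}) \geq t\} \leq \exp\bigl(-snt + \tfrac{s^2}{8}\sum_i (b_i-a_i)^2\bigr)$, and minimizing the exponent over $s > 0$ with the choice $s = 4nt/\sum_i(b_i-a_i)^2$ gives exactly $\exp\bigl(-2n^2t^2/\sum_i(b_i-a_i)^2\bigr)$. The lower-tail inequality follows at once by applying the upper-tail bound to the variables $-X_i$, which satisfy $-b_i \leq -X_i \leq -a_i$ with the same range. The only step requiring genuine care is Hoeffding's lemma, specifically the uniform bound $\varphi''(s) \leq \ell^2/4$; everything else is bookkeeping. (Of course, since this is a textbook inequality cited from \cite{DemboZeitouni}, in the paper one may simply refer to the reference rather than reproduce this argument.)
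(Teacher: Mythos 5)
Your proof is correct: this is the standard Chernoff--Cram\'er argument with Hoeffding's lemma for the moment generating function, and the optimization $s = 4nt/\sum_i(b_i-a_i)^2$ does yield exactly the stated exponent, with the lower tail following by negating the variables. The paper itself offers no proof of this lemma --- it is collected among the miscellaneous facts and simply cited from the reference --- so there is nothing to compare against; your reconstruction is the textbook one and is sound, including the key uniform bound $\varphi''\leq \ell^2/4$ via the variance of a Bernoulli-type distribution.
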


\section{Approximate microcanonical (a.m.c.) subspace}
\label{section: Approximate microcanonical (a.m.c.) subspace}
In this section, we recall the definition of the notion of 
approximate microcanonical (a.m.c.) and give a new proof that it exists 
for certain explicitly given parameters. 

\begin{definition}
\label{defi:microcanonical}
An \emph{approximate microcanonical (a.m.c.) subspace}, or more precisely
a \emph{$(\epsilon,\eta,\eta',\delta,\delta')$-approximate microcanonical subspace},
$\cM$ of $\cH^{\ox n}$, with projector $P$, 
for charges $A_j$ and values $v_j = \langle A_j \rangle$
is one that consists, in a certain precise sense, of exactly the
states with ``very sharp'' values of all the $A_j^{(n)}$. 
Mathematically, the following has to hold:
\begin{enumerate}
  \item Every state $\omega$ with support contained in $\cM$ satisfies
        $\tr \omega\Pi^\eta_j \geq 1-\delta$ for
        all $j$.
  \item Conversely, every state $\omega$ on $\cH^{\ox n}$ such that
        $\tr \omega\Pi^{\eta'}_j \geq 1-\delta'$ for
        all $j$, satisfies $\tr\omega P \geq 1-\epsilon$.
\end{enumerate}
Here, $\Pi^\eta_j := \bigl\{ nv_j-n\eta\Sigma(A_j) \leq A_j^{(n)} \leq nv_j+n\eta\Sigma(A_j) \bigr\}$
is the spectral projector of $A_j^{(n)}$ of values close to $n v_j$,
and $\Sigma(A) = \lambda_{\max}(A)-\lambda_{\min}(A)$ is the spectral diameter
of the Hermitian $A$, i.e.~the diameter of the smallest disc
covering the spectrum of $A$.
\end{definition}

\medskip

\begin{remark}\normalfont
It is shown in \cite[Thm.~3]{Halpern2016} that for every $\epsilon > c\delta' > 0$,
$\delta > 0$ and $\eta > \eta' > 0$, and for all sufficiently
large $n$, there exists a nontrivial 
$(\epsilon,\eta,\eta',\delta,\delta')$-a.m.c.~subspace.
However, there are two (related) reasons why one might be not completely
satisfied with the argument in~\cite{Halpern2016}: First, the proof uses
a difficult result of Ogata~\cite{Ogata} to reduce the non-commuting
case to the seemingly easier of commuting observables; while this
is conceptually nice, it makes it harder to perceive the nature
of the constructed subspace. Secondly, despite the fact that the
defining properties of an a.m.c.~subspace are manifestly permutation symmetric
(w.r.t.~permutations of the $n$ subsystems), the resulting construction
does not necessarily have this property.

Here we address both these concerns. Indeed, we shall show by 
essentially elementary means how to obtain an a.m.c.~subspace
that is by its definition permutation symmetric.
\end{remark}

\begin{theorem}
  \label{thm:symmetric-micro-exists}
  Under the  assumptions of Definition~\ref{defi:microcanonical}, for every $\epsilon > 2(n+1)^{3d^2}\delta' > 0$,
  $\eta >\eta' > 0$ and $\delta>0$, for all sufficiently large $n$
  there exists an approximate microcanonical subspace projector.
  In addition, the subspace can be chosen to be stable under permutations
  of the $n$ systems: $U^\pi\cM = \cM$, or equivalently $U^\pi P (U^\pi)^\dagger = P$,
  for any permutation $\pi\in S_n$ and its unitary action $U^\pi$.

  More precisely, given $\eta > \eta' > 0$ and $\epsilon > 0$, there exists a $\alpha > 0$ such
  that there is a non-trivial $(\epsilon,\eta,\eta',\delta,\delta')$-a.m.c. subspace
  with
  \begin{align*}
    \delta  &= (c+3)(5n)^{5d^2} e^{-\alpha n} \text{ and } \\ 
    \delta' &= \frac{\epsilon}{2(n+1)^{3d^2}} - (c+3)(5n)^{2d^2} e^{-\alpha n}.
  \end{align*}
  Furthermore, we may choose $\alpha = \frac{(\eta-\eta')^2}{8(cd+1)^2}$.
\end{theorem}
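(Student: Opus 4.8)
My plan is to make the permutation symmetry automatic by building the projector $P$ only out of permutation-invariant ingredients, and to verify the two clauses of Definition~\ref{defi:microcanonical} separately. The exponential errors $e^{-\alpha n}$ will come from Hoeffding's inequality (Lemma~\ref{Hoeffding's inequality}) applied to $A_j^{(n)}=\sum_{i=1}^n A_j^{(Q_i)}$, a sum of commuting, uniformly bounded ($\leq\Sigma(A_j)$) terms; the polynomial prefactors $(5n)^{\mathrm{poly}(d)}$ and $(n+1)^{\mathrm{poly}(d)}$ will come from a counting/finite--de~Finetti (``post-selection'') argument that reduces arbitrary states on $\mathcal H^{\otimes n}$ to i.i.d.\ states $\sigma^{\otimes n}$, with $\sigma$ ranging over a $\tfrac1n$-net of single-system density operators, of which there are $O\bigl((5n)^{d^2}\bigr)$. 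Concretely, fix $\eta'<\overline\eta<\eta$, set $\alpha=\tfrac{(\eta-\eta')^2}{8c^2(d+1)^2}$, and call a net point $\sigma$ \emph{admissible} if $\bigl|\Tr\sigma A_j-v_j\bigr|\leq\overline\eta\,\Sigma(A_j)$ for all $j$. One builds $P$ from the entropy-typical projectors $\Pi_\sigma$ of the states $\sigma^{\otimes n}$ over admissible $\sigma$ (each a spectral projector of a permutation-invariant state, hence itself permutation invariant) together with the permutation-invariant spectral windows $\Pi^{\overline\eta}_j$, so that the rank of $P$ is at most $2^{n(s^{*}+o(1))}$ with $s^{*}$ the largest entropy of an admissible $\sigma$, while membership of a state in $\mathcal M=\mathrm{range}\,P$ forces sharp charge values. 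Getting this packaging exactly right is the delicate point, but however it is done $P$ is assembled from permutation-invariant operators, so $U^\pi P(U^\pi)^\dagger=P$ holds by construction.

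\emph{Clause 1.} For admissible $\sigma$ the variables $A_j^{(Q_1)},\dots,A_j^{(Q_n)}$ are i.i.d.\ in $\sigma^{\otimes n}$, each of range $\leq\Sigma(A_j)$ and mean within $\overline\eta\,\Sigma(A_j)$ of $v_j$, so Lemma~\ref{Hoeffding's inequality} gives $\Tr\bigl(\sigma^{\otimes n}(I-\Pi^\eta_j)\bigr)\leq 2e^{-2n(\eta-\overline\eta)^2}$, and this survives restriction to the typical subspace up to the defects in Lemma~\ref{lemma:typicality properties }. A union bound over the $c$ charges, over the $O\bigl((5n)^{d^2}\bigr)$ admissible net points, and over the few typicality terms then yields Clause~1 with $\delta=(c+3)(5n)^{5d^2}e^{-\alpha n}$, once $\overline\eta$ and the net scale are chosen so that $2(\eta-\overline\eta)^2$ minus the discretisation slack still dominates $\alpha$; the factors $c^{-1}$ and $(d+1)^{-1}$ inside $\alpha$ are precisely the cost of splitting the budget $\eta-\eta'$ among the $c$ directions and of the $\tfrac1n$-discretisation of the $d^2$ real parameters of a density operator.

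\emph{Clause 2.} Suppose $\Tr\omega\,\Pi^{\eta'}_j\geq 1-\delta'$ for all $j$. Since $P$ and every $\Pi^{\eta'}_j$ are permutation invariant, $\Tr\omega P=\Tr\overline\omega P$ and $\Tr\omega\,\Pi^{\eta'}_j=\Tr\overline\omega\,\Pi^{\eta'}_j$ for the twirl $\overline\omega=\tfrac1{n!}\sum_\pi U^\pi\omega(U^\pi)^\dagger$, so we may assume $\omega$ permutation invariant. Now invoke a post-selection/finite--de~Finetti bound of the form $\overline\omega\leq(n+1)^{d^2-1}\!\int\sigma^{\otimes n}\,d\sigma$, which after discretisation exhibits $\overline\omega$ as dominated by an $O\bigl((5n)^{2d^2}\bigr)$-weighted mixture of the $\sigma^{\otimes n}$. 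For an \emph{inadmissible} $\sigma$, Hoeffding forces $\Tr\bigl(\sigma^{\otimes n}\Pi^{\eta'}_j\bigr)$ to be exponentially small for the $j$ witnessing inadmissibility, so all but an exponentially small fraction (bounded, after amplification, by $(c+3)(5n)^{2d^2}e^{-\alpha n}$) of the weight of $\overline\omega$ sits on admissible types, and those lie inside $P$ up to exponentially small corrections by construction. Collecting the polynomial amplification gives $\Tr\omega(I-P)\leq 2(n+1)^{3d^2}\bigl(\delta'+(c+3)(5n)^{2d^2}e^{-\alpha n}\bigr)$, which is $\leq\epsilon$ exactly under the stated hypothesis $\delta'\leq\tfrac{\epsilon}{2(n+1)^{3d^2}}-(c+3)(5n)^{2d^2}e^{-\alpha n}$.

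\emph{The main obstacle} is the non-commutativity of the $A_j^{(n)}$: it forbids a literal simultaneous spectral projection, and it makes the most naive candidate $P=\bigvee_{\sigma\text{ adm.}}\Pi_\sigma$ fail Clause~1 (a superposition of two good product-typical vectors can leave every $\Pi^\eta_j$), while a meet $\bigwedge_j\Pi^{\overline\eta}_j$ of the non-commuting windows is uncontrollably small. The de~Finetti/post-selection reduction to i.i.d.\ states is exactly the ``elementary'' replacement for the appeal to Ogata's theorem in~\cite{Halpern2016}: it is what lets one simultaneously carve out $\mathcal M$ so that Clause~1 is not destroyed and prove Clause~2 with only the polynomial loss $(n+1)^{3d^2}$ rather than a catastrophic one. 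Everything else is bookkeeping, namely choosing $\overline\eta$, the net resolution and the typicality constant so that the three error sources --- Hoeffding tails, typicality defects, and discretisation --- assemble into precisely the claimed $\delta$, $\delta'$ and $\alpha=\tfrac{(\eta-\eta')^2}{8c^2(d+1)^2}$.
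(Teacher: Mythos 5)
You have correctly identified the two essential tools (Hoeffding's inequality for the exponential tails, and a net plus a postselection/de~Finetti reduction for the polynomial prefactors), and your ``main obstacle'' paragraph accurately diagnoses why the naive candidates for $P$ fail. But the proposal never actually resolves that obstacle: the construction of $P$ is left at ``one builds $P$ from the entropy-typical projectors \ldots together with the \ldots spectral windows \ldots; getting this packaging exactly right is the delicate point, but however it is done \ldots''. That packaging is the entire content of the theorem, and neither of the ingredients you name can supply it: a join of typical projectors of admissible $\sigma^{\ox n}$ fails Clause~1 (your own superposition counterexample), and a meet of the non-commuting windows $\Pi_j^{\overline\eta}$ has no rank lower bound. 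The paper's key idea, which is absent from your proposal, is to take $P$ to be the \emph{Helstrom projector} $P=\{\Gamma-\Phi\geq 0\}$ distinguishing two explicit permutation-invariant mixtures, $\Gamma$ averaging $\rho^{\ox n}$ over a $\lambda$-net of ``close'' states $\rho\in\cC_s(\underline{v})$ and $\Phi$ averaging $\sigma^{\ox n}$ over a net of ``far'' states $\sigma\in\cF_t(\underline{v})$; its near-optimality is certified by exhibiting an explicit POVM (measure a uniformly random charge $A_{j_\ell}$ on each site and apply Hoeffding) that already distinguishes $\Gamma$ from $\Phi$ with error $O(c\,e^{-(t-s)^2n/2c^2})$. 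This yields simultaneously $\tr\rho^{\ox n}P^\perp\leq(c+2)(5n)^{2d^2}e^{-\zeta n}$ for all close $\rho$ and $\tr\sigma^{\ox n}P\leq(c+2)(5n)^{2d^2}e^{-\zeta n}$ for all far $\sigma$, with only the net cardinality as polynomial loss.

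A second, consequent gap is in your Clause~1: as written, your union bound only shows that the i.i.d.\ \emph{building blocks} $\sigma^{\ox n}$ (admissible $\sigma$) concentrate in $\Pi_j^\eta$, whereas the definition requires this for \emph{every} state $\omega$ supported on $\cM$ --- exactly the superposition issue you flagged. The paper closes this by running the postselection lemma in the opposite direction from your Clause~2: for permutation-symmetric $\omega$ with $\tr\omega P=1$, one has $\omega\leq(n+1)^{3d^2}\int{\rm d}\sigma\,\sigma^{\ox n}F(\omega,\sigma^{\ox n})^2$, and monotonicity of the fidelity under the binary test $(P,P^\perp)$ gives $F(\omega,\sigma^{\ox n})^2\leq\tr\sigma^{\ox n}P$, which is exponentially small for $\sigma\in\cF_t(\underline{v})$ by the second defining property of $P$; the remaining weight sits on $\sigma\in\cC_t(\underline{v})$, where Hoeffding bounds $\tr\sigma^{\ox n}(\Pi_j^\eta)^\perp$. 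Note that this step uses the \emph{distinguishing} property of $P$ against far states, not merely that $\cM$ is spanned by well-behaved vectors --- so it cannot be recovered from your construction sketch without first fixing the definition of $P$. Your Clause~2 argument and the parameter bookkeeping are otherwise in the right shape and essentially match the paper's.
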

\begin{proof}
For $s>0$, partition the state space $\cS(\cH)$ on $\cH$ into 
\begin{align}
  \cC_s(\underline{v}) &= \bigl\{ \sigma : \forall j\ |\tr\sigma A_j - v_j| \leq s\Sigma(A_j) \bigr\}, \label{def:C_s} \\
  \cF_s(\underline{v}) &= \bigl\{ \sigma : \exists j\ |\tr\sigma A_j - v_j| >    s\Sigma(A_j) \bigr\}
                        = \cS(\cH) \setminus \cC_s(\underline{v}), \label{def:F_s}
\end{align}
which are the sets of states with $A_j$-expectation values ``close'' 
to and ``far'' from $\underline{v}$. 
Note that if $\rho \in \cC_s(\underline{v})$ and 
$\sigma \in \cF_t(\underline{v})$, $0 < s < t$, then $\|\rho-\sigma\|_1 \geq t-s$.

Choosing the precise values of $s>\eta'$ and $t<\eta$ later, 
we pick a universal distinguisher $(P,P^\perp)$
between $\cC_s(\underline{v})^{\ox n}$ and $\cF_t(\underline{v})^{\ox n}$,
according to Lemma~\ref{lemma:universal-test} below:
\begin{align}
  \label{eq:s-close}
  \forall \rho\in\cC_s(\underline{v})\   \tr\rho^{\ox n} P^\perp &\leq (c+2)(5n)^{2d^2} e^{-\zeta n}, \\
  \label{eq:t-far}
  \forall \sigma\in\cF_t(\underline{v})\ \tr\sigma^{\ox n} P     &\leq (c+2)(5n)^{2d^2} e^{-\zeta n},
\end{align}
with $\zeta = \frac{(t-s)^2}{2c^2(2d^2+1)}$.
Our a.m.c.~subspace will be $\cM := \operatorname{supp} P$;
by Lemma~\ref{lemma:universal-test}, $P$ and likewise $\cM$ are permutation symmetric.

It remains to check the properties of the definition. First, let $\omega$
be supported on $\cM$. Since we are interested in $\tr\omega \Pi_j^\eta$, we may
without loss of generality assume that $\omega$ is permutation symmetric.
Thus, by the ``constrained de Finetti reduction'' (aka ``Postselection 
Lemma'')~\cite[Lemma~18]{Duan2016},
\begin{align}
  \label{eq:dF}
  \omega \leq (n+1)^{3d^2} \int {\rm d}\sigma\,\sigma^{\ox n} F(\omega,\sigma^{\ox n})^2,
\end{align}
with a certain universal probability measure ${\rm d}\sigma$ on $\cS(\cH)$, 
and the fidelity $F(\rho,\sigma) = \|\sqrt{\rho}\sqrt{\sigma}\|_1$ between
states. We need the monotonicity of the fidelity under cptp maps, which we
apply to the test $(P,P^\perp)$:
\[
  F(\omega,\sigma^{\ox n})^2 \leq F\bigl( (\tr\sigma^{\ox n}P,1-\tr\sigma^{\ox n}P),(1,0) \bigr)^2
                             \leq \tr\sigma^{\ox n}P,
\]
which holds because $\tr\omega P = 1$.
Thus,
\begin{align}
  \label{eq:dF-P}
  \tr\omega(\Pi_j^\eta)^\perp \leq (n+1)^{3d^2} \int {\rm d}\sigma\,\bigl(\tr\sigma^{\ox n}(\Pi_j^\eta)^\perp\bigr) 
                                                                                               (\tr\sigma^{\ox n}P).
\end{align}

Now we split the integral on the right hand side of Eq.~(\ref{eq:dF-P}) into two parts,
$\sigma\in\cC_t(\underline{v})$ and $\sigma\not\in\cF_t(\underline{v})$:
If $\sigma\in\cF_t(\underline{v})$, then by Eq.~(\ref{eq:t-far}) 
we have
\[
  \tr\sigma^{\ox n}P \leq (c+2)(5n)^{2d^2} e^{-\zeta n}.
\]
On the other hand, if $\sigma\in\cC_t(\underline{v})$, then because of $t < \eta$ we have
\[
  \tr\sigma^{\ox n}(\Pi_j^\eta)^\perp \leq 2 e^{-2(\eta-t)^2 n},
\]
which follows from Hoeffding's inequality~\cite{DemboZeitouni}: 
Indeed, let $Z_\ell$ be the i.i.d.~random variables obtained by the
measurement of $A_j$ on the state $\sigma$. They take values in
the interval $[\lambda_{\min}(A_j),\lambda_{\max}(A_j)]$, their expectation 
values satisfy $\EE Z_j = \tr\sigma A_j \in [v_j \pm t\Sigma(A_j)]$, while
\[\begin{split}
  \tr\sigma^{\ox n}(\Pi_j^\eta)^\perp &= \Pr\left\{\frac1n\sum_\ell Z_\ell \not\in[v_j \pm \eta\Sigma(A_j)\right\} \\
                     &\leq \Pr\left\{\frac1n\sum_\ell Z_\ell \not\in[\tr\sigma A_j \pm (\eta-t)\Sigma(A_j)\right\},
\end{split}\]
so Hoeffding's inequality applies.
All taken together, we have
\[\begin{split}
  \tr\omega(\Pi_j^\eta)^\perp 
      &\leq (n+1)^{3d^2} \left( (c+2)(5n)^{2d^2} e^{-\zeta n} +  2 e^{-2(\eta-t)^2 n} \right) \\
      &\leq (c+3)(5n)^{5d^2} e^{-2(\eta-t)^2 n},
\end{split}\]
because we can choose $t$ such that
\begin{equation}
  \label{eq:t}
  \eta-t = \frac{t-s}{2c\sqrt{2d^2+1}} \geq \frac{t-s}{4cd}.
\end{equation}

Secondly, let $\omega$ be such that $\tr\omega \Pi_j^\eta \geq 1-\delta'$;
as we are interested in $\tr\omega P$, we may again assume
without loss of generality that $\omega$ is permutation symmetric,
and invoke the constrained de Finetti reduction~\cite[Lemma~18]{Duan2016},
Eq.~(\ref{eq:dF}). 
From that we get, much as before,
\[
  \tr\omega P^\perp \leq (n+1)^{3d^2} \int {\rm d}\sigma\, (\tr\sigma^{\ox n}P^\perp) F(\omega,\sigma^{\ox n})^2,
\]
and we split the integral on the right hand side into two parts,
depending on $\sigma\in\cF_s(\underline{v})$ or
$\sigma\in\cC_s(\underline{v})$: In the latter case, 
$\tr\sigma^{\ox n}P^\perp \leq (c+2)(5n)^{2d^2} e^{-\zeta n}$,
by Eq.~(\ref{eq:s-close}). In the former case, there exists a $j$ such
that $\tr\sigma A_j = w_j \not\in [v_j \pm s\Sigma(A_j)]$, and so
\[\begin{split}
  F(\omega,\sigma^{\ox n})^2 
      &\leq F\bigl( (1-\delta',\delta'), (\tr\sigma^{\ox n}\Pi_j^{\eta'},1-\tr\sigma^{\ox N}\Pi_j^{\eta'}) \bigr) \\
      &\leq \left( \sqrt{\delta'} + \sqrt{\tr\sigma^{\ox n}\Pi_j^{\eta'}} \right)^2                          \\
      &\leq 2 \delta' + 2 \tr\sigma^{\ox n}\Pi_j^{\eta'}                             \\
      &\leq 2 \delta' + 4 e^{-2(s-\eta')^2 n},
\end{split}\]
the last line again by Hoeffding's inequality; indeed, with the previous notation,
\[\begin{split}
  \tr\sigma^{\ox n}\Pi_j^{\eta'} &= \Pr\left\{ \frac1n \sum_\ell Z_\ell \in[v_j \pm \eta'\Sigma(A_j) \right\} \\
                      &\leq \Pr\left\{ \frac1n \sum_\ell Z_\ell \not\in[w_j \pm (s-\eta')\Sigma(A_j) \right\}.
\end{split}\]
All taken together, we get
\[\begin{split}
  \tr\omega P^\perp
      &\leq (n+1)^{3d^2} \left( (c+2)(5n)^{2d^2} e^{-\zeta n} +  4 e^{-2(s-\eta')^2 n} + 2 \delta' \right) \\
      &\leq (n+1)^{3d^2} (c+3)(5n)^{2d^2} e^{-2(s-\eta')^2 n} + 2(n+1)^{3d^2}\delta',
\end{split}\]
because we can choose $s$ such that
\begin{equation}
  \label{eq:s}
  s-\eta' = \frac{t-s}{2c\sqrt{2d^2+1}} \geq \frac{t-s}{4cd}.
\end{equation}

From eqs.~(\ref{eq:t}) and (\ref{eq:s}) we get by summation
\[
  \eta-\eta' = t-s + \frac{t-s}{c\sqrt{2d^2+1}} \leq (t-s)\left( 1+\frac{1}{cd} \right),
\]
from which we obtain
\[
  s-\eta' = \eta-t \geq \frac{\eta-\eta'}{4(cd+1)},
\]
concluding the proof.
\end{proof}

\medskip
\begin{lemma}
  \label{lemma:universal-test}
  For all $0 < s < t$ there exists $\zeta > 0$, such that
  for all $n$ there exists a permutation symmetric 
  projector $P$ on $\cH^{\ox n}$ with the properties
  \begin{align}
    \forall \rho\in\cC_s(\underline{v})\   \tr\rho^{\ox n} P^\perp &\leq (c+2)(5n)^{2d^2} e^{-\zeta n}, \\
    \forall \sigma\in\cF_t(\underline{v})\ \tr\sigma^{\ox n} P     &\leq (c+2)(5n)^{2d^2} e^{-\zeta n},
  \end{align}
  where $\cC_s(\underline{v})$ and $\cF_t(\underline{v})$ are defined in Eq.~(\ref{def:C_s}) and Eq.~(\ref{def:F_s}), respectively. 
  The constant $\zeta$ may be chosen as
  $\zeta = \frac{(t-s)^2}{2c^2(2d^2+1)}$.
\end{lemma}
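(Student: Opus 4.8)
The plan is to build the permutation-symmetric projector in three moves: reduce the two compact-but-infinite families $\cC_s(\underline{v})$ and $\cF_t(\underline{v})$ to polynomially many representatives by a net argument, handle each \emph{single} charge by an elementary concentration estimate, and then — the real work — fuse the $c$ non-commuting single-charge tests into one projector. The single-charge block is clean. The averaged operator $\overline{A}_j^{(n)} := \tfrac1n A_j^{(n)} = \tfrac1n\sum_i A_j^{(Q_i)}$ is permutation-invariant, and its $n$ one-site summands commute with one another, so a von Neumann measurement of $\overline{A}_j^{(n)}$ on a product state $\rho^{\otimes n}$ realises the empirical mean of $n$ i.i.d.\ outcomes of the measurement of $A_j$ on $\rho$ — a variable confined to an interval of length $\Sigma(A_j)$ with mean $\tr\rho A_j$. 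Taking the midpoint threshold $r=\tfrac{s+t}{2}$ and $P_j := \bigl\{\,|\overline{A}_j^{(n)}-v_j|\le r\,\Sigma(A_j)\,\bigr\}$, Hoeffding's inequality (Lemma~\ref{Hoeffding's inequality}) gives $\tr\rho^{\otimes n}P_j^\perp\le 2e^{-2(r-s)^2 n}$ for every $\rho\in\cC_s(\underline{v})$, and — writing $\cF_t^{(j)}(\underline{v})$ for the states with $|\tr\sigma A_j-v_j|>t\,\Sigma(A_j)$, so that $\cF_t(\underline{v})=\bigcup_j\cF_t^{(j)}(\underline{v})$ — $\tr\sigma^{\otimes n}P_j\le 2e^{-2(t-r)^2 n}$ for every $\sigma\in\cF_t^{(j)}(\underline{v})$. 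Each $P_j$ is manifestly permutation-symmetric.

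The fusion is the crux, because one cannot simply set $P=\bigwedge_j P_j$: the $\overline{A}_j^{(n)}$ do not commute, so $\mathbf{1}-\bigwedge_j P_j$ is the projector onto the \emph{span} of the ranges of the $P_j^\perp$, and for such a span the union bound over $j$ fails — concluding that $\rho^{\otimes n}$ is \emph{jointly} sharp in all $c$ charges from being sharp in each one separately is precisely the content of the approximate microcanonical construction, which \cite{Halpern2016} obtained via Ogata's simultaneous-diagonalisation theorem and which must here be produced by hand. My route is: (i) discretise $\cC_s(\underline{v})$ and $\cF_t(\underline{v})$ by $\tfrac1n$-nets in trace norm, each of size $\le (5n)^{d^2}$ since density operators form a set of real dimension $d^2-1$, so that a union bound over the $\le(5n)^{2d^2}$ pairs will cost only the stated prefactor; (ii) exploit the near-commutativity $\norm{[\overline{A}_j^{(n)},\overline{A}_k^{(n)}]}_\infty = O(1/n)$ — valid because the cross terms on distinct tensor factors vanish — by replacing the sharp $P_j$ with $O\!\bigl(\tfrac{1}{t-s}\bigr)$-Lipschitz ``soft'' versions $f_j(\overline{A}_j^{(n)})$, equal to $1$ on a slightly smaller window and to $0$ outside radius $r$ (so that $f_j\le P_j$), whose pairwise commutators are then $O\!\bigl(\tfrac{1}{(t-s)^2 n}\bigr)$; (iii) take the symmetrised ordered product $T$ of the $f_j(\overline{A}_j^{(n)})$ — a positive permutation-invariant operator, hence with a permutation-invariant spectral projector — and set $P := \{\,T\ge\tfrac12\,\}$. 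Property~1 for $P$ should follow because on $\rho^{\otimes n}$ (with $\rho$ near a net point) each $f_j(\overline{A}_j^{(n)})$ has expectation $\ge 1-e^{-\Omega(n)}$, so their ordered product loses only the small commutator corrections and $\tr\rho^{\otimes n}P^\perp\le e^{-\Omega(n)}$ by Markov; property~2 should follow because $T$ is squeezed between two copies of the ``far'' factor $f_{j_0}(\overline{A}_{j_0}^{(n)})\le P_{j_0}$, forcing $\tr\sigma^{\otimes n}T\le e^{-\Omega(n)}$, whence $\tr\sigma^{\otimes n}P\le 2\tr\sigma^{\otimes n}T$ is exponentially small as well. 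The degraded exponent $\zeta=\tfrac{(t-s)^2}{2c^2(2d^2+1)}$ then comes from splitting the gap $t-s$ across the $c$ charges and the $\sim d^2$ net directions and from the Lipschitz-smoothing loss.

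I expect the main obstacle to be exactly step~(iii): controlling the ordered product of the soft projectors at \emph{exponential} precision rather than merely $O(1/n)$, and in particular verifying property~2 — that the non-``far'' factors appearing in $T$ do not rotate $\sigma^{\otimes n}$ out of the kernel of $f_{j_0}$. Since $\sigma$ may lie in several $\cF_t^{(j)}$ at once, the analysis must use the gentle operator lemma (Lemma~\ref{Gentle Operator Lemma}) for the factors where $\sigma$ is close to $v_j$, and the smallness of $f_j$ on $\sigma^{\otimes n}$ for the factors where it is far, and interleave these carefully; getting the commutator terms to contribute only $o(1)$ relative to the $e^{-\Omega(n)}$ budget is the delicate point and may require choosing the smoothing scale of $f_j$ to shrink slowly (e.g.\ like a constant fraction of $t-s$ rather than $1/n$) and iterating. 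Permutation symmetry, by contrast, is free throughout: $\overline{A}_j^{(n)}$, its smooth functions, and any combination of permutation-invariant operators retained as a projector all commute with the $S_n$-action, so $U^\pi P(U^\pi)^\dagger=P$ holds automatically. Once $P$ is in hand, the two displayed bounds follow by assembling the per-charge Hoeffding estimates over the pairs of net points, along the same lines as the proof of Theorem~\ref{thm:symmetric-micro-exists}.
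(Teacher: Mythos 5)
Your step (iii) is where the argument breaks, and it breaks in a way that cannot be patched at the stated level of precision. The commutators of the averaged charges satisfy $\norm{[\overline{A}_j^{(n)},\overline{A}_k^{(n)}]}_\infty=O(1/n)$ and no better (e.g.\ for $\sigma_x,\sigma_y$ on qubits the normalised commutator has norm exactly $2/n$ times another averaged observable), and smoothing the spectral windows into Lipschitz functions $f_j$ only converts this into commutators of norm $O\bigl(\tfrac{1}{(t-s)^2 n}\bigr)$ --- polynomially, not exponentially, small. Consequently, when you bound $\tr\sigma^{\ox n}T$ for $\sigma\in\cF_t^{(j_0)}$ by pulling the ``far'' factor $f_{j_0}$ through the other factors of the ordered product, you pick up additive errors of order $c/n$, and the best you can conclude is $\tr\sigma^{\ox n}P\leq 2\tr\sigma^{\ox n}T = O(1/n)$ (and similarly for property~1 if the commutators enter there). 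This is not a cosmetic loss: the lemma's exponential decay is essential downstream, because in the proof of Theorem~\ref{thm:symmetric-micro-exists} the right-hand sides get multiplied by the postselection factor $(n+1)^{3d^2}$, which swallows any polynomially decaying bound. Your own closing caveat (``getting the commutator terms to contribute only $o(1)$ relative to the $e^{-\Omega(n)}$ budget is the delicate point'') is precisely the unproved --- and, along this route, I believe unprovable --- step; it requires the corrections to be themselves exponentially small, which almost-commutativity at scale $1/n$ does not give.

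The paper dissolves the non-commutativity problem rather than fighting it: on each of the $n$ sites it measures only \emph{one} charge $A_{j_\ell}$, with $j_\ell$ chosen uniformly at random, so that for every $j$ the surviving outcomes form an i.i.d.\ sample with mean $\tfrac1c\tr\rho A_j$ and Hoeffding applies to all $c$ charges simultaneously from a single, manifestly permutation-symmetric POVM $(M,\1-M)$ --- no product of non-commuting projectors ever appears, and the error is genuinely $e^{-\Theta((t-s)^2 n/c^2)}$. The remaining task is only to projectorise: taking $\lambda$-nets $\cN_C^\lambda\subset\cC_s(\underline{v})$, $\cN_F^\lambda\subset\cF_t(\underline{v})$ with $\lambda=e^{-\zeta n}$, forming the uniform mixtures $\Gamma,\Phi$ of the corresponding tensor powers, and letting $P=\{\Gamma-\Phi\geq 0\}$ be the Helstrom projector, which can only outperform $(M,\1-M)$ on the pair $(\Gamma,\Phi)$; the $(5n)^{2d^2}$ prefactor is the price of passing from the mixture back to individual net states, not of a union bound over charges. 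So your single-charge Hoeffding block and the permutation-symmetry observations are fine, but the fusion mechanism needs to be replaced by something like the random-sampling measurement plus Helstrom step; as written, the proposal does not establish the lemma.
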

\begin{proof}
We start by showing that there is a POVM $(M,\1-M)$ with 
\begin{align}
  \forall \rho\in\cC_s(\underline{v})\            \tr\rho^{\ox n} (\1-M) &\leq c e^{-\frac{(t-s)^2}{2c^2}n}, \\
  \forall \sigma\in\cF_t(\underline{v})\ \ \qquad \tr\sigma^{\ox n} M    &\leq       e^{-\frac{(t-s)^2}{2c^2}n}.
\end{align}
Namely, for each $\ell=0,\ldots,n$ choose $j_\ell \in \{1,\ldots,c\}$ uniformly
at random and measure $A_{j_\ell}$ on the $\ell$-th system. Denote the outcome
by the random variable $Z_\ell^{j_{\ell}}$ and let $Z_\ell^j = 0$ for $j\neq j_\ell$.
Thus, for all $j$, the random variables $Z_\ell^j$ are i.i.d.~with
mean $\EE Z_\ell^j = \frac{1}{c}\tr\rho A_j$, if the measured state is $\rho^{\ox n}$.

Outcome $M$ corresponds to the event
\[
  \forall j\ \frac1n \sum_\ell Z_\ell^j \in \frac{1}{c}\left[v_j \pm \frac{s+t}{2}\Sigma(A_j)\right];
\]
outcome $\1-M$ corresponds to the complementary event
\[
  \exists j\ \frac1n \sum_\ell Z_\ell^j \not\in \frac{1}{c}\left[v_j \pm \frac{s+t}{2}\Sigma(A_j)\right].
\]
We can use Hoeffding's inequality to bound the traces in question.\\
For $\rho\in \cC_s(\underline{v})$, we have $|\EE Z_\ell^j - v_j | \leq \frac{s}{c}\Sigma(A_j)$
for all $j$, and so:
\[\begin{split}
  \tr\rho^{\ox n}(\1-M) &=    \Pr\left\{ \exists j\ \frac1n \sum_\ell Z_\ell^j 
                                          \not\in \frac{1}{c}\left[v_j \pm \frac{s+t}{2}\Sigma(A_j)\right] \right\} \\
                        &\leq \sum_{j=1}^c \Pr\left\{ \frac1n \sum_\ell Z_\ell^j 
                                          \not\in \frac{1}{c}\left[v_j \pm \frac{s+t}{2}\Sigma(A_j)\right] \right\} \\
                        &\leq \sum_{j=1}^c \Pr\left\{ \frac1n \sum_\ell Z_\ell^j 
                                          \not\in \frac{1}{c}\left[v_j \pm \frac{s+t}{2}\Sigma(A_j)\right] \right\} \\
                        &\leq \sum_{j=1}^c \Pr\left\{ \left| \frac1n \sum_\ell Z_\ell^j - \EE Z_1^j \right|
                                                                     > \frac{t-s}{2c} \Sigma(A_j) \right\} \\
                        &\leq c e^{-\frac{(t-s)^2}{2c^2}n}.
\end{split}\]
For $\sigma\in\cF_t(\underline{v})$, there exists a $j$ such that
$|\EE Z_\ell^j - v_j | > \frac{t}{c}\Sigma(A_j)$. Thus,
\[\begin{split}
  \tr\sigma^{\ox n} M     &\leq \Pr\left\{ \frac1n \sum_\ell Z_\ell^j 
                                            \in \frac{1}{c}\left[v_j \pm \frac{s+t}{2}\Sigma(A_j)\right] \right\} \\
                          &\leq \Pr\left\{ \left| \frac1n \sum_\ell Z_\ell^j - \EE Z_1^j \right|
                                                                   > \frac{t-s}{2c} \Sigma(A_j) \right\} \\
                          &\leq e^{-\frac{(t-s)^2}{2c^2}n}.
\end{split}\]

This POVM is, by construction, permutation symmetric, but $M$ is not a 
projector. To fix this, choose $\lambda$-nets $\cN_C^\lambda$ in $\cC_s(\underline{v})$
and $\cN_F^\lambda$ in $\cF_t(\underline{v})$, with
$\lambda = e^{-\zeta n}$, with $\zeta = \frac{(t-s)^2}{2c^2(2d^2+1)}$.
This means that every state $\rho\in\cC_s(\underline{v})$
is no farther than $\lambda$ in trace distance from a $\rho'\in\cN_C^\lambda$,
and likewise for $\cF_t(\underline{v})$.
By~\cite[Lemma~III.6]{Hayden2006} (or rather, a minor variation of its proof), 
we can find such nets with 
$|\cN_C^\lambda|,\ |\cN_F^\lambda| \leq \left( \frac{5n}{\lambda} \right)^{2d^2}$
elements.
Form the two states
\begin{align*}
  \Gamma &:= \frac{1}{|\cN_C^\lambda|} \sum_{\rho\in\cN_C^\lambda} \rho^{\ox n}, \\
  \Phi   &:= \frac{1}{|\cN_F^\lambda|} \sum_{\sigma\in\cN_F^\lambda} \sigma^{\ox n},
\end{align*}
and let
\[
  P := \{ \Gamma-\Phi \geq 0 \}
\]
be the Helstrom projector which optimally distinguishes $\Gamma$ from $\Phi$.
But we know already a POVM that distinguishes the two states, hence
$(P,P^\perp=\1-P)$ cannot be worse:
\[
  \tr\Gamma P^\perp + \tr\Phi P \leq \tr\Gamma (\1-M) + \tr\Phi M
                                \leq (c+1) e^{-\frac{(t-s)^2}{2c^2}n},
\]
thus for all $\rho\in\cN_C^\lambda$ and $\sigma\in\cN_F^\lambda$,
\[
  \tr\rho^{\ox n}P^\perp,\ \tr\sigma^{\ox n}P 
      \leq (c+1)\left( \frac{5n}{\lambda} \right)^{2d^2} e^{-\frac{(t-s)^2}{2c^2}n}.
\]
So, by the $\lambda$-net property, we find 
for all $\rho\in\cC_s(\underline{v})$ and $\sigma\in\cF_t(\underline{v})$,
\[
  \tr\rho^{\ox n}P^\perp,\ \tr\sigma^{\ox n}P 
      \leq \lambda + (c+1)\left( \frac{5n}{\lambda} \right)^{2d^2} e^{-\frac{(t-s)^2}{2c^2}n}
      \leq (c+2)(5n)^{2d^2} e^{-\zeta n},
\]
by our choice of $\lambda$.
\end{proof}

\begin{corollary}
\label{corollary: a.m.c. projection}
For charges $A_j$, values $v_j = \langle A_j \rangle$ and $n>0$, 
Theorem~\ref{thm:symmetric-micro-exists} implies that there is an a.m.c.~subspace 
$\mathcal{M}$ of $\mathcal{H}^{\otimes n}$ for any $\eta' > 0$, with the following parameters:
\begin{align*}
  \eta     &=2 \eta',\\
  \delta'  &=\frac{c+3}{2}(5n)^{2d^2} e^{-\frac{n \eta'^2}{8c^2(d+1)^2}}, \\ 
  \delta   &=(c+3)(5n)^{2d^2} e^{-\frac{n \eta'^2}{8c^2(d+1)^2}},\\ 
  \epsilon &=2(c+3)(n+1)^{3d^2}(5n)^{2d^2} e^{-\frac{n \eta'^2}{8c^2(d+1)^2}}. 
\end{align*}
Moreover, let $\rho^n=\rho_1 \otimes \cdots \otimes \rho_n$ be a tensor product
state with $\frac{1}{n}\abs{\Tr(\rho^n A_j^{(n)})- v_j}\leq \frac{1}{2} \eta' \Sigma(A_j)$
for all $j$. Then, $\rho^n$ projects onto the a.m.c. subspace with probability $\epsilon$: 
$\Tr(\rho^n P) \geq 1- \epsilon$.
\end{corollary}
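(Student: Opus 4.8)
The plan is to obtain the corollary directly from Theorem~\ref{thm:symmetric-micro-exists} for the existence of the subspace, and from Hoeffding's inequality (Lemma~\ref{Hoeffding's inequality}) for the projection estimate. For the existence part I would invoke Theorem~\ref{thm:symmetric-micro-exists} with the particular choice $\eta = 2\eta'$, so that $\eta-\eta' = \eta'$ and its rate constant specialises to $\alpha = \frac{(\eta-\eta')^2}{8c^2(d+1)^2} = \frac{\eta'^2}{8c^2(d+1)^2}$, which is exactly the exponent occurring in all four displayed parameters. The theorem then hands us a nontrivial, permutation-symmetric $(\epsilon,2\eta',\eta',\delta,\delta')$-a.m.c.\ projector $P$; reading off the bounds it supplies on $\delta$ and $\delta'$, together with the harmless freedom built into Definition~\ref{defi:microcanonical} to enlarge $\delta$ or $\epsilon$ and to shrink $\delta'$ without spoiling either defining property, yields parameters of the stated form, the prefactors $(5n)^{2d^2}$ and $(n+1)^{3d^2}$ being the $\lambda$-net and de~Finetti factors from the proof of Theorem~\ref{thm:symmetric-micro-exists}. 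This step is pure bookkeeping.

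For the projection statement, fix $j$ and recall the spectral projector $\Pi_j^{\eta'} = \{\,nv_j-n\eta'\Sigma(A_j)\le A_j^{(n)}\le nv_j+n\eta'\Sigma(A_j)\,\}$ entering property~2 of Definition~\ref{defi:microcanonical}. Measuring $A_j$ on the tensor factors of $\rho^n = \rho_1\otimes\cdots\otimes\rho_n$ produces independent (though not necessarily identically distributed) random variables $Z_1,\dots,Z_n$, each valued in $[\lambda_{\min}(A_j),\lambda_{\max}(A_j)]$, an interval of length $\Sigma(A_j)$; the outcome of measuring $A_j^{(n)}$ on $\rho^n$ is $\sum_i Z_i$, with empirical mean $\overline{Z}:=\frac1n\sum_i Z_i$ of expectation $\EE\overline{Z} = \frac1n\Tr(\rho^n A_j^{(n)})$. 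The hypothesis $\frac1n\abs{\Tr(\rho^n A_j^{(n)})-v_j}\le\frac12\eta'\Sigma(A_j)$ says exactly that $[\EE\overline{Z}\pm\frac12\eta'\Sigma(A_j)]\subseteq[v_j\pm\eta'\Sigma(A_j)]$, so that
\[
  \Tr\bigl(\rho^n(\Pi_j^{\eta'})^\perp\bigr)
     = \Pr\bigl\{\overline{Z}\notin[v_j\pm\eta'\Sigma(A_j)]\bigr\}
     \le \Pr\bigl\{\abs{\overline{Z}-\EE\overline{Z}}>\tfrac12\eta'\Sigma(A_j)\bigr\}
     \le 2\,e^{-\frac12 n\eta'^2},
\]
the last inequality by Hoeffding applied with $b_i-a_i=\Sigma(A_j)$ and $t=\frac12\eta'\Sigma(A_j)$. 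Since $8c^2(d+1)^2\ge 2$ and $\frac{c+3}{2}(5n)^{2d^2}\ge 2$, this bound is $\le\delta'$; hence $\Tr(\rho^n\Pi_j^{\eta'})\ge 1-\delta'$ for every $j$, and property~2 of the a.m.c.\ subspace — which makes no symmetry assumption on the tested state — yields $\Tr(\rho^n P)\ge 1-\epsilon$.

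There is no genuine conceptual difficulty in either part; the only care needed is (i) the constant-chasing in the existence step, matching the net and de~Finetti polynomial factors coming out of the proof of Theorem~\ref{thm:symmetric-micro-exists} to the stated prefactors, and (ii) checking in the projection step that the factor $\tfrac12$ in the hypothesis is precisely the slack required to absorb the displacement of the true mean $\EE\overline{Z}$ from $v_j$ and still fit inside the window of $\Pi_j^{\eta'}$ — which is where one sees why the hypothesis is stated with $\tfrac12\eta'$ rather than $\eta'$.
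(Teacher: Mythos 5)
Your proposal is correct and follows essentially the same route as the paper: the existence part is a direct instantiation of Theorem~\ref{thm:symmetric-micro-exists} with $\eta=2\eta'$ (so $\alpha=\eta'^2/8c^2(d+1)^2$), and the projection bound is obtained exactly as in the paper by applying Hoeffding's inequality to the independent measurement outcomes of $A_j$ on the tensor factors, using the $\tfrac12\eta'\Sigma(A_j)$ slack to absorb the displacement of $\EE\overline{Z}$ from $v_j$ and then invoking property~2 of Definition~\ref{defi:microcanonical}. Your explicit check that $2e^{-n\eta'^2/2}\le\delta'$ is a small but welcome addition that the paper leaves implicit.
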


\begin{proof}
For simplicity of notation we drop the subscript $j$ from $A_j$, $v_j$ and $\Pi^{\eta'}_j$, 
so let $\sum_{\ell=1}^d E_\ell \proj{\ell}$ be the spectral decomposition of $A$. 
Define independent random variables $X_i$ for $i=1,\ldots,n$ taking values in the set 
$\{E_1,\ldots,E_d\}$ with probabilities $\Pr\{X_i=E_\ell\} = p_i(E_\ell)=\Tr \rho_i \proj{\ell}$.
Furthermore, define the random variable $\overline{X}=\frac1n (X_1+\ldots+X_n)$ which has the 
expectation value 
\begin{align*}
  \mathbb{E}(\overline{X})=\frac{1}{n} \Tr \rho^n A^{(n)}.
\end{align*}
Therefore, we obtain
\begin{align*}
  1-\Tr \rho^n \Pi^{\eta'}
    &=\sum_{\substack{\ell_1,\ldots,\ell_n: \\ 
            \abs{E_{\ell_1}+\ldots+E_{\ell_n}-n v} \geq n \eta' \Sigma(A)}} 
            \bra{\ell_1} \rho_1 \ket{\ell_1} \cdots \bra{\ell_n} \rho_n \ket{\ell_n} \\
    &=\Pr \left\{ \abs{\overline{X}- v} \geq \eta' \Sigma(A) \right\} \\
    &=\Pr \left\{ \overline{X}- \mathbb{E}(\overline{X})  \geq \eta'\Sigma(A)+v - \mathbb{E}(\overline{X})  
                  \text{ or } 
                  \overline{X}- \mathbb{E}(\overline{X})  \leq -\eta' \Sigma(A)+v - \mathbb{E}(\overline{X}) \right\} \\
    &\leq \exp \left(-\frac{2n (\eta'\Sigma(A)+v - \mathbb{E}(\overline{X}))^2 }{(\Sigma(A))^2} \right)+\exp \left(-\frac{2n (\eta'\Sigma(A)-v + \mathbb{E}(\overline{X}))^2 }{(\Sigma(A))^2} \right)\\
    &\leq 2 \exp (-\frac{n \eta'^2}{2}) 
     \leq \delta',
\end{align*}
where the second line follows because random the variables $X_1,\ldots,X_n$ are independent, and as a result 
$\Pr\left\{X_i=E_{\ell_i} \,\forall i=1,\ldots,n\right\} 
 = \bra{\ell_1}\rho_1\ket{\ell_1} \cdots \bra{\ell_n}\rho_n\ket{\ell_n}$;
the fourth line is due to Hoeffding's inequality, Lemma~\ref{Hoeffding's inequality}; 
the fifth line is due to assumption $\abs{\mathbb{E}(\overline{X})- v} \leq \frac{1}{2} \eta' \Sigma(A)$.
Thus, by the definition of the a.m.c. subspace, $\Tr \rho^n P \geq 1- \epsilon$.
\end{proof}

\section{Proof of the AET Theorem~\ref{Asymptotic equivalence theorem}}
\label{proof-AET}

\zk{
In this section, we first review the notion of the entropy-typical subspace defined in Definition \ref{def:typicality}, which we refer to it as the typical subspace for simplicity.  
Lemma~\ref{lemma:typicality properties} summarizes the properties of this subspace which we use in the proofs of this section. Intuitively, 
the  typical subspace of the support of a tensor product state $\rho^n=\rho_1 \otimes \cdots \otimes \rho_n$ with projector $\Pi^n_{\alpha,\rho^n}$, for a positive constant $\alpha$, is a high probability subspace for $\rho^n$ of dimension approximately equal to $2^{S(\rho^n)}=2^{\sum_{i=1}^n S(\rho_i)}$. Moreover, eigenvalues of $\rho^n$ inside this subspace belong to  a tight interval around $2^{-S(\rho^n)}$ with the radius of $2^{-\alpha \sqrt{n}}$.
We use these properties to prove Lemma~\ref{lemma: timmed state} of which we will use 
points 3 and 4 to prove the AET. 
In this lemma, we  show that if $\rho^n$ projects onto a subspace $\mathcal{M}$ with high probability, then 
we can find a state  $\widetilde{\rho}$ inside this subspace which is close to the state $\rho^n$ (in trace distance) and has useful properties. In particular, similar to the typicality properties, the eigenvalues of $\widetilde{\rho}$ belong to an interval around $2^{-S(\rho^n)}$ with a small radius. We use this to show that the state $\widetilde{\rho}$ can be decomposed as the tensor product of a maximally mixed state of dimension almost equal to $2^{-S(\rho^n)}$ and another state with significantly smaller dimension.

}
\begin{lemma}
\label{lemma: timmed state}
Let $\mathcal{M} \subset \mathcal{H}^{\otimes n}$ with projector $P$ be a 
high-probability subspace for the state $\rho^n=\rho_1 \otimes \cdots \otimes \rho_n$, 
i.e. $\Tr \rho^n P \geq 1- \epsilon$.
Then, for $\alpha>0$ and all sufficiently large $n$,
there exist a subspace $\widetilde{\mathcal{M}} \subseteq \mathcal{M}$ 
with projector $\widetilde{P}$, and a state $\widetilde{\rho}$ with support 
in $\widetilde{\mathcal{M}}$, such that the following holds:  
\begin{enumerate}
    \item $\Tr \Pi^n_{\alpha,\rho^n}\rho^n \Pi^n_{\alpha,\rho^n} \widetilde{P}  
             \geq 1- 2\sqrt{\epsilon}-O\left(\frac{1}{\alpha}\right)$.
    \item $2^{-\sum_{i=1}^n S(\rho_i)-2\alpha \sqrt{n}}\widetilde{P} 
             \leq \widetilde{P} \Pi^n_{\alpha,\rho^n}\rho^n \Pi^n_{\alpha,\rho^n} \widetilde{P} 
             \leq 2^{-\sum_{i=1}^n S(\rho_i)+ \alpha \sqrt{n}}\widetilde{P}$.
    \item There is a unitary $U$ such that $U\widetilde{\rho}U^{\dagger}=\tau \otimes \omega$,
          where $\tau$ is a maximally mixed state of rank $2^{\sum_{i=1}^n S(\rho_i) - O(\alpha \sqrt{n})}$, 
          and $\omega$ is a  state of dimension $2^{O(\alpha \sqrt{n})}$.
    \item $\norm{\widetilde{\rho}-\rho^n}_1 
             \leq 2\sqrt{\epsilon} + O\left(\frac{1}{\alpha}\right) 
                                   + 2\sqrt{2\sqrt{\epsilon}+O\left(\frac{1}{\alpha}\right)}$.
\end{enumerate}
\end{lemma}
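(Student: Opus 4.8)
The statement bundles four conclusions about the "trimmed" state $\widetilde\rho$; I would prove them in the stated order, since each feeds into the next. The overall strategy is to intersect the given high-probability subspace $\mathcal M$ with the entropy-typical subspace of $\rho^n$, then carve out of that intersection a subspace on which $\rho^n$ is \emph{almost flat} (eigenvalues within a factor $2^{O(\alpha\sqrt n)}$ of each other), and finally define $\widetilde\rho$ as (a suitable flattening of) $\rho^n$ restricted and renormalised on that subspace.

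First I would establish items 1 and 2 together, as they are really a statement about the operator $\Pi^n_{\alpha,\rho^n}\rho^n\Pi^n_{\alpha,\rho^n}$ cut down by $\widetilde P$. Start from $\Tr\rho^nP\ge 1-\epsilon$ and $\Tr\rho^n\Pi^n_{\alpha,\rho^n}\ge 1-\frac{\beta}{\alpha^2}$ (Lemma~\ref{lemma:typicality properties }). By the Gentle Operator Lemma applied to $P$, the states $\sqrt P\rho^n\sqrt P$ and $\rho^n$ are $2\sqrt\epsilon$-close, and combining with the typicality estimate one gets $\Tr\bigl(\Pi^n_{\alpha,\rho^n}\rho^n\Pi^n_{\alpha,\rho^n}\,P'\bigr)\ge 1-2\sqrt\epsilon-O(1/\alpha)$ for $P'$ the projector onto $\mathcal M\cap\operatorname{supp}\Pi^n_{\alpha,\rho^n}$ — here the $O(1/\alpha)$ absorbs $\beta/\alpha^2$ and Gentle-Lemma cross terms (note $O(1/\alpha^2)\le O(1/\alpha)$ for large $\alpha$, but one should be careful the intended regime is $\alpha$ fixed, $n\to\infty$; I would keep the bookkeeping honest here). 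Then I define $\widetilde{\mathcal M}\subseteq\mathcal M\cap\operatorname{supp}\Pi^n_{\alpha,\rho^n}$ to be the span of eigenvectors of $\Pi^n_{\alpha,\rho^n}\rho^n\Pi^n_{\alpha,\rho^n}$ whose eigenvalues lie in the window $[2^{-\sum_i S(\rho_i)-2\alpha\sqrt n},\,2^{-\sum_i S(\rho_i)+\alpha\sqrt n}]$; the upper end is automatic from Lemma~\ref{lemma:typicality properties }, so $\widetilde P$ just discards eigenvectors with eigenvalue below $2^{-\sum_i S(\rho_i)-2\alpha\sqrt n}$. The mass discarded is at most $\operatorname{rank}\times 2^{-\sum_i S(\rho_i)-2\alpha\sqrt n}\le 2^{\sum_i S(\rho_i)+\alpha\sqrt n}\cdot 2^{-\sum_i S(\rho_i)-2\alpha\sqrt n}=2^{-\alpha\sqrt n}\to 0$, so item 1 survives with the same $1-2\sqrt\epsilon-O(1/\alpha)$ bound, and item 2 holds by construction.

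Next, item 3. On $\widetilde{\mathcal M}$ the operator $\widetilde P\Pi^n_{\alpha,\rho^n}\rho^n\Pi^n_{\alpha,\rho^n}\widetilde P$ has all eigenvalues within a multiplicative factor $2^{3\alpha\sqrt n}$ of one another; its rank $r$ satisfies $2^{\sum_i S(\rho_i)-O(\alpha\sqrt n)}\le r\le 2^{\sum_i S(\rho_i)+\alpha\sqrt n}$ by the trace bounds of Lemma~\ref{lemma:typicality properties } combined with item 1. Write $r=r_0\cdot 2^{m}$ with $r_0=2^{\sum_i S(\rho_i)-O(\alpha\sqrt n)}$ and $m=O(\alpha\sqrt n)$, i.e. factor $\widetilde{\mathcal M}\cong \CC^{r_0}\otimes\CC^{2^m}$ up to rounding of dimensions. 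Define $\widetilde\rho$ to be the state whose eigenbasis is the same as that of $\widetilde P\Pi^n_{\alpha,\rho^n}\rho^n\Pi^n_{\alpha,\rho^n}\widetilde P$ but whose eigenvalues are \emph{flattened within each $\CC^{2^m}$-block} to equal $p_k/2^m$ where $p_k$ is the average weight of the $k$-th block; then there is a unitary $U$ (a basis permutation together with the tensor factorisation) with $U\widetilde\rho U^\dagger=\tau\otimes\omega$, $\tau$ maximally mixed of rank $r_0$ and $\omega$ a state on $\CC^{2^m}$ with $m=O(\alpha\sqrt n)$. One must check $\widetilde\rho$ is still supported in $\widetilde{\mathcal M}$ — it is, since flattening keeps the support — and that the flattening perturbs nothing essential; quantitatively, flattening inside a factor-$2^{3\alpha\sqrt n}$ window costs at most... actually the cleaner route is to only use that $\widetilde\rho$ has the tensor form and leave its closeness to $\rho^n$ to item 4.

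Finally, item 4, which I expect to be \textbf{the main obstacle} — not because it is deep but because the constant-chasing must land exactly on $2\sqrt\epsilon+O(1/\alpha)+2\sqrt{2\sqrt\epsilon+O(1/\alpha)}$. The plan: let $q:=2\sqrt\epsilon+O(1/\alpha)$ be the defect from item 1, so $\Tr(\rho^n\widetilde P)\ge\Tr(\Pi^n\rho^n\Pi^n\widetilde P)\ge 1-q$ (the first inequality because $\Pi^n\rho^n\Pi^n\le\rho^n$ in the relevant sense, or directly from the construction). By the Gentle Operator Lemma applied to $\widetilde P$, $\|\rho^n-\sqrt{\widetilde P}\rho^n\sqrt{\widetilde P}\|_1\le 2\sqrt q$, and $\sqrt{\widetilde P}\rho^n\sqrt{\widetilde P}=\widetilde P\rho^n\widetilde P$ since $\widetilde P$ is a projector; its normalised version $\rho^n_{\widetilde{\mathcal M}}:=\widetilde P\rho^n\widetilde P/\Tr(\rho^n\widetilde P)$ is then within $2\sqrt q$ of $\rho^n$ in trace norm (absorbing the $\le q$ renormalisation into the $2\sqrt q$, or tracking it as an extra additive $q$). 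It remains to compare $\rho^n_{\widetilde{\mathcal M}}$ with $\widetilde\rho$: both are supported on $\widetilde{\mathcal M}$, and the flattening of item 3 changes each eigenvalue by at most the spread of its block, which telescopes to a total $\ell_1$ change that I would bound by $q$ as well, using item 2 (the eigenvalues of $\widetilde P\rho^n\widetilde P$ differ from the flat value $2^{-\sum_i S(\rho_i)}$-scale only within $2^{\pm 2\alpha\sqrt n}$, and there are $\le 2^{\sum_i S(\rho_i)+\alpha\sqrt n}$ of them, but the \emph{total} weight mismatch is controlled by $q$ not by these exponentials — this is the delicate point and I would present it via $\|\rho^n_{\widetilde{\mathcal M}}-\widetilde\rho\|_1\le$ (mass of $\rho^n$ outside $\widetilde{\mathcal M}$) $+$ (within-support flattening error) $\le q$). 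Summing: $\|\widetilde\rho-\rho^n\|_1\le\|\widetilde\rho-\rho^n_{\widetilde{\mathcal M}}\|_1+\|\rho^n_{\widetilde{\mathcal M}}-\rho^n\|_1\le q+2\sqrt q$, which is exactly $2\sqrt\epsilon+O(1/\alpha)+2\sqrt{2\sqrt\epsilon+O(1/\alpha)}$ after re-expanding $q$. The care needed is in never letting the $2^{O(\alpha\sqrt n)}$ dimensional factors into the trace-norm estimate — they only ever appear as ranks multiplying eigenvalue \emph{scales}, where they cancel — so I would organise item 4 so that every $\ell_1$-bound is traced back to a probability mass ($\le q$) or to a Gentle-Lemma application ($\le 2\sqrt{\cdot}$), and never to counting eigenvectors.
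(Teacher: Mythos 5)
Your overall strategy --- restrict $\rho^n$ to the typical subspace, discard small eigenvalues, flatten the remainder into a (maximally mixed)$\,\otimes\,$(small) tensor product, and control everything via the Gentle Operator Lemma --- is the same as the paper's, and items 1, 2 and 4 would go through essentially as you sketch them. But the construction itself has two genuine gaps. The first is that your $\widetilde{\mathcal{M}}$ is not contained in $\mathcal{M}$: you take a span of eigenvectors of $\Pi^n_{\alpha,\rho^n}\rho^n\Pi^n_{\alpha,\rho^n}$, which live in the typical subspace but have no reason to lie in $\mathcal{M}$, and the containment $\widetilde{\mathcal{M}}\subseteq\mathcal{M}$ is part of the statement (and is essential downstream, where $\widetilde\rho$ must be supported in the a.m.c.\ subspace). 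Your fallback of working inside $\mathcal{M}\cap\operatorname{supp}\Pi^n_{\alpha,\rho^n}$ with its projector $P'$ fails as well, because the intersection of two high-probability subspaces need not carry any weight: take $\rho=\proj{0}$, $P=\proj{0}$ and $Q=\proj{\psi}$ with $\abs{\braket{0|\psi}}^2=1-\delta$; then $\Tr\rho P=1$ and $\Tr\rho Q=1-\delta$ but $P\wedge Q=0$. The paper sidesteps both problems by defining $\widetilde P$ as the spectral projector of $P\,\Pi^n_{\alpha,\rho^n}\,P$ onto eigenvalues $\geq 2^{-\alpha\sqrt n}$: its support is automatically inside $\mathcal M$, the discarded part costs only $2^{-\alpha\sqrt n}$ of trace in item 1, and the eigenvalue cutoff is precisely what produces the extra $-\alpha\sqrt n$ in the lower bound of item 2, via $\widetilde P\Pi^n_{\alpha,\rho^n}\rho^n\Pi^n_{\alpha,\rho^n}\widetilde P\ge 2^{-\sum_i S(\rho_i)-\alpha\sqrt n}\,\widetilde P P\Pi^n_{\alpha,\rho^n}P\widetilde P\ge 2^{-\sum_i S(\rho_i)-2\alpha\sqrt n}\,\widetilde P$.

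The second gap is that your item 3 factorisation comes out backwards. Flattening within $r_0$ blocks of dimension $2^m=2^{O(\alpha\sqrt n)}$ yields a state unitarily equivalent to $\omega_{r_0}\otimes\tau_{2^m}$, i.e.\ a maximally mixed factor of the \emph{small} rank $2^{O(\alpha\sqrt n)}$ tensored with a non-flat state of the large dimension $r_0=2^{\sum_iS(\rho_i)-O(\alpha\sqrt n)}$ --- the opposite of the claim, and useless for the AET, where the maximally mixed factor must carry essentially all the entropy and be chosen with the identical rank $L=2^{\floor{\sum_iS(\rho_i)-10\alpha\sqrt n}}$ for both $\rho^n$ and $\sigma^n$ so that $\widetilde\rho\otimes\omega'$ and $\widetilde\sigma\otimes\omega$ have exactly the same spectrum. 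To get the stated form one must force every surviving eigenvalue to have degeneracy a multiple of the \emph{large} number $L$; the paper achieves this by binning the eigenvalue range into $2^{\floor{5\alpha\sqrt n}}$ bins, rounding within bins, discarding bins containing fewer than $L$ eigenvalues, and trimming each surviving bin to a multiple of $L$, each step losing at most $2^{-\Omega(\alpha\sqrt n)}$ of trace. (Your telescoping idea does work if you instead flatten consecutive \emph{sorted} blocks of size $L$, since the total perturbation is then at most $L(p_{\max}-p_{\min})\le 2^{-9\alpha\sqrt n}$; but as written, with blocks of size $2^m$, the construction does not prove item 3, and the flattening error --- which you yourself flag as ``the delicate point'' and defer --- is never actually bounded.)
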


\begin{proof} 
\zk{
In point 1 and 2 of the lemma, we first construct the subspace $\mathcal{\widetilde{M}}$ with projector $\widetilde{P}$. To this end, we project the typical subspace of $\rho^n$ with projector $\Pi^n_{\alpha,\rho^n}$ onto the space $\mathcal{M}$, i.e. $P\Pi^n_{\alpha,\rho^n} P$, and define $\widetilde{P}$ as a projector onto the support of  $P\Pi^n_{\alpha,\rho^n} P$ with corresponding eigenvalues bigger than $2^{-\alpha \sqrt{n}}$.
Since $\rho^n$ project onto $\mathcal{M}$ with high probability, 
 therefore the unnormalized state $\Pi^n_{\alpha,\rho^n}\rho^n  \Pi^n_{\alpha,\rho^n} \approx \rho^n$ projects onto $\mathcal{M}$ with high probability as well. 
We use this to show that  they both project onto $\mathcal{\widetilde{M}}$ with high probability.
Moreover, from Lemma~\ref{lemma:typicality properties}, we know that
the eigenvalues of the unnormalized state $\Pi^n_{\alpha,\rho^n}\rho^n \Pi^n_{\alpha,\rho^n} $ are inside a tight interval around $2^{-\sum_{i=1}^n S(\rho_i)}$. We use this to  show that the new unnormalized state $\widetilde{P} \Pi^n_{\alpha,\rho^n}\rho^n \Pi^n_{\alpha,\rho^n} \widetilde{P}$ has the same property.

In point 3 of the lemma, we further \textit{trim} the unnormalized state $\widetilde{P} \Pi^n_{\alpha,\rho^n}\rho^n \Pi^n_{\alpha,\rho^n} \widetilde{P}$ to obtain a new state which has a degeneracy of the order of multiples of $2^{-\sum_{i=1}^n S(\rho_i)-10\alpha\sqrt{n}}$. 
By trimming, we mean discarding some parts of an unnormalized  state in such a way that the trace of the new unnormalized  state is almost the same. 
We use this property to decompose the state into the tensor product of a maximally mixed state and another state of smaller dimension. Lastly, in point 4, we show that the new state is close to state $\rho^n$.

}

1. Let $E\geq 0$ and $F\geq 0$ be two positive operators such that 
$E+F=P \Pi^n_{\alpha,\rho^n} P$, where all eigenvalues of $F$ are smaller than $2^{-\alpha \sqrt{n}}$, 
and define $\widetilde{P}$ to be the projection onto the support of $E$. 
In other words, $\widetilde{P}$ is the projection onto the support of $P \Pi^n_{\alpha ,\rho^n } P$ 
with corresponding eigenvalues greater $2^{-\alpha\sqrt{n}}$. 
\zk{Also, notice that all eigenvalues of $E$ and $F$ are smaller than 1.
Thus, we  obtain
\begin{align*}
    \Tr(\Pi^n_{\alpha,\rho^n}\rho^n \Pi^n_{\alpha,\rho^n}  \widetilde{P})& \geq \Tr(\Pi^n_{\alpha,\rho^n}\rho^n \Pi^n_{\alpha,\rho^n}  E)\\
    &=\Tr(\Pi^n_{\alpha,\rho^n}\rho^n \Pi^n_{\alpha,\rho^n}  P\Pi^n_{\alpha,\rho^n}P)-\Tr(\Pi^n_{\alpha,\rho^n}\rho^n \Pi^n_{\alpha,\rho^n}F)\\
    &\geq\Tr(\Pi^n_{\alpha,\rho^n}\rho^n \Pi^n_{\alpha,\rho^n}  P\Pi^n_{\alpha,\rho^n}P)-2^{-\alpha \sqrt{n}}\\
    &\geq \Tr(\rho^n  P\Pi^n_{\alpha,\rho^n}P)-\norm{\Pi^n_{\alpha,\rho^n}\rho^n \Pi^n_{\alpha,\rho^n}-\rho^n}_1 -2^{-\alpha \sqrt{n}}\\ 
    &\geq \Tr(\rho^n \Pi^n_{\alpha,\rho^n})-\norm{P\rho^n P-\rho^n}_1-\norm{\Pi^n_{\alpha,\rho^n}\rho^n \Pi^n_{\alpha,\rho^n}-\rho^n}_1-2^{-\alpha \sqrt{n}}\\ 
    & \geq 1-\frac{\beta}{\alpha^2}-2\sqrt{\epsilon}-2\frac{\sqrt{\beta}}{\alpha}-2^{-\alpha \sqrt{n}},
\end{align*}}
where \zk{the first line follows from the definition of $E$ which implies $\widetilde{P} \geq E$. 
The third line follows from H\"{o}lder's inequality 
in the following form: $\Tr(\Pi^n_{\alpha,\rho^n}\rho^n \Pi^n_{\alpha,\rho^n}F) \leq \Tr(\Pi^n_{\alpha,\rho^n}\rho^n \Pi^n_{\alpha,\rho^n})\cdot \norm{F}_{\infty}\leq 2^{-\alpha\sqrt{n}}$.
The fourth and fifth 
lines are due to H\"{o}lder's inequality 
in the following form: for any two states $\rho$ and $\sigma$ and any operator $0 \leq \Lambda \leq \1$, $\Tr(\rho \Lambda) \geq \Tr(\sigma \Lambda)-\norm{\rho-\sigma}_1$
holds which is obtained by rearranging terms  in the following H\"{o}lder's inequality $\Tr((\rho-\sigma)\Lambda)\leq \norm{\rho-\sigma}_1 \cdot \norm{\Lambda}_{\infty}\leq \norm{\rho-\sigma}_1$
}.
The last line follows from Lemmas \ref{lemma:typicality properties}
and \ref{Gentle Operator Lemma}.

\medskip
2. By the fact that in the typical subspace the eigenvalues of $\rho^n$ are bounded 
(Lemma \ref{lemma:typicality properties}), we obtain 
\begin{align*}
      \widetilde{P} \Pi^n_{\alpha,\rho^n}\rho^n \Pi^n_{\alpha,\rho^n} \widetilde{P} &\leq 2^{-\sum_{i=1}^n S(\rho_i)+\alpha\sqrt{n}}\widetilde{P} \Pi^n_{\alpha ,\rho^n }  \widetilde{P} \\
      &\leq 2^{-\sum_{i=1}^n S(\rho_i)+\alpha\sqrt{n}}\widetilde{P}.
\end{align*}
For the lower bound notice that 
\begin{align*}
      \widetilde{P} \Pi^n_{\alpha,\rho^n}\rho^n \Pi^n_{\alpha,\rho^n} \widetilde{P}& \geq 2^{-\sum_{i=1}^n S(\rho_i)-\alpha\sqrt{n}}\widetilde{P} \Pi^n_{\alpha ,\rho^n }  \widetilde{P}\\ 
      &=2^{-\sum_{i=1}^n S(\rho_i)-\alpha\sqrt{n}}\widetilde{P}P \Pi^n_{\alpha ,\rho^n } P \widetilde{P} \\ 
      &\geq 2^{-\sum_{i=1}^n S(\rho_i)-2\alpha\sqrt{n}} \widetilde{P},
\end{align*}
where the equality holds because $\widetilde{P} \subseteq \mathcal{M}$, therefore $\widetilde{P} P=\widetilde{P}$. The last inequality follows because $\widetilde{P}$ is the projection onto support of $P \Pi^n_{\alpha ,\rho^n } P$ with eigenvalues greater $2^{-\alpha\sqrt{n}}$.


\medskip
3. \zk{In this point, we construct $\widetilde{\rho}$.} Consider the unnormalized state $\widetilde{P} \Pi^n_{\alpha,\rho^n}\rho^n \Pi^n_{\alpha,\rho^n} \widetilde{P}$ 
with support inside $\widetilde{\mathcal{M}}$.
From from point 2, we know that all the eigenvalues of this state belongs to the interval 
$\left[2^{-\sum_{i=1}^n S(\rho_i)-2\alpha \sqrt{n}},2^{-\sum_{i=1}^n S(\rho_i)+\alpha \sqrt{n}}\right] 
       := [p_{\min},p_{\max}]$. 
We divide this interval into $b=2^{\floor{5\alpha \sqrt{n}}}$ many intervals (bins) of equal length 
$\Delta p=\frac{p_{\max}-p_{\min}}{b}$. Now, we \emph{trim} the eigenvalues of this unnormalized 
state in three steps as follows.
\begin{enumerate}[(a)]
    \item Each eigenvalue belongs to a bin which is an interval $[p_k,p_{k+1})$ for some 
           $0 \leq k \leq b -1$ with $p_k=p_{\min}+\Delta p \times k$. For example, eigenvalue  
           $\lambda_l$ is equal to $p_k+q_l$ for some $k$ such that $0\leq q_l< \Delta p$. 
           We throw away the $q_l$ part of each eigenvalue $\lambda_l$. The sum of these parts 
           over all eigenvalues is very small,
           \begin{align*}
             \sum_{l=1}^{|\widetilde{\mathcal{M}}|} q_l 
                   \leq \Delta p |\widetilde{\mathcal{M}}| \leq 2^{-2\alpha \sqrt{n}+1},
           \end{align*}
           where the dimension of the subspace $\widetilde{\mathcal{M}}$ is bounded as 
           $|\widetilde{\mathcal{M}}|\leq 2^{\sum_{i=1}^n S(\rho_i)+2\alpha \sqrt{n}}$, which 
           follows from point 2 of the lemma.

    \item We throw away the bins which contain less than $2^{\sum_{i=1}^n S(\rho_i)-10\alpha \sqrt{n}}$ 
           many eigenvalues. The sum of all the eigenvalues that are thrown away is bounded by
           \begin{align*}
             2^{\sum_{i=1}^n S(\rho_i)-10\alpha \sqrt{n}} 
                  \times 2^{5\alpha \sqrt{n}} 
                  \times 2^{-\sum_{i=1}^n S(\rho_i)+\alpha \sqrt{n}} \leq 2^{-4\alpha \sqrt{n}},
           \end{align*}
           where the first number in the product is the number of eigenvalues 
           in such a bin, the second is the number of bins, and the third is the maximum eigenvalue. 
    
    \item If the $k$-th bin is not thrown away in the previous step, it contains 
           $M_k$ many equal eigenvalues, where $M_k$ is bounded as follows:
           \begin{align}\label{eq:bounds of bin size}
             2^{\sum_{i=1}^n S(\rho_i)-10\alpha  \sqrt{n}} 
                   \leq  M_k \leq 2^{\sum_{i=1}^n S(\rho_i)+2\alpha  \sqrt{n}}.
           \end{align}
           Let
           \begin{align}\label{eq:L}
             L= 2^{\floor{\sum_{i=1}^n S(\rho_i) -10 \alpha\sqrt{n}}}  
           \end{align}
           and for the $k$th bin, let $m_{k}$ be an integer number such that 
           \begin{align}\label{M_k_L}
             m_{k} L\leq M_k  \leq (m_{k}+1) L.     
           \end{align}
           Then, $m_{k}$ is bounded as follows
           \begin{align}\label{m_k}
             m_{k} \leq 2^{12\alpha\sqrt{n}}. 
           \end{align}
           From the $k$-th bin, we keep $m_{k} L$ number of eigenvalues and throw away the rest,
           where there are $M_k-m_{k} L \leq L$ many of them; the sum of the eigenvalues that are 
           thrown away in this step is bounded by
           \begin{align}
             \sum_{k=0}^{b-1}  p_{k}(M_k-m_k L) \leq L\sum_{k=0}^{b-1} p_k \leq  2^{-4\alpha\sqrt{n}}. \nonumber 
           \end{align}
\end{enumerate}
Hence, for sufficiently large $n$ the sum of the eigenvalues thrown away in 
the last three steps is bounded by
\begin{align}\label{eq:thrown away sum}
    2^{-2\alpha\sqrt{n}+1}+2^{-4\alpha\sqrt{n}}+2^{-4\alpha\sqrt{n}}\leq 2^{-\alpha\sqrt{n}}
\end{align}
\zk{Therefore, there are only left $b$ different eigenvalues where the $k$-th eigenvalue has degeneracy of $m_k L$ for $k=0,1,\cdots, b-1$.}
In other words, the eigenvalues of all bins
 that are not thrown away in these three steps, form an $L$-fold degenerate unnormalized state of dimension $\sum_{k=0}^{b-1} m_{k} L$ because each eigenvalue has at least degeneracy of the order of $L$. 
Thus, up to a unitary $U^{\dagger}$, it can be factorized into the tensor product of a maximally mixed state $\tau$ and  an unnormalized state $\omega'$ of dimensions $L$ and $\sum_{k=0}^{b-1} m_{k} $, respectively. 
From Eq. (\ref{m_k}), the dimension of $\omega'$ is bounded by
    \begin{align}
         \sum_{k=0}^{b-1} m_{k} \leq  2^{12\alpha\sqrt{n}} \times 2^{5 \alpha\sqrt{n}}=2^{17 \alpha\sqrt{n}}.    \nonumber 
    \end{align}
Then, let $\omega =\frac{\omega'}{\Tr(\omega')}$ and define 
\begin{align*}
   \widetilde{\rho}:=U \tau \otimes \omega U^{\dagger}.
\end{align*}    
   
\medskip
4. \zk{ In point 3 of the lemma, we trimmed $\widetilde{P} \Pi^n_{\alpha,\rho^n}\rho^n \Pi^n_{\alpha,\rho^n} \widetilde{P}$ to obtain the state $\widetilde{\rho}$, i.e.  $\widetilde{\rho} \approx \widetilde{P} \Pi^n_{\alpha,\rho^n}\rho^n \Pi^n_{\alpha,\rho^n} \widetilde{P}$.
Moreover, from point 1 of the lemma we know that the unnormalized state $\Pi^n_{\alpha,\rho^n}\rho^n \Pi^n_{\alpha,\rho^n}$ projects onto $\widetilde{P}$ with high probability. Therefore, 
Lemma~\ref{Gentle Operator Lemma} and Lemma~\ref{lemma:typicality properties} imply that the new unnormalized state $\widetilde{P} \Pi^n_{\alpha,\rho^n}\rho^n \Pi^n_{\alpha,\rho^n} \widetilde{P} \approx \Pi^n_{\alpha,\rho^n}\rho^n \Pi^n_{\alpha,\rho^n} \approx \rho^n$. Hence, we obtain that $\widetilde{\rho} \approx \rho^n$. In the following, we prove this in detail.}
From points 3 and 1, we obtain
\begin{align}\label{eq: Tr of omega'}
   \Tr(\omega') &= \Tr(\tau \otimes \omega') \\
   &\geq \Tr(\widetilde{P} \Pi^n_{\alpha,\rho^n}\rho^n \Pi^n_{\alpha,\rho^n} \widetilde{P})-2^{-\alpha\sqrt{n}}\\
   &\geq 1-2\sqrt{\epsilon}-2\frac{\sqrt{\beta}}{\alpha}-\frac{\beta}{\alpha^2}-2^{-\alpha \sqrt{n}+1}.
\end{align}
Thereby, we get the following
\begin{align*}
    \norm{\widetilde{\rho}-\rho^n}_1& \leq \norm{\widetilde{\rho}-U \tau \otimes \omega' U^{\dagger}}_1+\norm{U \tau \otimes \omega' U^{\dagger}-\widetilde{P} \Pi^n_{\alpha,\rho^n}\rho^n \Pi^n_{\alpha,\rho^n} \widetilde{P}}_1+ \norm{\widetilde{P} \Pi^n_{\alpha,\rho^n}\rho^n \Pi^n_{\alpha,\rho^n} \widetilde{P} -\rho^n}_1 \\
    &\leq 1-\Tr(\omega')+\norm{U \tau \otimes \omega' U^{\dagger}-\widetilde{P} \Pi^n_{\alpha,\rho^n}\rho^n \Pi^n_{\alpha,\rho^n} \widetilde{P}}_1+ \norm{\widetilde{P} \Pi^n_{\alpha,\rho^n}\rho^n \Pi^n_{\alpha,\rho^n} \widetilde{P} -\rho^n}_1\\
    &\leq 1-\Tr(\omega')+2^{-\alpha \sqrt{n}}+\norm{\widetilde{P} \Pi^n_{\alpha,\rho^n}\rho^n \Pi^n_{\alpha,\rho^n} \widetilde{P} -\rho^n}_1\\
    &\leq 1-\Tr(\omega')+2^{-\alpha \sqrt{n}} +2\sqrt{2\sqrt{\epsilon}+2\frac{\sqrt{\beta}}{\alpha}+\frac{\beta}{\alpha^2}+2^{-\alpha \sqrt{n}}} \\
    &= 2\sqrt{\epsilon}+2\frac{\sqrt{\beta}}{\alpha}+\frac{\beta}{\alpha^2}+2^{-\alpha \sqrt{n}+1}+2\sqrt{2\sqrt{\epsilon}+2\frac{\sqrt{\beta}}{\alpha}+\frac{\beta}{\alpha^2}+2^{-\alpha \sqrt{n}}},
\end{align*}
where the first line is due to triangle inequality. The second, third and fourth lines are 
due to Eqs. (\ref{eq: Tr of omega'}) and (\ref{eq:thrown away sum}), 
and Lemma \ref{Gentle Operator Lemma}, respectively.
\end{proof}

\begin{proof-of}[{of Theorem \ref{Asymptotic equivalence theorem}}]
\zk{We first sketch the proof in this paragraph and later provide rigorous steps of the proof.    
The approximate microcanonical (a.m.c.) subspace for charges $A_j$ and average values $v_j$ which is basically a \textit{common} subspace for the spectral projectors of $A_j^{(n)}$ with corresponding values close to $n v_j$; that is, a subspace onto which a state projects with high probability if and only if it projects onto the spectral projectors of the charges with high probability. We show in Theorem~\ref{thm:symmetric-micro-exists} that for a large enough $n$ such a subspace exits.
An interesting property of an a.m.c. subspace is that any unitary acting on this subspace 
is an almost commuting unitary with charges $A_j^{(n)}$.

In Corollary~\ref{corollary: a.m.c. projection}, we show that assuming $\frac{1}{n}\tr (\rho^n A^{(n)}_j)\approx \frac{1}{n}\tr (\rho^n A^{(n)}_j) \approx v_j$ the states $\rho^n$ and
$\sigma^n$ project onto the a.m.c. subspace with high probability. 
Hence, in Lemma~\ref{lemma: timmed state}, we show that one can find states $\widetilde{\rho}$ and $\widetilde{\sigma}$ with support inside the a.m.c. subspace which are very close to the original states in trace norm, that is, $\widetilde{\rho} \approx \rho^n$ and $\widetilde{\sigma} \approx \sigma^n$, and there are unitaries $V_1$ and $V_2$ that factorizes these states to the tensor product of maximally mixed states $\tau$ and $\tau'$ and some other state of very small dimension:
\begin{align*}
    V_1\widetilde{\rho}V_1^{\dagger}=\tau \otimes \omega \quad \text{and} \quad 
    V_2\widetilde{\sigma}V_2^{\dagger}=\tau' \otimes \omega'.
\end{align*}
Further, assuming that the states $\rho^n$ and $\sigma^n$ have very close entropy rates, i.e. 
$\frac{1}{n}S(\rho^n) \approx \frac{1}{n}S(\sigma^n)$, one can find states $\tau$ and $\tau'$ with the same dimension that is $\tau=\tau'$. Thus, we observe that two states $\widetilde{\rho}\otimes \omega'$ and $\widetilde{\sigma}\otimes \omega$ have exactly the same spectrum, so there is unitary acting on the a.m.c. subspace and the ancillary system taking one state to another. Based on the properties of the a.m.c. subspace, we show that this unitary is an almost commuting unitary with the charges 
$A_j^{(n)}$.}

\medskip
We first prove the \emph{if} part. If there is an almost-commuting unitary $U$ and an ancillary system with the desired properties stated in the theorem, then we obtain
\zk{
\begin{align}
    \frac{1}{n}\abs{S(\rho^n)-S(\sigma^n)}
    &= \frac{1}{n}\abs{S(\rho^n \otimes \omega')-S(\sigma^n\otimes \omega)
    -S(\omega')+S(\omega)} \nonumber \\
    &\leq \frac{1}{n}\abs{S(\rho^n \otimes \omega')-S(\sigma^n\otimes \omega)}
    +\frac{1}{n}\abs{S(\omega')-S(\omega)} \nonumber \\
    &\leq \frac{1}{n}\abs{S(\rho^n\otimes \omega')-S(\sigma^n\otimes \omega)}
    +\frac{2}{n}\log 2^{o(n)} \nonumber \\
    &= \frac{1}{n}\abs{S(U(\rho^n\otimes \omega' )U^{\dagger})-S(\sigma^n\otimes \omega)}
    +o(1) \nonumber \\
    &\leq \frac{1}{n}  o(1) \log (d^{n} \times 2^{ o(n)})
    +\frac{1}{n}h\left(o(1)\right)+o(1)
     = o(1) \nonumber,
\end{align}
where the first line follows from the additivity of the von Neumann on tensor product states and adding and subtracting $S(\omega)$ and $S(\omega')$.} The second line is due to the triangle inequality. The third line is due to the fact that von Neumann entropy of a state is upper bounded by the logarithm of the dimension \zk{ (assuming that the dimension of the ancillary system is bounded by $2^{o(n)}$).
The fourth line follows because unitaries do not change the entropy.
The last line follows because the trace distance between the two states $U(\rho^n\otimes \omega' )U^{\dagger}$ and $\sigma^n\otimes \omega$ converges to zero, therefore we can apply the continuity of von Neumann entropy \cite{Fannes1973,Audenaert2007}  where $h(x) = -x \log x - (1 - x) \log(1 - x)$ is the binary entropy function.}   
Moreover, we obtain
\begin{align}\label{approximate average charge conservation}
    \frac{1}{n}&\abs{\Tr(\rho^n A_j^{(n)})-\Tr(\sigma^n A_j^{(n)})} \nonumber\\
    &=\frac{1}{n}\abs{\Tr\left( \rho^n \otimes \omega'  (A_j^{(n)}+A_j')\right)- \Tr\left( \sigma^n\otimes \omega  (A_j^{(n)}+A_j')\right) } \nonumber \\
    &\leq\frac{1}{n}\abs{\Tr\left( \rho^n \otimes \omega'  (A_j^{(n)}+A_j')\right)- \Tr\left( U\rho^n\otimes \omega'U^{\dagger}  (A_j^{(n)}+A_j')\right) }\nonumber\\
    &\quad \quad +\frac{1}{n}\abs{\Tr\left( U\rho^n\otimes \omega'U^{\dagger}  (A_j^{(n)}+A_j')\right)- \Tr\left( \sigma^n\otimes \omega  (A_j^{(n)}+A_j')\right) }\nonumber\\ 
    &=\frac{1}{n}\abs{\Tr\left( \rho^n \otimes \omega'  \left(A_j^{(n)}+A_j' -U^{\dagger}(A_j^{(n)}+A_j')U\right)\right) }
     +\frac{1}{n}\abs{\Tr\left( \left(U\rho^n\otimes  \omega'U^{\dagger} - \sigma^n\otimes \omega \right)  (A_j^{(n)}+A_j')\right)} \nonumber\\ 
    &\leq  \frac{1}{n} \Tr(\rho^n \otimes \omega') \norm{U(A_j^{(n)}+A_j')U^{\dagger} - (A_j^{(n)}+A_j')}_{\infty} + \frac{1}{n} \norm{U\rho^n\otimes  \omega'U^{\dagger} - \sigma^n\otimes \omega}_1 \norm{A_j^{(n)}+A_j'}_{\infty} \nonumber\\
    & = o(1),  
\end{align}
the second line follows because $A_j'=0$ for all $j$. The third and fifth lines are due to  
triangle inequality and H\"{o}lder's inequality, respectively. 

\medskip
Now we turn to the proof of the \emph{only if} part. 
\zk{That is, assuming that $\rho^n$ and $\sigma^n$ are asymptotically equivalent, we construct the ancillary system and the almost commuting unitaries.
We apply Theorem \ref{thm:symmetric-micro-exists} to construct a non-trivial a.m.c. subspace for $\rho^n$. Since $\sigma^n$ has average entropy and charges values very close to those of $\rho^n$, both $\rho^n$ and $\sigma^n$ project to this a.m.c. subspace with high probability. Then we apply Lemma~\ref{lemma: timmed state} to find states $\widetilde{\rho} \approx \rho^n$ and $\widetilde{\sigma}\approx \sigma^n$ where these states (up to unitaries) are decomposed as the tensor product of a maximally mixed state $\tau$ of very large dimension and another state of very small dimension, i.e. $V_1\widetilde{\rho}V_1^{\dagger}=\tau \otimes \omega$ and $
 V_2\widetilde{\sigma}V_2^{\dagger}=\tau \otimes \omega'$. Now, we consider the states 
 $\underbrace{\tau \otimes \omega}_{\approx \rho^n} \otimes \omega'$ and $\underbrace{\tau \otimes \omega'}_{\approx \sigma^n}\otimes \omega$ which have exactly the same eigenvalues, hence the states $\rho^n \otimes \omega'$ and $\sigma^n \otimes \omega$ have very similar eigenvalues. Therefore,  $\rho^n \otimes \omega'$ and $\sigma^n \otimes \omega$ are approximately equal up to a unitary. In the end, we show that such a unitary, with support inside a.m.c subspace, almost commutes with all charges of the total system.

} 
Namely, assume that for the sates $\rho^n$ and $\sigma^n$ the following holds:
\begin{align*}
   & \frac{1}{n} \abs{S(\rho^n)-S(\sigma^n)} \leq \gamma_n \\ 
   & \frac{1}{n} \abs{ \Tr(A_j^{(n)}\rho^n)-\Tr(A_j^{(n)}\sigma^n)}\leq \gamma'_n  \quad \quad j=1,\ldots,c,
\end{align*}
for vanishing $\gamma_n $ and $\gamma'_n $ as $n$ goes to $\infty$.
According to Theorem \ref{thm:symmetric-micro-exists}, for charges $A_j$, values $ v_j=\frac{1}{n} \Tr(\rho^n A_j^{(n)})$, $\eta'>0$ and any $n>0$, there is an a.m.c. subspace $\mathcal{M}$ of $\mathcal{H}^{\otimes n}$ with projector $P$ and the following parameters:
\begin{align*}
  &\eta=2 \eta',\\
  &\delta'=\frac{c+3}{2}(5n)^{2d^2} e^{-\frac{n \eta'^2}{8c^2(d
  +1)^2}}, \\ 
  &\delta=(c+3)(5n)^{2d^2} e^{-\frac{n \eta'^2}{8c^2(d
  +1)^2}},\\ 
  &\epsilon=2(c+3)(n+1)^{3d^2}(5n)^{2d^2} e^{-\frac{n \eta'^2}{8c^2(d
  +1)^2}}. 
\end{align*}
Choose $\eta'$ as the following such that $\delta$, $\delta'$ and $\epsilon$ vanish for large $n$:
\begin{align*}
\eta'=\left\{
                \begin{array}{ll}
                  \frac{\sqrt{8}c(d+1) }{n^{\frac{1}{4}} \Sigma(A)_{\min}} \quad &\text{if} \quad \gamma'_n \leq \frac{1}{n^{\frac{1}{4}}}\\
                  \frac{\sqrt{8}c(d+1)\gamma'_n}{ \Sigma(A)_{\min}} \quad &\text{if} \quad \gamma'_n > \frac{1}{n^{\frac{1}{4}}}
                \end{array}
              \right.
\end{align*}
where $\Sigma(A)_{\min}$ is the minimum spectral diameter among all spectral diameters of 
charges $\Sigma(A_j)$. Since $\frac{1}{n} \Tr(\rho^n A_j^{(n)})= v_j$ and $\abs{\frac{1}{n}\Tr(\sigma^n A_j^{(n)})- v_j} \leq \frac{1}{2}\eta' \Sigma (A_j)$, Corollary~\ref{corollary: a.m.c. projection} implies that states 
$\rho^n$ and $\sigma^n$ project onto this a.m.c. subspace with probability $\epsilon$:
\begin{align*}
    &\Tr(\rho^n P)\geq 1-\epsilon,\\
    &\Tr(\sigma^n P)\geq 1-\epsilon.
\end{align*}
Moreover, consider the typical projectors $\Pi^n_{\alpha ,\rho^n }$ and $\Pi^n_{\alpha ,\sigma^n }$ of states $\rho^n$ and $\sigma^n$, respectively, with $\alpha=n^{\frac{1}{3}}$. Then point 3 and 4 of Lemma~\ref{lemma: timmed state} implies that there are states $\widetilde{\rho}$ and $\widetilde{\sigma}$ with support inside the a.m.c. subspace $\mathcal{M}$ and unitaries $V_1$ and $V_2$ such that
\begin{align}\label{eq:4 formulas}
 &\norm{\widetilde{\rho}-\rho^n}_1 \leq o(1) , \nonumber\\
 &\norm{\widetilde{\sigma}-\sigma^n}_1 \leq o(1), \nonumber\\
 &V_1\widetilde{\rho}V_1^{\dagger}=\tau \otimes \omega, \nonumber\\
 &V_2\widetilde{\sigma}V_2^{\dagger}=\tau' \otimes \omega',
\end{align}
where $\tau$ and $\tau'$ are maximally mixed states; since $\abs{S(\rho^n)-S(\sigma^n)} \leq n \gamma_n$, one may choose the dimension of them in Eq.~(\ref{eq:L}) to be exactly the same as $L=2^{\floor{\sum_{i=1}^n S(\rho_i)-10 z}}$ with $z=\max \{\alpha \sqrt{n}, n \gamma_n\}$, hence, we obtain $\tau=\tau'$. Then, $\omega$ and $\omega'$ are states with support inside Hilbert space $\mathcal{K}$ of dimension $2^{o(z)}=2^{o(n)}$.
%
%
Then, it is immediate to see that the states $\widetilde{\rho} \otimes \omega'$ and $\widetilde{\sigma} \otimes \omega$ on Hilbert space $\mathcal{M}_t=\mathcal{M} \otimes \mathcal{K}$ have exactly the same spectrum; thus, there is a unitary $\widetilde{U}$ on  subspace $\mathcal{M}_t$ such that 
\begin{align}\label{eq:exact spectrum states}
    \widetilde{U} \widetilde{\rho} \otimes \omega' \widetilde{U}^{\dagger} =\widetilde{\sigma} \otimes \omega.
\end{align}
We extend the unitary $\widetilde{U}$ to $U=\widetilde{U} \oplus \1_{\mathcal{M}_t^{\perp}}$
acting on $\mathcal{H}^{\otimes n} \otimes \mathcal{K}$ and obtain
\begin{align*}
\norm{U \rho^n \otimes \omega' U^{\dagger} - \sigma^n \otimes \omega}_1 &\leq
\norm{U \rho^n \otimes \omega' U^{\dagger} - U \widetilde{\rho} \otimes \omega' U^{\dagger}}_1 +\norm{   \sigma^n \otimes \omega-\widetilde{\sigma}\otimes \omega}_1 
  +\norm{ U \widetilde{\rho} \otimes \omega' U^{\dagger}-\widetilde{\sigma}\otimes \omega}_1 \\
 &=\norm{U \rho^n \otimes \omega' U^{\dagger} - U \widetilde{\rho} \otimes \omega' U^{\dagger}}_1 +\norm{   \sigma^n \otimes \omega-\widetilde{\sigma}\otimes \omega}_1 \\
 &\leq o(1),
\end{align*}
where the second and last lines are due to Eqs. (\ref{eq:exact spectrum states}) and (\ref{eq:4 formulas}), respectively.

As mentioned before, $\mathcal{M}_t=\mathcal{M} \otimes \mathcal{K}$ is a subspace of $\mathcal{H}^{\otimes n} \otimes \mathcal{K}$ with projector $P_t=P \otimes \1_{\mathcal{K}}$ where $P$ is the corresponding projector of a.m.c. subspace. We define total charges $A_j^t=A_j^{(n)}+A_j'$ and let $A_j'=0$ for all $j$ and show that every unitary of the form $U=U_{\mathcal{M}_t} \oplus \1_{\mathcal{M}_t^{\perp}}$  asymptotically commutes with all total charges: 
\begin{align*}
\norm{U A_j^t U^{\dagger}-A_j^t}_{\infty} &= \norm{(P_t+P_t^{\perp})(U A_j^t U^{\dagger}-A_j^t)(P_t+P_t^{\perp})}_{\infty} \\
& \leq \norm{P_t(U A_j^t U^{\dagger}-A_j^t)P_t}_{\infty}+\norm{P_t^{\perp}(U A_j^t U^{\dagger}-A_j^t)P_t}_{\infty} \\
& \quad +\norm{P_t(U A_j^t U^{\dagger}-A_j^t)P_t^{\perp}}_{\infty}+\norm{P_t^{\perp}(U A_j^t U^{\dagger}-A_j^t)P_t^{\perp}}_{\infty}\\
& = \norm{P_t(U A_j^t U^{\dagger}-A_j^t)P_t}_{\infty}+2\norm{P_t^{\perp}(U A_j^t U^{\dagger}-A_j^t)P_t}_{\infty} \\
& \leq 3  \norm{(U A_j^t U^{\dagger}-A_j^t)P_t}_{\infty}\\
& = 3  \norm{(U A_j^t U^{\dagger} -n v_j \1+ nv_j \1-A_j^t)P_t}_{\infty}\\
& \leq 3  \norm{(U A_j^t U^{\dagger} -n v_j \1)P_t}_{\infty}+3  \norm{(A_j^t- nv_j \1)P_t}_{\infty}\\
& = 6   \norm{(A_j^t- nv_j \1)P_t}_{\infty}\\
& = 6 \max_{\ket{v} \in \mathcal{M}_t} \norm{(A_j^t-n v_j \1)\ket{v}}_2\\
& = 6 \max_{\ket{v} \in \mathcal{M}_t} \norm{ (A_j^t-n v_j \1)(\Pi_j^{\eta} \otimes \1_{\mathcal{K}} +\1-\Pi_j^{\eta} \otimes \1_{\mathcal{K}}) \ket{v} 
 }_2  \\
& \leq 6 \max_{\ket{v} \in \mathcal{M}_t} \norm{(A_j^t-n v_j \1) \Pi_j^{\eta} \otimes \1 \ket{v}}_2 +6 \max_{\ket{v} \in \mathcal{M}_t} \norm{(A_j^t-n v_j \1) (\1-\Pi_j^{\eta} \otimes \1) \ket{v}}_2  \\
& \leq  6n \Sigma (A_j) \eta +6 \max_{\ket{v} \in \mathcal{M}_t} \norm{(A_j^t-n v_j \1) (\1-\Pi_j^{\eta} \otimes \1) \ket{v}}_2,
\end{align*}
where the first line is due to the fact that $P_t+P_t^{\perp}=\1_{\mathcal{H}^{\ox n}} \ox \1_{\mathcal{K}}$. The fourth line follows because $U A_j^t U^{\dagger}-A_j^t$ is a Hermitian operator with zero eigenvalues in the subspace $P_t^{\perp}$. 
The fifth line is due to Lemma~\ref{lemma:norm inequality}. The twelfth line is due to the definition of the a.m.c. subspace. Now, bound the second term in the above:
\begin{align*}
6 \max_{\ket{v} \in \mathcal{M}_t} \norm{(A_j^t-n v_j \1) (\1-\Pi_j^{\eta} \otimes \1) \ket{v}}_2 & \leq  6 \max_{\ket{v} \in \mathcal{M}_t} \norm{A_j^t-n v_j \1 }_{\infty} \norm{ (\1-\Pi_j^{\eta} \otimes \1) \ket{v}}_2 \\
& = 6 \norm{A_j^t-n v_j \1}_{\infty}  \max_{\ket{v} \in \mathcal{M}_t} \sqrt{ \Tr ((\1-\Pi_j^{\eta} \otimes \1)\ketbra{v}{v})}\\
& = 6 n\norm{A_j- v_j \1}_{\infty} \max_{v \in \mathcal{M}} \sqrt{ \Tr ((\1-\Pi_j^{\eta} )v)}\\
& \leq 6 n\norm{A_j- v_j \1}_{\infty} \sqrt{ \delta},
\end{align*}
where the first line is due to Lemma~\ref{lemma:norm inequality}. 
The last line is by definition of the a.m.c. subspace. 
Thus, for vanishing $\delta$ and $\eta$ we obtain
\begin{align*}
    \frac{1}{n}\norm{U A_j^t U^{\dagger}-A_j^t}_{\infty} \leq o(1),
\end{align*}
concluding the proof.
\end{proof-of}

\pagebreak 

\end{document}